\definecolor{darkblue}{rgb}{0.0, 0.0, 0.55}
\def\mb{{\mathcal B}}
\def\mc{{\mathcal C}}
\def\e{\mathbb{E}}
\def\im{\imath}
\def\sm{\mathsf{M}}
\def\sj{\mathsf{J}}
\def\enc{\mathsf{Enc}}
\def\dec{\mathsf{Dec}}
\newtheorem{thm}{Theorem}
\newtheorem{lem}{Lemma}
\newtheorem{prop}{Proposition}
\newtheorem{cor}{Corollary}
\newtheorem{definition}{Definition}
\theoremstyle{remark}	\newtheorem{remark}{Remark}
\newcommand\blfootnote[1]{%
  \begingroup
  \renewcommand\thefootnote{}\footnote{#1}%
  \addtocounter{footnote}{-1}%
  \endgroup
}
\title{\huge Information Theoretic Cutting of a Cake\blfootnote{This paper has been presented in part at Information Theory Workshop (ITW), 2012.}}
\newcommand{\gainwrt}[4]{\mathcal{G}_{#1}\left(#2, #3 \| #4 \right)}
\newcommand{\dset}{\mathcal{D}}
\newcommand{\vset}{\mathcal{V}}
\newcommand{\gsel}{G^\text{sel}}
\newcommand{\dsel}{\Delta^\text{sel}}
\newcommand{\va}{v_A}
\newcommand{\vb}{v_B}
\newcommand{\vaset}{\mathcal{V}_A}
\newcommand{\vbset}{\mathcal{V}_B}
\newcommand{\Sa}{S_A}
\newcommand{\Sb}{S_B}
\newcommand{\Sab}{S}
\newcommand{\Sae}{S_A^\epsilon}
\newcommand{\Sbe}{S_B^\epsilon}
\newcommand{\Se}{S^\epsilon}
\newcommand{\risky}{\mathsf{R}}
\newcommand{\nonrisky}{\mathsf{NR}}
\newcommand{\Bleft}{\mathsf{L}}
\newcommand{\Bright}{\mathsf{R}}
\newcommand{\selfish}{\mathsf{S}}
\newcommand{\nonselfish}{\mathsf{NS}}
\newcommand{\ga}[1]{\mathcal{G}_A\left( #1 \right )}
\newcommand{\gawrt}[2]{\mathcal{G}_A\left( #1 \Vert #2 \right )}
\newcommand{\gb}[1]{\mathcal{G}_B\left( #1 \right )}
\newcommand{\gbwrt}[2]{\mathcal{G}_B\left( #1 \Vert #2 \right )}
\newcommand{\gx}[1]{\mathcal{G}_X\left( #1 \right )}
\newcommand{\gxwrt}[2]{\mathcal{G}_X\left( #1 \right \Vert #2)}
\newcommand{\Gamax}{G_A^\text{max}}
\newcommand{\Gamin}{G_A^\text{min}}
\newcommand{\Gbmax}{G_B^\text{max}}
\newcommand{\Gbmin}{G_B^\text{min}}
\newcommand{\gamax}{g_A^\text{max}}
\newcommand{\gamin}{g_A^\text{min}}
\newcommand{\gbmax}{g_B^\text{max}}
\newcommand{\gbmin}{g_B^\text{min}}
\newcommand{\is}{\theta}
\newcommand{\ta}{\tilde{a}}
\newcommand{\one}[1]{\mathds{1}_{[#1]}}
\newcommand{\pp}[1]{\left ( #1 \right )^+}
\newcommand{\np}[1]{\left ( #1 \right )^-}
\newcommand{\ev}[1]{\mathbb{E} \left [ #1 \right ] }
\newcommand{\evwrt}[2]{\mathbb{E}_{#1} \left [ #2 \right ] }
\newcommand{\cream}{%
\begin{tikzpicture}[x=8pt,y=8pt]
 \draw (0,0) circle (0.5);
\end{tikzpicture}
}
\newcommand{\chocolate}{%
\begin{tikzpicture}[x=8pt,y=8pt]
 \fill (0,0) circle (0.5);
\end{tikzpicture}
}
\newcommand{\creamy}{
\begin{tikzpicture}[x=8pt,y=8pt]
 \draw (0,0) circle (0.5);
 \fill (0.1710,-0.4698) arc (-70:70:0.5);
\end{tikzpicture}
}
\newcommand{\chocolaty}{
\begin{tikzpicture}[x=8pt,y=8pt]
 \draw (0,0) circle (0.5);
 \fill (0.1710,0.4698) arc (70:290:0.5);
\end{tikzpicture}
}
\newcommand{\half}{%
\begin{tikzpicture}[x=8pt,y=8pt]
 \draw (0,0) circle (0.5);
 \fill (0,0.5) arc (90:270:0.5);
 \draw (-0.7,0) -- (0.7,0);
\end{tikzpicture}
}
\newcommand{\full}{%
\begin{tikzpicture}[x=8pt,y=8pt]
 \draw (0,0) circle (0.5);
 \fill (-0.5,0) arc (180:360:0.5);
 \draw (-0.7,0) -- (0.7,0);
\end{tikzpicture}
}
\newcommand{\divcream}{
\begin{tikzpicture}[x=8pt,y=8pt]
 \draw (0,0) circle (0.5);
 \fill (-0.5,0) arc (180:360:0.5);
 \draw[color=blue] (-0.9,0) -- (0,0) -- (0.7,0.4677);
\end{tikzpicture}
}
\newcommand{\divchoc}{
\begin{tikzpicture}[x=8pt,y=8pt]
 \draw (0,0) circle (0.5);
 \fill (-0.5,0) arc (180:360:0.5);
 \draw[color=blue] (-0.9,0) -- (0,0) -- (0.7,-0.4677);
\end{tikzpicture}
}
\newcommand{\aw}[1]{\text{AW}\left (#1 \right)}
\newcommand{\awwrt}[2]{\text{AW}_{#1}\left (#2 \right)}
\newcommand{\daw}[3]{\Psi\left(#1, #2 \Vert #3 \right )}
\newcommand{\dawhead}{\Psi}
\newcommand{\dawmax}[2]{\Psi^\ast\left(#1\Vert #2\right)}
\newcommand{\dawmaxhead}{\Psi^\ast}
\newcommand{\dga}[2]{\Delta_{#1}\left ( #2 \right )}
\newcommand{\dgamax}[2]{\Delta_{#1}^\ast \left ( #2 \right )}
\newcommand{\dgamaxhead}{\Delta^\ast}
\newcommand{\dgahead}{\Delta}
\newcommand{\bmin}{b_\text{min}}
\newcommand{\bmax}{b_\text{max}}
\newcommand{\tD}{\tilde{\Delta}}
\newcommand{\hb}{\hat{b}}
\newcommand{\hs}{\hat{s}}
\newcommand{\hr}{\hat{\rho}}
\newcommand{\av}{\textbf{a}}
\newcommand{\bv}{\textbf{b}}
\newcommand{\vv}{\textbf{v}}
\newcommand{\Vv}{\textbf{V}}
\newcommand{\tav}{\tilde{\textbf{a}}}
\newcommand{\reals}{\mathbb{R}}
\DeclareMathOperator*{\supremum}{sup}
\date{}
\author{Payam~Delgosha$^*$~and~Amin~Gohari$^\dagger$
\\\small$*$ Department of Electrical Engineering and Computer Sciences, University of California, Berkeley
\\\small pdelgosha@eecs.berkeley.edu
\\\small$\dagger$ Department of Electrical Engineering, Sharif University of Technology
\\\small aminzadeh@sharif.edu
}
\begin{document}
 \maketitle
 \allowdisplaybreaks

\begin{abstract}
Cutting a cake is a metaphor for the problem of dividing a resource (cake) among
several agents. The problem becomes non-trivial when the agents have different
valuations for different parts of the cake (i.e. one agent may like chocolate
while the other may like cream). A fair division of the cake is one that takes
into account the individual valuations of agents and partitions the cake based
on some fairness criterion. Fair division may be accomplished in a distributed
or centralized way. Due to its natural and practical appeal, it has been a
subject of study in economics. To best of
our knowledge the role of \emph{partial} information in fair division has not been
studied so far from an information theoretic perspective. 
Given the diversity of problems in fair division, we consider certain specific (yet important) problems that capture different aspects of information exchange in a fair division setting. From the class of distributed algorithms, we consider the classical Divide and Choose (DC) problem between two parties. Here, we study  the effect of partial spying and voluntarily sharing of information in both one-shot and asymptotic scenarios. Furthermore, we consider  implicit information transmission through actions for the repeated version of the problem. While identifying  subgame perfect Nash equilibrium in repeated games with incomplete information on both sides is very difficult in general, for the special case of division of two items, we find a more stringent trembling hand perfect 
equilibrium.
Next, from the class of centralized algorithms, we consider the Adjusted Winner (AW) algorithm between two players Alice and Bob.  Brams and Taylor showed that if Alice can fully spy on Bob, she can trick the algorithm. We consider the same setup when partial spying is allowed, and study the growth rate of Alice's utility per spying bit. Via a transformation from AW to DC, it is shown that the problem reduces to the one studied earlier for DC. However, if Alice is forced to only spy certain simple structured functions of Bob's valuation, an upper bound on the growth rate of utility per spying bit is derived. This bound is shown to be tight in some cases. 
We also consider a centralized algorithm for
maximizing the overall welfare of the agents under the Nash collective utility
function (CUF). This corresponds to a clustering problem.
By 
observing 
a  link between this problem and the portfolio
selection problem in stock markets, we provide an upper bound on the increase of
the Nash CUF for a clustering refinement.
\end{abstract} 
\section{Introduction}

In many applications a number of parties are interested in possessing a
limited resource, e.g. a set of goods or metaphorically a cake.\footnote{For instance, in networking and wireless communications, optimal power allocation is a challenge \emph{e.g.} see \cite{BLSD09, GM00, YGC02}.} Each of the
parties has his own valuation of different parts of the cake, and each has full,
partial or no information about the valuation of the other parties. Finding a
way to divide a cake fairly has attracted the attention of economists and
mathematicians for a long time. Although
information theory is developed for studying communication
systems \cite{Bandwagon}, it gives us tools to quantify information in other
fields (such as fair-division) where \emph{partial} information is of
relevance. For instance consider a 
division game between Alice and Bob where Bob is unwilling
to let Alice spy on his information (as that information can be advantageous for Alice, since it would reduce her uncertainty about Bob's actions). One of the results of this paper is to show that there are cases where if Bob learns that the
spying rate of Alice exceeds a certain threshold, he will become
willing to voluntarily share even more information with Alice. Thus, identifying
when this happens can be of importance to Alice and Bob
in designing the rules of the game.

\subsection{Fair Division}
In this paper we assume that the reader is familiar with network information theory but not necessarily with fair division. Before trying to find a fair division, one must define
the term ``fairness''.
Several
criteria of fairness have been introduced to judge the goodness of a division
where none of which subsumes the others \cite{brams:cake:1996}. Here we will
give a brief introduction to
four of them. Assume that $k$ denotes the number of parties.
\begin{itemize}
  \item A division is said to be
\emph{proportional} if each party receives at least $1/k$ of the entire
cake
w.r.t.\ his own valuation.
  \item A division is said to be
\emph{equitable} if the piece of the cake each party obtains w.r.t.\
his own valuation is exactly equal to what the other parties receive (w.r.t.\
their own valuation).
  \item A division is said to be \emph{envy-free} if no party believes
that, w.r.t.\ his own valuation, the piece another party has received is
more valuable than his own.
\item A division is said to be \emph{efficient} or \emph{Pareto optimal} if
it is not possible to find another division that increases the gain of every
individual.
\end{itemize}

In the literature of fair division, there are two major assumptions regarding
the set of goods to be divided: the category of \emph{divisible} goods where
each good or item could be divided among parties, and the category of
\emph{indivisible} goods where
each item should
wholly be given to one party (e.g. a car or a laptop) \cite{brams2000win}. 
Analyzing division of divisible goods is generally easier than that of 
indivisible goods.
In the most generic scenario some of the items may be divisible, some
indivisible and some partially divisible. We take care of this generic scenario
by considering a set $\mathcal{D}$ of ``admissible" divisions of the resource.
Theoretically the set $\mathcal{D}$ is of size infinity if we have a divisible
item in the resource (since we can cut that item in any proportion). Practically
speaking, even divisible items can be cut up to a certain precision. Therefore
for simplicity we assume that the set $\mathcal{D}$ is finite (unless stated
otherwise).
Lastly, the preferences or
valuations of parties could be \emph{ordinal} or \emph{cardinal}. Here we assume
that valuations are cardinal,
i.e.\ could be modeled by non-negative real numbers.



Any algorithm providing a fair division may satisfy one or some of the fairness
conditions introduced above (see for instance \cite{Brams:Fishburn:2000,  Edelman:Fishburn:2001} for conflicts  in  fairness  criteria  and  tradeoffs).
 From another point of view, fair division may be
accomplished in a distributed or centralized way. In a
distributed algorithm the individuals should divide the cake amongst themselves,
while in a centralized one, an external referee divides the cake for them. In 
order to
address these two categories, we have chosen two prominent algorithms from the
field, \emph{Divide-and-Choose} (DC) from the category of distributed algorithms 
and
\emph{Adjusted Winner} (AW) from the category of centralized algorithms. In our discussion of centralized algorithms, we also consider 
the problem of optimizing \emph{social welfare},  another topic in fair division.


%
%
The ``\emph{I cut, you choose}" or \emph{divide-and-choose} (DC) procedure is a
well-known and
ancient
algorithm for dividing a resource among two parties \cite{brams:cake:1996}.
The story of dividing a land between Abram and Lot in the Hebrew Bible refers
to this method.
In this procedure, the first party (Alice) cuts the cake into two parts and the
second party (Bob) chooses one of the pieces, leaving the other piece for the 
first
party.
%
Note that Bob has an advantage over Alice for he can choose the best piece and
can possibly get even more than half of the total value he assigns to the cake.
In other words, when Alice does not know anything about Bob's valuation, she
should divide the cake into two parts which are equal with respect to her
valuation, so that despite Bob's choice, she gains at least half of the cake.
However Bob achieves more than half of the cake since he is free to choose. 
Since
 each party can obtain at least half the cake, this method is
proportional but not equitable \cite{jones2002equitable}.

The ``\emph{Adjusted Winner}" (AW) algorithm was originally proposed by Brams 
and
Taylor \cite{brams:cake:1996}. Since then, it has been applied to disputes  ranging  from  interpersonal  to  international  \cite{brams:cake:1996,  brams:cake:1999b}. Assume that two parties, say Alice and Bob, want
to divide a set of $m$ divisible goods. Alice's valuation vector is denoted by a
vector $\textbf{a}=(a_1,\dots,a_m)$ of $m$ non-negative real numbers that add up 
to one.
Similarly, Bob's valuation vector is denoted by $\textbf{b}=(b_1,\dots,b_m)$. We 
assume
that the value of a piece of cake for each player is the sum of the portion of
each item present in that piece times the value that player assigns to that
item. In the Adjusted Winner algorithm Alice and Bob announce their valuations
vectors to an external referee. The referee solves a set of equations to come up
with a division of the items which is proportional, equitable, envy-free and
efficient. For extensions of AW to three players or more, see \cite{Young:1994, brams:cake:1996, Moulin:2003}

Related to centralized algorithms in fair division is the problem of optimizing the \emph{social welfare} by proper division of resources across a society. 
In the literature of economics, 
a social welfare is a function
that collects the utilities or gains of each individual in the society and
returns a real value which reflects the overall welfare in the society.
Philosophical utilitarianism suggests a division strategy that maximizes the
overall happiness (or sum of the gains of the individuals).
 Thus, the rules of
 division here are not decided by selfish players but by an external judge (or
 by
 players who follow Rawls's veil of ignorance \cite{rawls1999theory}).
Another measure for social
welfare that cares not only about the overall happiness but also about its
uniform distribution over the individuals (an egalitarian philosophy) is the 
Nash
collective utility function (CUF). Nash CUF is defined to be product of the
gains of the individuals \cite{moulin2004fair}.

We
refer the reader to \cite{brams:cake:1996,brams2000win,robertson1998cake} for further reading on fair
division.

\subsection{Motivation: utility per information bit}

To the best of our knowledge, the problem of fair division has only been
analyzed 
when individuals do not know the valuation of others, or when they have
complete information about the valuations; it is not analyzed
in the case of \emph{partial information}. To motivate this study, let us begin with
the DC algorithm.
As we saw previously, the second party, Bob, has advantage in choosing the piece
he likes more.
 One way to make the algorithm more
fair is to provide Alice with partial information about Bob's valuation. For
instance if there is an item that Alice likes a lot but Bob is indifferent to it
- and Alice knows this - she can put all of it in the piece that she predicts
Bob will not choose. To quantize the
role of information in such scenarios, we need to find the gain of individuals
as a function of the rate of communication between them. This leads to
characterizing an achievable rate-gain region. The tradeoff between the
disadvantage of being the cutter and the advantage of having information is most
notably present in a seller-consumer scenario. A seller offers a good for a
price, and the consumer can choose to buy the item or keep his money. This
problem resembles the DC algorithm and our formulation (defined
later) is general enough to cover it. Setting a price by the seller resembles
cutting a cake, and the consumer's choice of buying the item is like picking one
of the two pieces ``item" or ``his money". As discussed above this transaction
scheme is naturally biased towards the chooser, i.e. the consumer. But the
seller has generally more information about the consumer's needs than the
consumer has about the true price of the item. The role of information in the
bargaining dynamic is also colorful: the consumer hides how much he really needs
the item while the seller hides how much the item is really worth.

\subsection{Our Contributions}
Given the diversity of problems in fair division, we consider certain specific (yet important) problems that capture different aspects of information exchange in a fair division setting. From the class of distributed algorithms, we consider the divide and choose problem between two parties. From the class of centralized algorithms, we consider the adjusted winner and the problem of optimizing the social welfare. 

\textbf{Divide and choose:}
In the literature of fair division, only one-shot of the divide and choose 
problem is considered. However we consider both the one-shot and repeated 
divide and choose problems. In the repeated version, the divide and choose procedure is repeated $n$ times on $n$ 
identical cakes. Then, the average gain of each party is considered during 
$n$ games. As was discussed before, parties are unaware of each other's 
valuations. However, they can achieve some amount of side information about 
the valuation of the other party which can help them achieve a better result. 
This flow of information and its impact on the result of the game is our main 
interest.

We take three rather different models for this problem in the subsequent 
sections. In our first model, we 
assume that Alice can {spy} on Bob prior to the division procedure. We call this the \emph{spying model}.  We consider both the one-shot and asymptotic cases. In the asymptotic case, we assume that parties' valuations are generated 
i.i.d.\ from a given probability distribution. The second model
is a game and information theoretic one, where Bob  chooses
 to communicate only when he expects the information to increase his gain in the divide and choose procedure. Thus, Bob can \emph{share} information of his choice (at limited rate) to Alice prior to the division procedure. We call this the  \emph{sharing model}. In the third model, which is more game theoretic, we assume that instead 
of sharing or spying prior to the divide 
and choose procedure, information is flowed implicitly during the game. More 
precisely, rather than gaining explicit information before cutting the cake, 
parties receive information about each other's valuations through their
actions. The main difference between this model and the previous ones is that 
the valuations are randomly generated from a given probability distribution 
once and for all, and are fixed during the $n$ stage game. 
Therefore, unlike the first model where different stages of the game were 
independent, parties can gain information by looking at the history of the 
game.
As a result, for instance Bob 
might have the tendency to hide his valuation by choosing the less valuable 
piece at one stage in order to deceive Alice and gain more during the next 
stages. We will explicitly identify a Trembling hand Nash equilibrium in the resulting repeated game.
Trembling hand Nash equilibrium is one of the most strict forms of equilibrium; for example, it implies the subgame perfect Nash equilibrium. 
While repeated games are widely studied in 
game theory (e.g. see \cite{aumann}), explicit identification of a subgame perfect Nash equilibrium in repeated games with incomplete information on both sides is known to be very difficult in general. We show that Bob  playing selfishly is an equilibrium and there is no incentive for him not to use his information initially, so that Alice does not learn about his valuation. Implicit communication through actions has also been studied in the information theory and control literature (e.g. see \cite{Cuff, Pulkit}). But these works do not consider game equilibriums.

\textbf{Adjusted Winner:} Brams and Taylor showed that in the case of having two goods,
i.e.\ $m=2$, when one of the parties, say Alice,
knows Bob's valuation while Bob is unaware
of this, Alice can announce an untrue valuation in order to trick the procedure
and gain more than what she otherwise would. We consider the same setup for the general $m$ goods, but with the further refinement of assuming that Alice is only partially spying on Bob. It is shown that this problem reduces to the corresponding spying problem in the divide and choose problem. Next, if Alice is only allowed to spy binary questions of the form ``Is Bob's valuation of a certain good less than a threshold?," we prove an upper bound on the growth rate of Alice's utility per the number of bits she has spied for the case of $m=2$. This bound is shown to be tight for a range of parameters.

\textbf{Maximizing social welfare:}
The last part of this paper provides a connection information theory and optimizing the
social welfare
under the Nash
CUF in fair division in large societies, which is a  clustering problem.
 This link provides an upper bound on the increase of
the Nash CUF for a clustering refinement.

\subsection{Notation and organization of the paper}
All the logarithms are in base two throughout this paper. Also $[a:b]$ for natural numbers $a$ and $b$ denotes the set $\{a,a+1,a+2,\cdots, b\}$. We will also need the following definition:
\begin{definition}\label{defip}
For a pmf $p_{X,Y}$, the \emph{information density} $\im_p(x;y)$ is defined by
\[ \im_p(x;y):=\log \frac{p(x,y)}{p(x)p(y)}.\]
\end{definition}
Markov chains are denoted by $X\rightarrow Y\rightarrow Z$, meaning that $p(x,y,z)=p(x)p(y|x)p(z|y)$.

This paper is organized as follows: in Section \ref{dc=section}, we study the divide and choose problem. In Section \ref{sec=AW}, we consider the adjusted winner algorithm and finally in Section \ref{sec=MNC}, we consider the Nash Collective Utility social welfare function. Proofs are given in the subsequent sections.

 

%
\section{Divide and Choose}\label{dc=section}
We assume that the value each player gives to different pieces 
of the cake is a
random variable on the set of possible values $\vset$ which is assumed to be
finite.  We have no specific assumption over $\vset$, but for having an
intuition, one can consider the following special case. Imagine the cake has
$m$ items: chocolate, cream, cherry,$\cdots$. In this particular example, a
valuation vector $\mathbf{v}$ is a
vector of size $m$, $(v_1, \dots, v_m)$, whose indices are nonnegative real
numbers adding up to one. The indices indicate
 interest in individual items. Thus if a certain piece of the
cake has portion $\alpha_i$ of item $i$, the value associated to this piece
w.r.t. $\mathbf{v}$ is $\sum_{i=1}^m \alpha_i v_i$. However it should be noted
that in the general case, we do not assume that valuations are vectors.
We also assume that $\dset$, the set of 
admissible divisions or admissible cuts, is finite. The
gain of each player is a deterministic function of the valuations and the
particular division $d\in\dset$. This is formalized in the following definition:
\begin{definition}
\label{def:gain}
Assume that $v_A$ and $v_B$ are the valuations of Alice and Bob
respectively and
Alice has divided the cake by $d\in\dset$. Then $\gainwrt{A}{d}{v_A}{v_B}$ and
$\gainwrt{B}{d}{v_A}{v_B}$ denote the gain of Alice and Bob respectively in
one game. We assume that $V_A$ and $V_B$ are generated from the joint
distribution $q(v_A,v_B)$, which is revealed to both Alice and Bob. The alphabet sets for these random variables are $\mathcal{V}_A=\mathcal{V}_B=\mathcal{V}$.
\end{definition}

\subsection{Spying: one-shot}
Assume that we are playing one instance of the divide and choose problem. Alice spies on Bob via a (possibly stochastic) spying function $\varphi:\mathcal{V}_B\mapsto \mathcal{M}$, and  a (possibly stochastic) cutting function $\psi:\mathcal{M}\times \mathcal{V}_B\mapsto\mathcal{D}$. 
We assume that $\mathcal{M}=[1:\sm]$ is the alphabet set of spying information $M$ and $\log\sm$ is the number of spying bits.

 \begin{thm}[One shot]\label{thm:one-shot}
Given any $q(u,d|v_A,v_B)=q(u|v_B)q(d|u,v_A)$, there is a spying strategy of a message in $[1:\sm]$  in which the  gain of Alice after spying is bounded from below by
\begin{equation}\label{eq:GP}
\e_{UDV_AV_B}\big[
\dfrac{1}{1-\sj^{-1}+\sj^{-1} 2^{\im_q(V_B;U)}}\dfrac{1}{1+ (\sj-1)\sm^{-1}2^{-\im_q(V_A;U)}
}\gainwrt{A}{D}{V_A}{V_B}\big]
\end{equation}
where $\sj$ is any natural number, expectation is with respect to $q(u|v_B)q(d|u,v_A)q(v_A, v_B)$, and the information densities are defined as in Definition \ref{defip}. The corresponding gain of Bob would be greater than or equal to
\begin{equation}\label{eq:GP}
\e_{UDV_AV_B}\big[
\dfrac{1}{1+\sj^{-1} 2^{\im_q(V_B;U)}}\dfrac{1}{1+ \sj\sm^{-1}2^{-\im_q(V_A;U)}
}\gainwrt{B}{D}{V_A}{V_B}\big],
\end{equation}
Moreover, loosening the bounds given in equation \eqref{eq:GP} gives the following lower bound on Alice's gain
\begin{align*} (1-3\times 2^{-\gamma})\e\big[
\gainwrt{A}{D}{V_A}{V_B}\big]-\bar g(1-3\times 2^{-\gamma})\mathsf{Prob}[\left\{\log\sj-\im(V_B;U)\ge\gamma,\ \mathsf{or}\  \im(V_A;U)-\log(\sj\sm^{-1})\ge\gamma\right\}],\end{align*}
where $\gamma$ is any positive number and $\bar g=\max_{d, v_A, v_B}\gainwrt{A}{d}{v_A}{v_B}$. A similar statement holds for Bob's gain.
\end{thm} 
The proof can be found in Section  \ref{sec:dc-achievability-one-shot}.
\begin{remark} The form of loosened bound makes it amenable to finite blocklength by setting $\gamma$ of order $\log(n)$. Let us bound the equation  \eqref{eq:GP} from below as follows:
\begin{equation}\label{eq:GP23}
\e_{UDV_AV_B}\big[
\dfrac{1}{1+\sj^{-1} 2^{\im_q(V_B;U)}}\dfrac{1}{1+ \sj\sm^{-1}2^{-\im_q(V_A;U)}
}\gainwrt{A}{D}{V_A}{V_B}\big]
\end{equation}
The first term in the denominator, $1+\sj^{-1} 2^{\im_q(V_B;U)}$ 
corresponds to a covering lemma in the asymptotic case, while the second term $1+ \sj\sm^{-1}2^{-\im_q(V_A;U)}$
corresponds to a packing lemma. Next, consider the special case of $J=1$ in the above bound. Then, the bound becomes
\begin{align*}
\e_{UDV_AV_B}\big[
\dfrac{1}{2^{\im_q(V_B;U)}}\gainwrt{A}{D}{V_A}{V_B}\big]&=\e_{UDV_AV_B}\big[
\dfrac{q(V_B)q(U)}{q(V_B;U)}\gainwrt{A}{D}{V_A}{V_B}\big]\\&=\e_{q(u)q(v_Av_B)q(d|uv_Av_B)}\big[
\gainwrt{A}{D}{V_A}{V_B}\big]
\end{align*}
is the payoff that could be achieved without any communication and $U$ being Alice's private randomness.
\end{remark}

\subsection{Spying: asymptotics}
Consider $n$
i.i.d.\ repetitions of the game and consider the average gain over these games.
Valuations of Alice and Bob over the $n$ games are denoted by two
sequences
of length $n$, 
$V_A^n$
for Alice and
$V_B^n$  
for Bob. These two
sequences are independently and identically generated from the joint
distribution $q(v_A,v_B)$. Let $R$ denotes the spying rate per game from Bob to Alice, i.e. it is equal to 
the total number of bits spied from Bob divided by $n$. The formal definition of an $n$-game code is in
order.

\begin{definition}
 \label{def:n-game-protocol}
An $n$-game $(n,R)$ code consists of
communication variables $C$ with encoder
$p(c|v_B^n)$ as well as a division strategy
$p(d^n|v_A^n, c)$ where
$
\frac{1}{n} H(C) \leq R.
$
The gains associated with this code are random variables
\begin{equation}
\begin{split}
\tilde{G}_A &= \frac{1}{n} \sum_{i=1}^n \gainwrt{A}{D_i}{V_{A,i}}{V_{B,i}}, \\
\tilde{G}_B &= \frac{1}{n} \sum_{i=1}^n \gainwrt{B}{D_i}{V_{A,i}}{V_{B,i}}.
\end{split}
\end{equation}
The division $D^n$ over $n$ games is then performed by Alice
based on the information she has: spying information $C$ and her
own preferences $V_A^n$. 
\end{definition}

\begin{definition}
 \label{def:achievability}
A $(R,G_A,G_B)$
rate gain tuple is said to be achievable if for any $\delta>0$ and $N$, there
exists a $(n, R)$ code with $n>N$ where the associated gains
$\tilde{G}_A$ and $\tilde{G}_B$ satisfy the following inequalities with
probability at least $1-\delta$:
\begin{equation}
 |\tilde{G}_A - G_A| < \delta \qquad |\tilde{G}_B - G_B| < \delta.
\end{equation}
\end{definition}

\begin{definition}
 \label{def:region-closure}
  The spying rate gain region is the closure of all
achievable tuples \linebreak $(R, G_A, G_B)$ and is
denoted by $\mathcal R$.
\end{definition}

There are two relaxations in our formulation in this part compared to the traditional fair division setup. 
Firstly the number of games $n$ is allowed to converge to infinity (it is not a one-shot result). Secondly we are not
following the maximin rule (i.e. maximizing the minimum gain) with probability
one. Instead we are relaxing this by requiring a guarantee with probability $1-\delta$ where
$\delta$ converges to zero only after $n$ converges to infinity. 

We have
\begin{thm}
\label{thm:DC-region}
The set $\mathcal R$ is the closure of all rate gain tuples
$(R, G_A, G_B)$ such that
\begin{equation}
 \begin{gathered}
  R > I(V_B;U|V_A), \\
  \ev{\gainwrt{A}{D}{V_A}{V_B}} = G_A, \\
  \ev{\gainwrt{B}{D}{V_A}{V_B}} = G_B, \\
 \end{gathered}
\end{equation}
for some $(U,D)$ satisfying the Markov chain relations
\begin{equation}
\begin{gathered}
 V_A \rightarrow V_B \rightarrow U, \\
 V_B \rightarrow V_A,U \rightarrow D,
\end{gathered}
\end{equation}
and random variable $D$ taking values on the set of all
divisions $\dset$.
\end{thm}

This  achievability part of this theorem follows the one-shot result given in Theorem~\ref{thm:one-shot}. Alternatively, it follows 
from a result on empirical coordination. This proof is given in Section ~\ref{sec:dc-achievability}.

\subsubsection{Spying rate and equitability}
In a practical scenario it is quite reasonable to assume that Alice uses the 
information selfishly in order to maximize her gain. Therefore we can define
the selfish gain $\gsel_A(R)$ to be the maximum gain Alice can obtain
limiting the communication rate to a value $R$, i.e.\
\begin{equation*}
 \gsel_A = \max_{(R,G_A, G_B)\in\mathcal R} G_A.
\end{equation*}
Bob always chooses
the piece he likes more with no concern about Alice's gain. Let $\gsel_B(R)$
denote the gain
associated with Bob in this case. Since we want to study the equitability of
the division (a fairness criterion discussed at the beginning of the
introduction),
we define the difference between these two
gains as
\begin{equation}
 \dsel(R) = \gsel_B(R) - \gsel_A(R).
\end{equation}
A spying rate $R$ results in an equitable division if $\dsel(R)=0$.

To illustrate several
aspects of the result, we consider a few examples. Imagine the cake has only two
items, say cream and
chocolate, and the set of possible valuations is $\vset=\{\cream, \chocolate\}$
where $\cream$ denotes complete interest in cream and no interest in
chocolate, i.e.\ $\cream=(1,0)$ while $\chocolate=(0,1)$ denotes complete
interest in chocolate. Assume the cake is half cream and half chocolate and the
set of possible divisions is $\dset=\{\half,\full\}$ where $\half$ means
dividing the cake so that in each piece we have half cream and half chocolate
and $\full$ means dividing the cake so that one piece is full cream and one is
full
chocolate. Assume that the joint distribution over valuations,
$p(v_A,v_B)$ is
as $p(\cream, \cream) = p(\chocolate, \chocolate) = 2/6$ and
$p(\cream,\chocolate) = p(\chocolate, \cream) = 1/6$.

For a fixed $R$, the set of achievable gain pairs $(G_A,G_B)$ form a region in
$\reals^2$ which is illustrated in Figure~\ref{fig:DC-region-3} for
different values of $R$. Figures \ref{fig:DC-GASel-3}, \ref{fig:DC-GBSel-3} and \ref{fig:DC-DeltaSel-3}
respectively show the values of $\gsel_A$, $\gsel_B$ and $\dsel$ as functions
of $R$ for our example. As we see, Alice's spying gain always increases with
the rate; which is expected, since she can use or ignore the extra spying
information. However, the interesting observation is that Bob's gain
increases up to some value for small rates and then decreases.  This means that
up to a
point, sharing information is advantageous for both sides. The other point
is that the value of $\dsel$ is zero only when $R=0$ and $R\geq H(V_B|V_A)$,
this
suggests that the division is equitable just in case of zero information or
full information. The reason for this is that in this example, the divisions are so that Bob's
gain is always greater or equal than that of Alice. In other words for any $v_A,
v_B \in
\vset$
and $d\in \dset$, $\gainwrt{B}{d}{v_A}{v_B} \geq \gainwrt{A}{d}{v_A}{v_B}$,
therefore we always have $G_B \geq G_A$.

\begin{figure}
 \centering
 \begin{subfigure}[t]{0.4\textwidth}
  \centering
  \includegraphics[width=\textwidth]{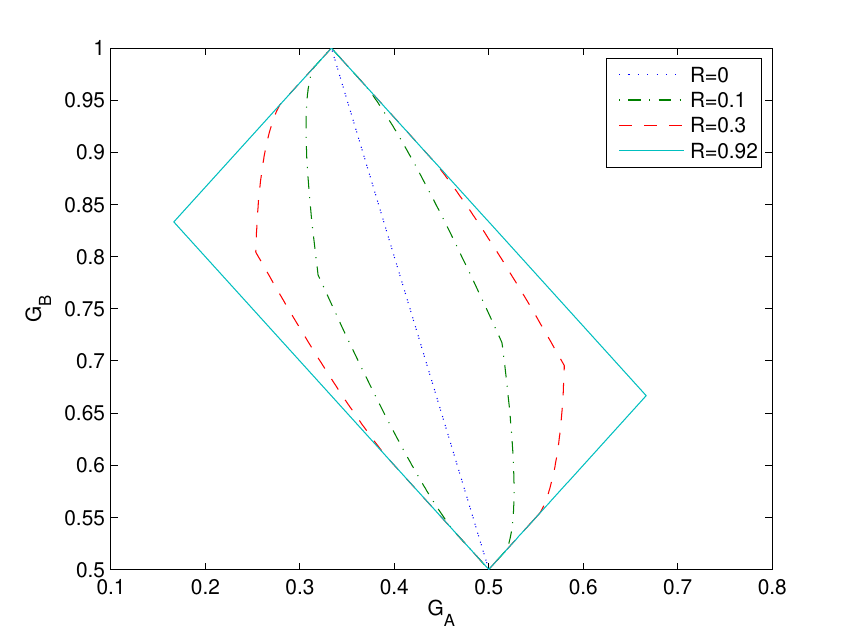}
  \caption{\label{fig:DC-region-3} The  region $(G_A, G_B)$ for a fixed $R$.}
 \end{subfigure}
 \quad
 \begin{subfigure}[t]{0.4\textwidth}
\centering
\includegraphics[width=\textwidth]{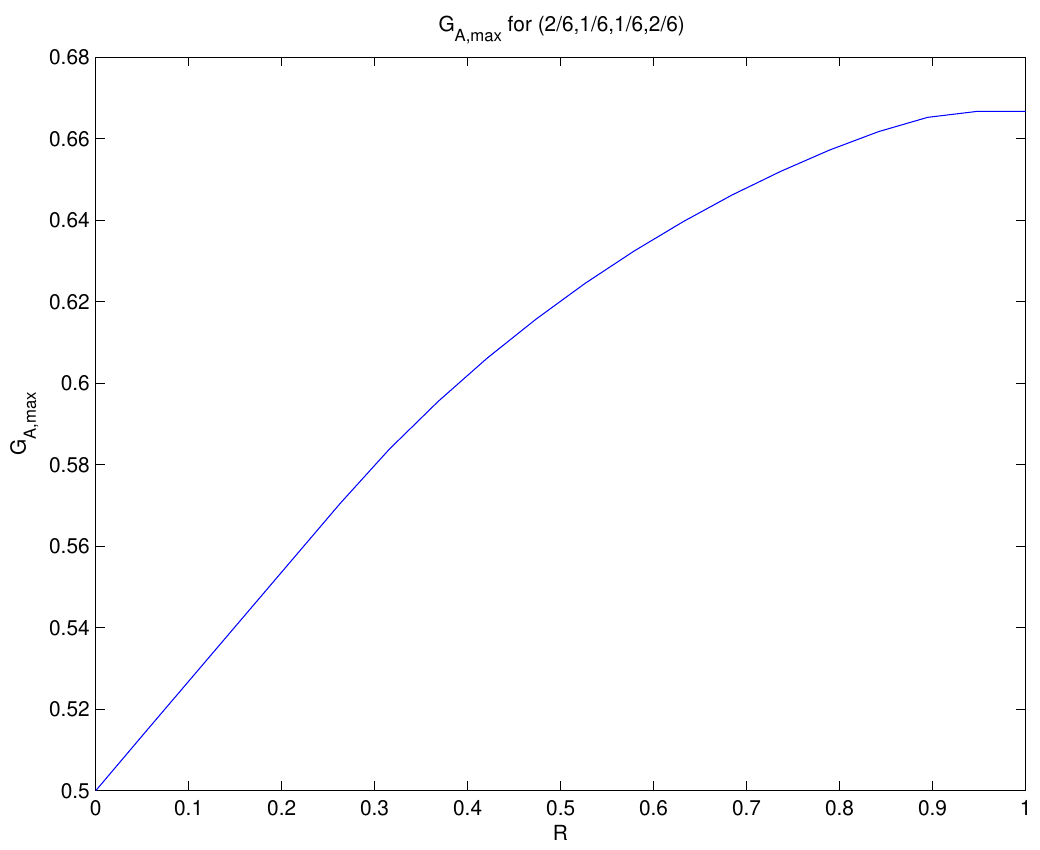}
\caption{The maximum achievable gain for Alice when she acts selfishly,
$\gsel_A$ as a function of the rate of communication $R$.
\label{fig:DC-GASel-3}}
\end{subfigure}

\begin{subfigure}[b]{0.4\textwidth}
 \centering
\includegraphics[width=\textwidth]{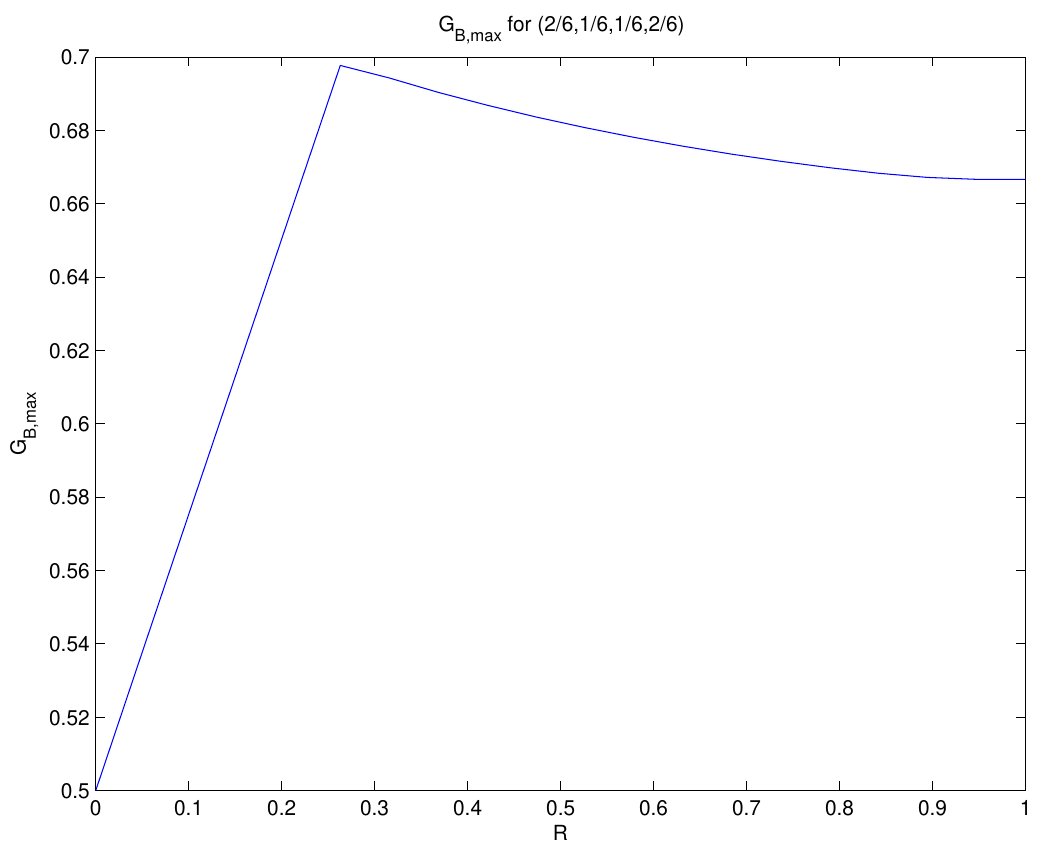}
\caption{The gain associated with Bob when Alice acts selfishly,
$\gsel_B$, as a function of the rate of communication $R$.
\label{fig:DC-GBSel-3}}
\end{subfigure}
\quad
\begin{subfigure}[b]{0.4\textwidth}
 \centering
\includegraphics[width=\textwidth]{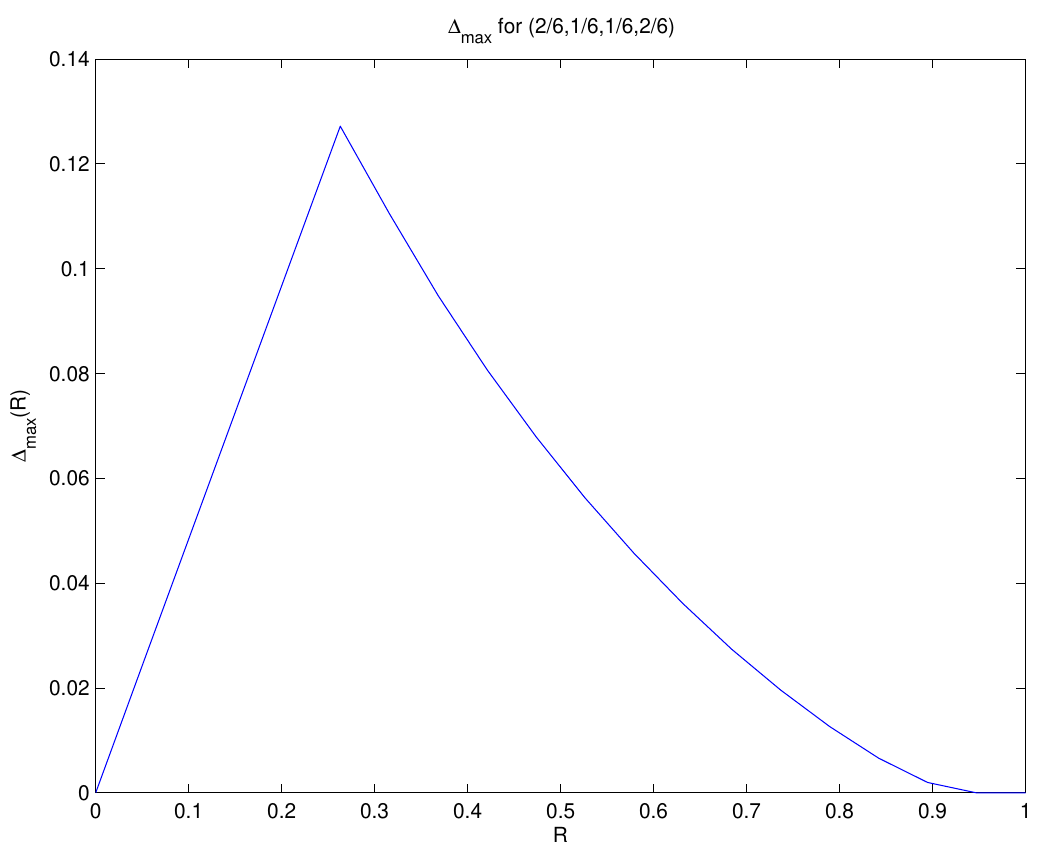}
\caption{Difference between two selfish gains, $\dsel(R) = \gsel_B(R) -
\gsel_A(R)$, as a function of the rate of communication $R$.
\label{fig:DC-DeltaSel-3}}
\end{subfigure}
\caption{Our first example of explicit information transmission in Divide and 
Choose}
\end{figure}


However, other behaviors can be observed when changing $\vset$, $\dset$
and the joint probability. For instance, by keeping $\vset$ and $\dset$
unchanged, but changing the joint probability distribution as
\begin{equation}
\label{eq:DC-sample-1-p}
 \begin{split}
  p(\cream,\cream) &= \frac{1}{14}, \quad
  p(\cream,\chocolate) = \frac{6}{14}, \\
  p(\chocolate,\cream) &= \frac{5}{14}, \quad
  p(\chocolate,\chocolate) = \frac{2}{14}.
 \end{split}
\end{equation}
We observe that Bob's gain, $\gsel_B(R)$, initially decreases and then increases
slightly, as depicted in Figure~\ref{fig:DC-GBSel-1}. In this example, unlike
the latter one, it is more probable that the two players
have different valuations, therefore in the case of zero information, it is
more beneficiary for Alice to divide the cake by $\full$ which results in a
gain of $1$ for Bob. The rate gain region for this example is illustrated in
Figure~\ref{fig:DC-2D-1}.

\begin{figure}
\centering
\begin{subfigure}[t]{0.4\textwidth}
 \centering
  \includegraphics[width=\textwidth]{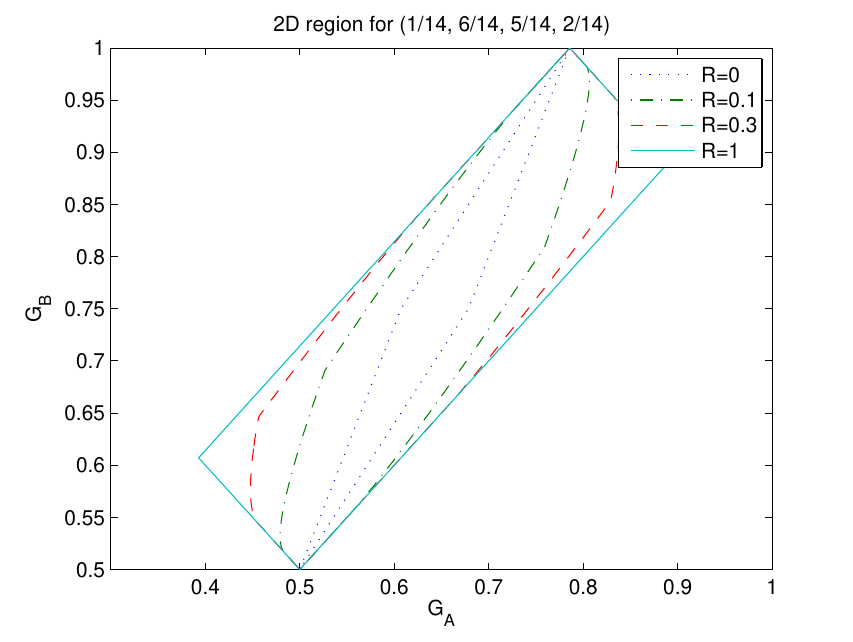}
\caption{The rate gain region \label{fig:DC-2D-1}}
\end{subfigure}\quad
\begin{subfigure}[t]{0.4\textwidth}
 \centering
  \includegraphics[width=\textwidth]{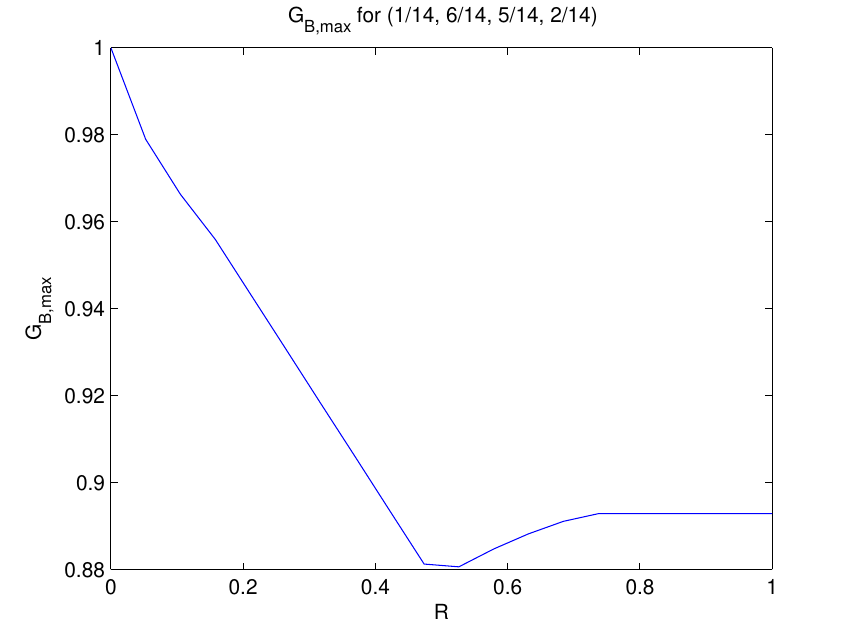}
\caption{$\gsel_B$ as a function of $R$ \label{fig:DC-GBSel-1}}
\end{subfigure}
\caption{$\gsel_B$ and the rate region for the probability distribution 
\eqref{eq:DC-sample-1-p}}
\end{figure}

As mentioned before, Bob's gain will be always greater than or equal to
Alice's for the choice of $\dset = \{ \full, \half\}$. Now, we change the
setup to,
\begin{equation}
 \label{eq:DC-setup-2}
 \begin{gathered}
  \vset = \{ \creamy, \chocolaty\}, \\
  \dset = \{ \divcream, \divchoc\}, \\
  p(\creamy, \creamy) = \frac{1}{6} \quad p(\creamy, \chocolaty) = \frac{2}{6},
\\
  p(\chocolaty, \creamy) = \frac{2}{6} \quad p(\chocolaty, \chocolaty) =
\frac{1}{6},
 \end{gathered}
\end{equation}
where $\chocolaty$ means $2/3$ interest in chocolate and $1/3$ interest in
cream, $\creamy$ means $1/3$ interest in chocolate and $2/3$ interest in cream,
$\divcream$ means dividing in a way so that in one piece we have all chocolate
and $3/16$ of the whole cream and $\divchoc$ denotes dividing in a way so that
in one piece we have all the cream and $3/16$ chocolate. In this case,
Figures~\ref{fig:DC-GASel-2}, \ref{fig:DC-GBSel-2} and \ref{fig:DC-DeltaSel-2}
show $\gsel_A$, $\gsel_B$ and $\dsel$ respectively as a functions of $R$. As we
see for a rate $R_{eq}$, $\dsel(R_{eq})=0$ which shows that with the
information rate of $R_{eq}$, the division is equitable, while for information
rate less than that amount,
Bob has advantage and with more information rate, Alice has advantage. In fact,
this
value of information makes an equilibrium between the natural advantage of Bob
over Alice and the information Alice gains about Bob's valuations.
Figure~\ref{fig:DC-2D-2} shows that the rate region for this example is a part
of a line in the plane for all values of $R$.
\begin{figure}
\centering
\begin{subfigure}[t]{0.4\textwidth}
 \centering
  \includegraphics[width=\textwidth]{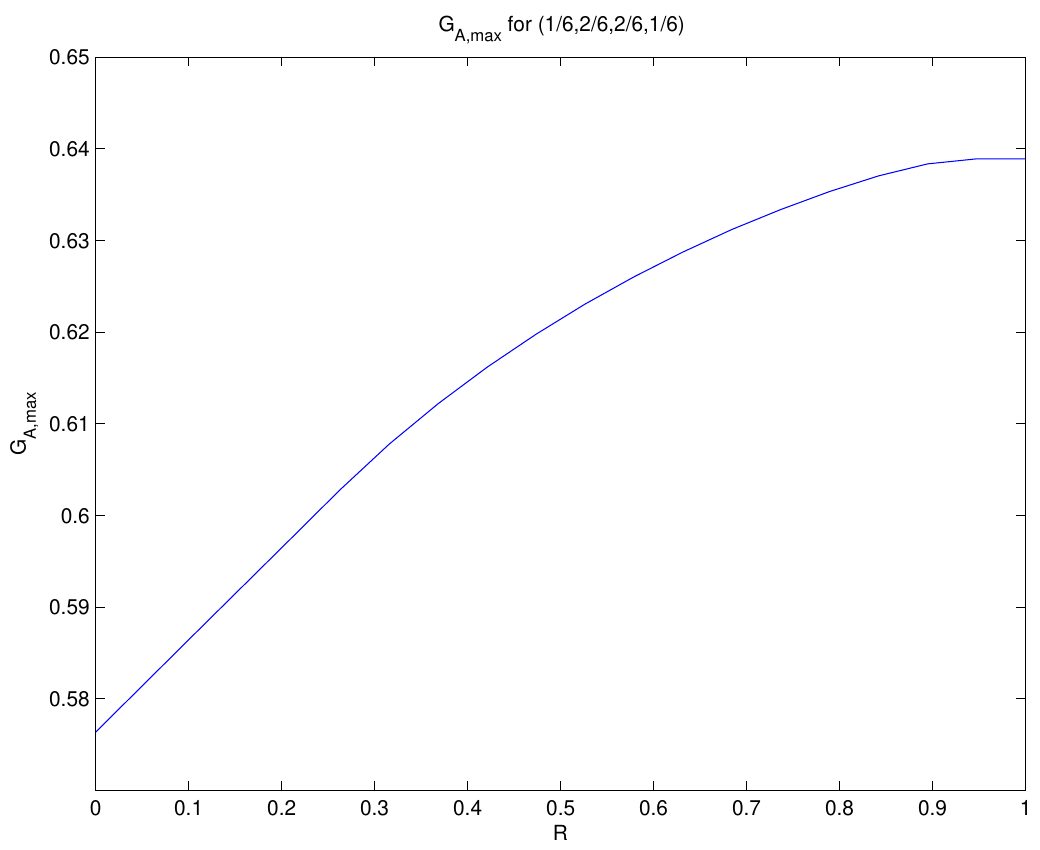}
  \caption{$\gsel_A(R)$ \label{fig:DC-GASel-2}}
\end{subfigure}
\quad
\begin{subfigure}[t]{0.4\textwidth}
 \centering
  \includegraphics[width=\textwidth]{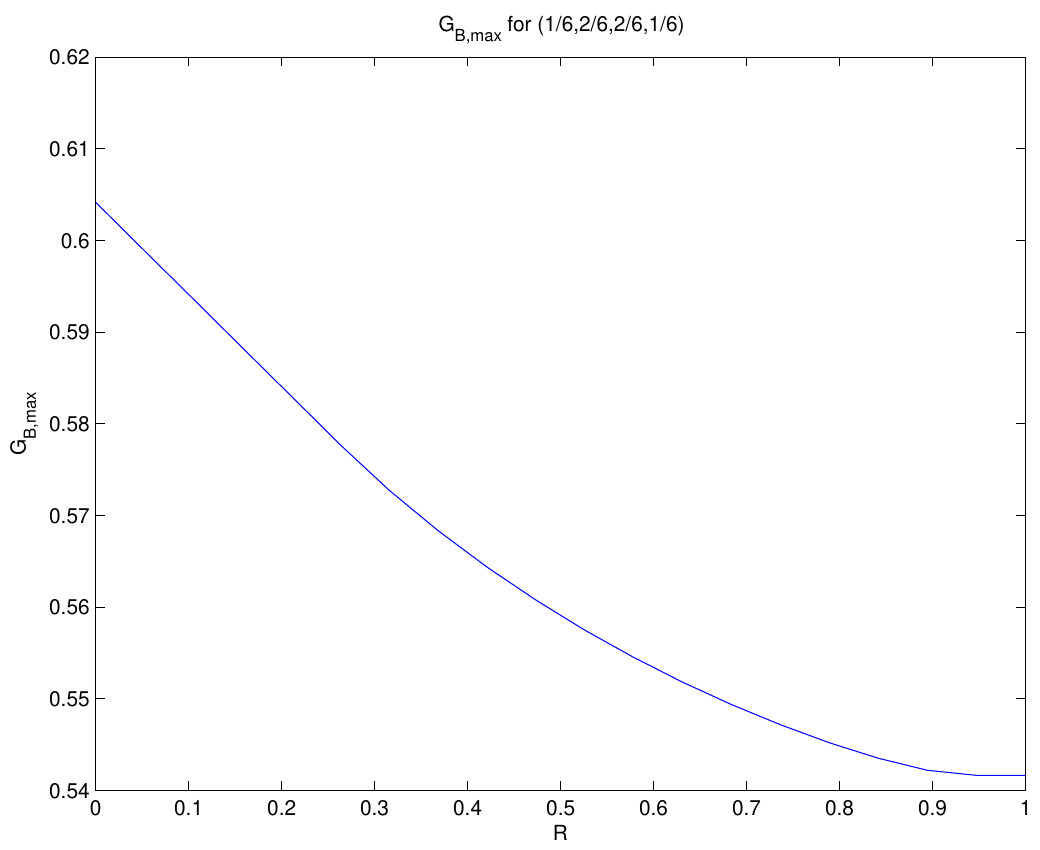}
  \caption{$\gsel_B(R)$ \label{fig:DC-GBSel-2}}
\end{subfigure}

\begin{subfigure}[t]{0.4\textwidth}
 \centering
  \includegraphics[width=\textwidth]{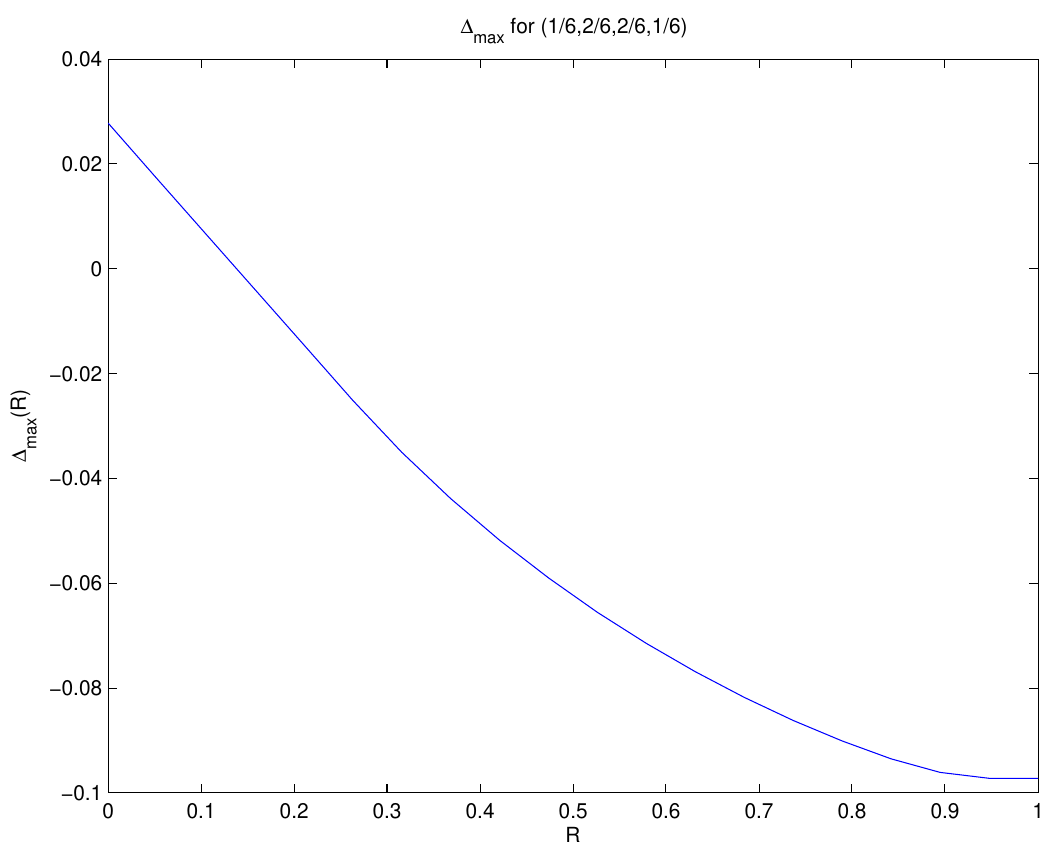}
  \caption{$\dsel(R)$ \label{fig:DC-DeltaSel-2}}
\end{subfigure}
\quad
\begin{subfigure}[t]{0.4\textwidth}
 \centering
  \includegraphics[width=\textwidth]{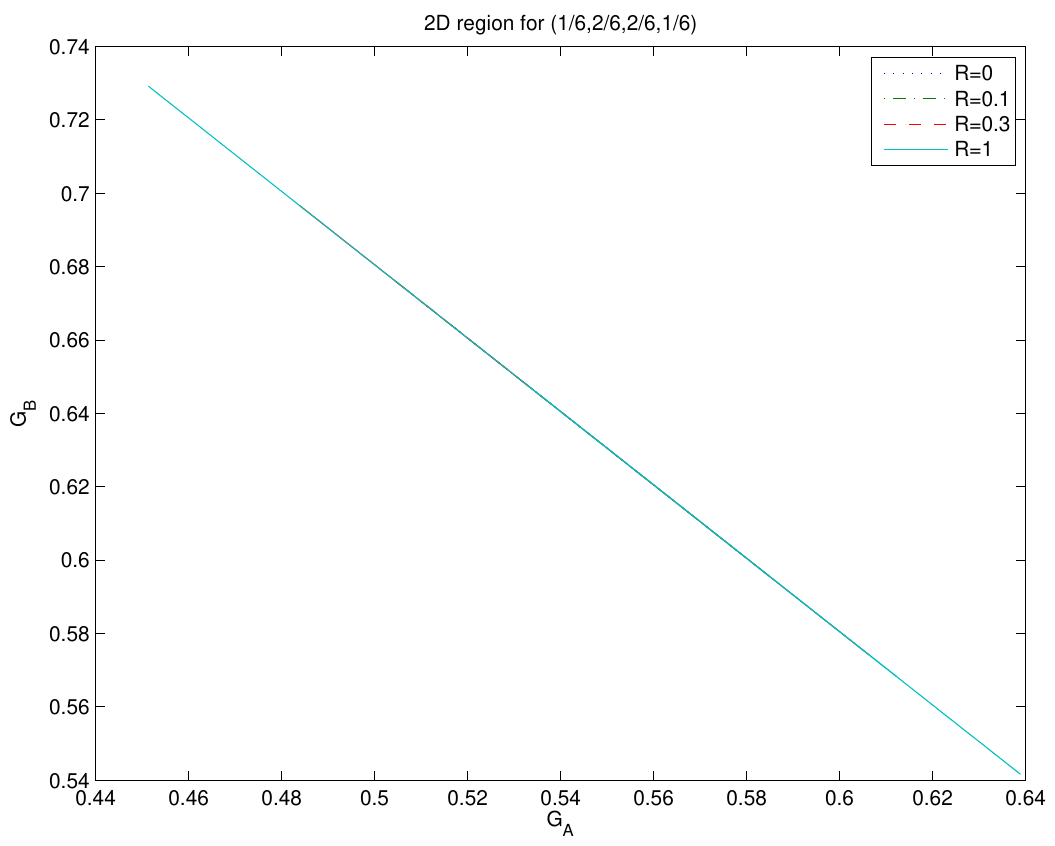}
  \caption{The rate gain region \label{fig:DC-2D-2}}
\end{subfigure}
\caption{Setup of eqn. \eqref{eq:DC-setup-2}}
\end{figure}

\begin{figure}
\centering
\begin{subfigure}[t]{0.4\textwidth}
 \centering
  \includegraphics[width=\textwidth]{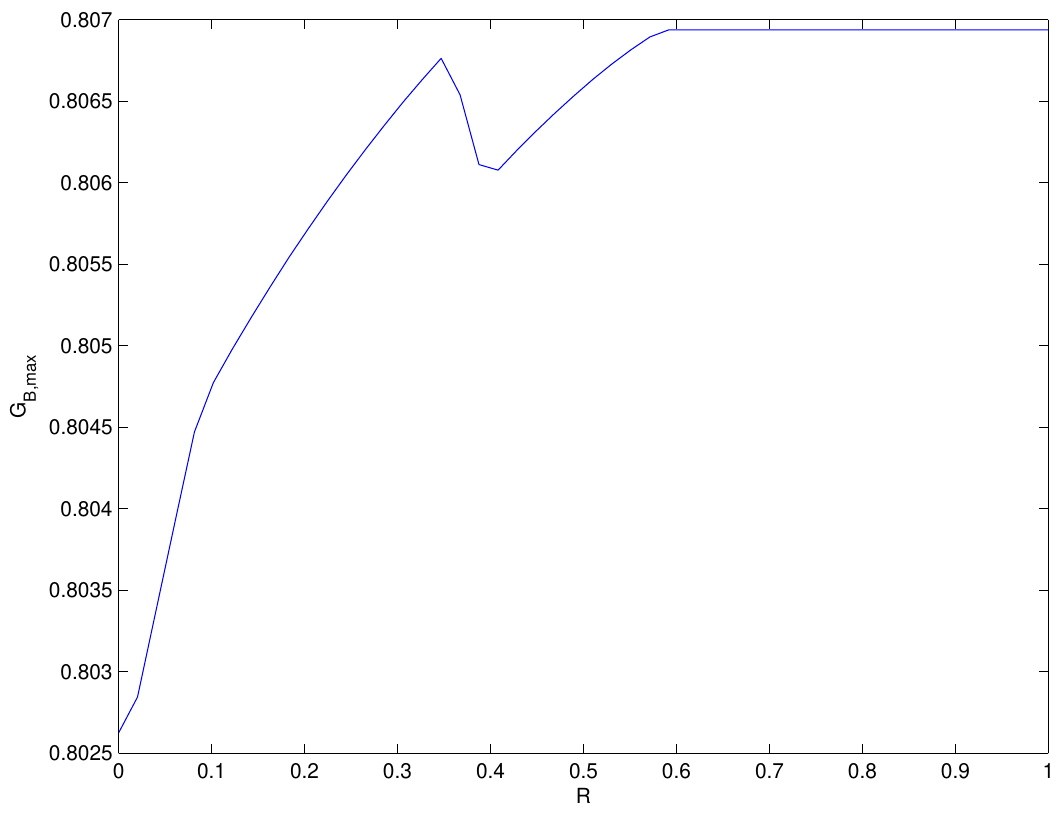}
  \caption{$\gsel_B(R)$\label{fig:DC-GBSel-5}}
\end{subfigure}
\quad
\begin{subfigure}[t]{0.4\textwidth}
 \centering
  \includegraphics[width=\textwidth]{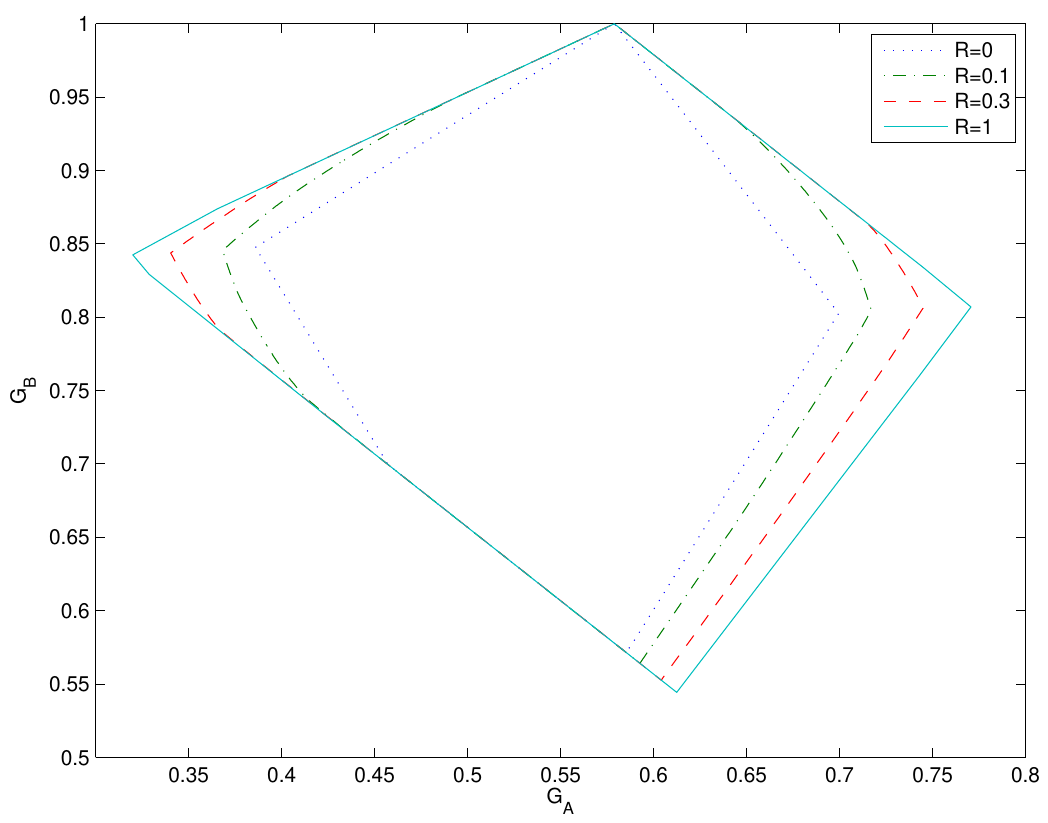}
  \caption{The rate gain region\label{fig:DC-2D-5}}
\end{subfigure} 
\caption{Setup of equation \eqref{eq:DC-setup-5}}
\end{figure}

Another interesting fact could be observed by changing the probability
distribution of \eqref{eq:DC-setup-2} into
\begin{equation}
 \label{eq:DC-setup-5}
 \begin{gathered}
  \vset = \{ \creamy, \chocolaty\}, \\
  \dset = \{ \divcream, \divchoc\}, \\
  p(\creamy, \creamy) = \frac{1}{19} \quad p(\creamy, \chocolaty) =
\frac{9}{19},
\\
  p(\chocolaty, \creamy) = \frac{2}{19} \quad p(\chocolaty, \chocolaty) =
\frac{7}{19}.
 \end{gathered}
\end{equation}
As we can see in Figure~\ref{fig:DC-GBSel-5}, Bob's gain first increases, then
decreases and then increases again. The
region of this setup is depicted in Figure~\ref{fig:DC-2D-5}. This together with
our latter observations
suggest that Bob's gain does not have an specific behavior in general.

\subsection{Information sharing \label{sec:dc-implicit22}}
Let us consider the following problem:  assume that there is a one-way communication link of limited rate $R$ from player B to player A. Unlike the previous problem, player B makes a decision as to whether communicate any information to player A (instead of
player A spying on player B by choosing the information that will be communicated by player B). Player B can potentially benefit himself by communicating cleverly (yet honestly) to player A who will make the cut. The choice of communication protocol then serves the role of the strategy of the player. This brings in a game aspect to the problem. 

Assume that Alice and Bob are playing the game on $n$ i.i.d.\ repetitions of the divide and choose game, i.e. $q(v_A^n, v_B^n)=\prod_{i=1}^n q(v_{Ai}, v_{Bi})$. The two parties have also possibly access to a shared randomness $S$ taking values in a finite set $\mathcal S$. The amount of the shared randomness can be arbitrary, but fixed before the game starts. The strategy of Bob is $q(c|v_{B}^n,s)$ where $c$ is the message on the alphabet set $[1:2^{nR}]$ and $S$ is a shared randomness between the two parties (independent of $(V_A^n, V_B^n)$). Alice's strategy is $q(d^n|v_{A}^n, c, s)$ where $d_i$ is the division by Alice in $i$-th game. Alice's payoff is the sum of the expected value of her payoff over the $n$ games (after Bob picking his most favorable part in each game).

Observe that Bob may choose not to use the shared randomness by setting  $q(c|v_{B}^n,s)= q(c|v_{B}^n)$. Also, when there is no communication between Alice and Bob, existence (or lack thereof) of shared randomness is not important in the set of gain pairs they can achieve.

 Observe that the cut and choose game is  not a zero-sum game, and hence there is no unique Nash equilibrium payoff. There may be many different optimal ways for Alice to cut the cake; different optimal ways of cutting the cake that are all the same from the perspective of Alice, but can affect Bob's average payoff. Fixing any of one these cutting strategies for Alice, she will not have any incentive for changing her strategy. Therefore, each of these strategies lead to an equilibrium. 

To state our main result, let us make the following definition, which considers the equilibriums in the one-shot instance of the problem with no communication:
\begin{definition}Given $q(v_A, v_B)$ with no communication from Bob to Alice, we define a set $\mathsf{G}(q(v_A, v_B))$ as the set of average gain pairs $(G_A, G_B)$ that could be achieved assuming that Alice is playing optimal and selfish in a single instance of the game ($n=1$). Since Alice may have several optimal ways to cut the cake, the  set $\mathsf{G}(q(v_A, v_B))$ can have more than one point. \end{definition}
Observe that if $(G_A, G_B)$ and $(G'_A, G'_B)$ are in $\mathsf{G}(q(v_A, v_B))$, we must have $G_A=G'_A$ by definition of optimality for Alice. Furthermore, $\mathsf{G}(q(v_A, v_B))$ is convex since  Alice can randomize between two optimal cutting strategies. Thus, $\mathsf{G}(q(v_A, v_B))$ can be expressed a set of pairs of the form $(G_A(q(v_A, v_B)), G_{B})$ for some $G_{B,{l}}(q(v_A, v_B))\leq G_B\leq G_{B,{u}}(q(v_A, v_B))$.

Let us now characterize a set of $\epsilon$-equilibriums, wherein any change of strategy by Alice and Bob will not increase their payoff by more than $\epsilon$.
\begin{thm}For any $p(v_A, v_B)$, fix some pair $(G_A(p(v_A, v_B)), G^*_B(p(v_A, v_B))) \in \mathsf{G}(p(v_A, v_B))$. Let 
$$G_{Bmax}(q(v_A, v_B))=\max_{q(c|v_B): I(C;V_B)\leq R}\sum_{c}q(c)G^*_B(q(v_{A}, v_{B}|c)),$$
where $q(c,v_A, v_B)=q(c|v_B)q(v_A, v_B)$. 
Let $q(c|v_B)$ be any arbitrary maximizer of $G_{Bmax}$. Then, given any $\epsilon>0$, for sufficiently large $n$, one can find 
a shared randomness assisted strategy  for players $A$ and $B$ with communication of rate $R$ such that (i) the strategies form 
an $\epsilon$-equilibrium, (ii) the corresponding payoff pair is coordinatewise within $\epsilon$ distance of \begin{align}\left(\sum_{c}q(c)G_A(q(v_{A}, v_{B}|c)), G_{Bmax}\right).\label{eqn;rate;gain;pair}\end{align}
\label{thm:sharedrandomness-DC}
\end{thm}
Proof of this theorem can be found in Section  \ref{APP:thm:sharedrandomness-DC}.

\subsection{Implicit Information Transmission \label{sec:dc-implicit}}
In this part we assume that Alice's and Bob's valuations are 
generated from a given probability distribution $q(\va, \vb)$ once and for all 
and will remain fixed during the subsequent $n$ stage game. Alice's and Bob's 
valuations are chosen from $\vaset$ and $\vbset$, respectively, which are the 
set of their permissible valuations.

As depicted in Fig. \ref{fig:one-stage-cake}, we assume that Alice has two possible actions at each steps: she can either 
choose to play risky (denoted by $\risky$) or non-risky (denoted by 
$\nonrisky$). If she plays non-risky, then independent of Bob's valuation, both 
parties receive exactly half the cake, i.e. playing non-risky is equivalent to 
cutting the cake into two pieces which worth exactly $1/2$ with respect to any 
possible valuation of Alice and Bob. Therefore the two pieces have exactly the same value for both 
players, and without loss of generality we do not consider any action for Bob. On the other hand, if Alice chooses the 
action $\risky$, 
then Bob can choose the left piece (action $\Bleft$) or the right piece (action 
$\Bright$). If Bob plays $\Bleft$, then Alice receives a gain of $G_A$ and Bob 
receives $G_B$ where $G_A$ is a function of Alice's valuation and $G_B$ is a 
function of Bob's valuation. Since the whole cake has a gain of 1 for both 
parties, if Bob chooses $\Bright$, then Alice and Bob receive $1-G_A$ and 
$1-G_B$, respectively.
We shall assume that
\[
 G_A \neq 1/2 \qquad G_B \neq 1/2
\]
If two different valuations of $v_A, v_B$  result in the same values for $G_A$ and $G_B$, then in 
the sense of strategies and equilibriums these two valuations are identical. 
Therefore in the following discussion we shall forget about valuations and instead assume that there is a 
joint distribution
\[
 p(g_A, g_B)
\]
where Alice knows $G_A$ and Bob knows $G_B$. In fact $G_A$ and $G_B$ are 
sufficient statistics. Therefore the one stage game is of the form depicted in 
Figure~\ref{fig:one-stage-cake}.

We assume that this one stage game is repeated $n$ times during which Alice and 
Bob's valuations are chosen randomly at the beginning of the game and remain 
unchanged. Also we assume that both players causally observe each other's actions and 
recall these actions as well as their own actions. Like the one stage game, 
Alice is only aware of her $G_A$ and Bob only knows $G_B$.
We denote Alice's and Bob's actions in $n$-stage game by $a_{[1:n]}$ and 
$b_{[1:n]}$ respectively. At stage $t$, Alice's and Bob's gain are denoted by 
$\ga{a_t,b_t}$ and $\gb{a_t,b_t}$, respectively. Note that these 
quantities are random variables even if $a_t$ and $b_t$ are known, since they 
depend on $G_A$ and $G_B$, which are random. In 
the same manner, 
for a sequence of steps, say $a_{[i:j]}$ and $b_{[i:j]}$, we define
\[
 \gx{a_{[i:j]},b_{[i:j]}} = \sum_{t=i}^j \gx{a_t,b_t} \qquad \qquad 
X\in\{A,B\}.
\]
which is again a random variable.
The gain of the whole game is $\ev{\gx{A^n,B^n}}$ where the expected value 
is taken over all possible $G_A,G_B$ and all possible actions resulting from 
strategies. Note that since $n$ is fixed throughout the problem, we do not need 
to normalize the gain with $n$.

When $G_A=g_A$ is known, $\ga{a_{[i:j]},b_{[i:j]}}$ will no longer be a random variable. Similarly, when $G_B=g_B$ is known, $\gb{a_{[i:j]},b_{[i:j]}}$ will no longer be a random variable. In this case, we denote these gains by
\[
\gxwrt{a_{[i:j]},b_{[i:j]}}{g_X} \qquad X \in \{A,B\}.
\]

The main result of this section is to identify an equilibrium for the repeated form of the game.

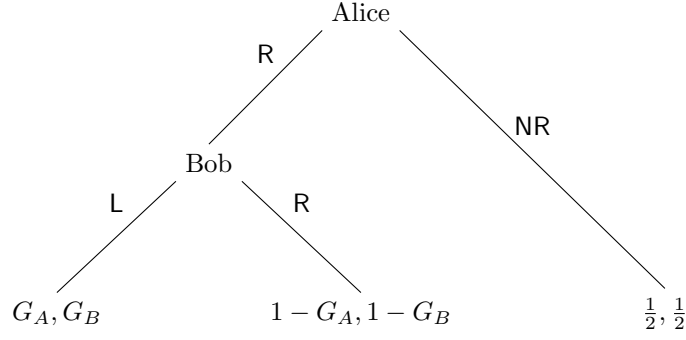
\begin{figure}
 \centering
 \begin{tikzpicture}
  \node (root) at (0,0) {Alice};
  \node (risky) at (-2,-2) {Bob};
  \node (nonrisky) at (4,-4) {$\frac{1}{2}, \frac{1}{2}$};
  \node (left) at (-4,-4) {$G_A, G_B$};
  \node (right) at (0,-4) {$1-G_A,1-G_B$};

  \draw (root.south west) --node[above=2mm] {$\risky$} (risky.north);
  \draw (risky.south west) --node[above=2mm] {$\Bleft$} (left.north);
  \draw (risky.south east) --node[above=2mm] {$\Bright$} (right.north);
  \draw (root.south east) --node[above=2mm] {$\nonrisky$} (nonrisky.north);
 \end{tikzpicture}
\caption{\label{fig:one-stage-cake} The one stage game in our cake cutting 
scenario. $G_A$ and $G_B$ are random variables chosen by the nature, Alice only 
knows $G_A$ and Bob only knows $G_B$ but both of them are aware of their joint 
probability distribution.
Note that if Alice plays $\nonrisky$, since both pieces of cake worth exactly 
$1/2$ for both parties, we can equivalently assume that Bob has no action left.}
\end{figure}

It is worthwhile to mention the differences between this setup and that of 
Section~\ref{sec:dc-implicit}:
\begin{enumerate}
 \item In the previous setup, valuations are generated independently in $n$ 
stages while in this setup they are generated once and for all.
  \item In the explicit setup, communication or spying is done prior to the divide and 
choose procedure, while in the 
implicit setup, information is transferred through actions.
\item In Section~\ref{sec:dc-implicit} we assume that Bob always chooses the 
piece which is more valuable for him, which was reasonable since stages 
were completely independent, while in the setup of this section, Bob is free to 
choose whichever part he wishes; however, as we will see, Bob is better off to 
choose the more valuable piece.
\item In this section we assume that there are only two permissible divisions 
for Alice, while in Section~\ref{sec:dc-implicit}, the set of permissible 
divisions, $\dset$ is an arbitrary finite set.
\end{enumerate}

As we will see, it is more convenient to look at Bob's action from another 
point of view. We say that Bob plays selfish, or $\selfish$, if he choses the 
piece which has more value to him, and we say he plays $\nonselfish$ is he 
chooses the cake with less value. More precisely,
\[
 \selfish = \begin{cases}
             \Bleft & G_B > 1/2 \\
             \Bright & G_B < 1/2
            \end{cases}
\]
and
\[
 \nonselfish = \begin{cases}
                \Bright & G_B > 1/2 \\
                \Bleft & G_B < 1/2
               \end{cases}
\]
Also define
\[
\Gamax = \max \{ G_A, 1-G_A\} \qquad \Gamin = \min \{ G_A, 1-G_A\},
\]
and similarly
\[
\Gbmax = \max \{ G_B, 1-G_B\} \qquad \Gbmin = \min \{ G_B, 1-G_B\}.
\]
When the value of $G_A=g_A$ is known, we use $\gamax=\max \{ g_A, 1-g_A\}$ to denote the value of $\Gamax$.
Furthermore, define the random variable $T$ as follows:
\[
 T = \begin{cases}
      0 & G_A > 1/2, G_B < 1/2 \text{ or } G_A < 1/2, G_B > 1/2 \\
      1 & \text{otherwise},
     \end{cases}
\]
Intuitively, $T=0$ means that Alice and Bob are interested in the same part of 
the cake, therefore only one of them can be happy at the same time. More 
precisely, it is easy to check that , if Alice plays $\risky$ and 
Bob plays $\selfish$, when $T=0$ Alice receive $\Gamin$ and Bob receives 
$\Gbmax$, while when $T=1$, Alice receives $\Gamax$ and Bob receives $\Gbmax$.

Before continuing the discussion, let us clarify our model with an example. For 
instance, exploiting the notation of Section~\ref{sec:dc-implicit}, assume that
$\vaset = \vbset =  \{ 
\cream, \chocolate \}$. Also define Alice's actions as $\risky = \full$ and 
$\nonrisky = \half$. It is evident that in this example, with probability 1 we 
have
\[
 \Gamax = 1 \qquad \Gamin = 0 \qquad \Gbmax = 1 \qquad \Gbmin = 
0.
\]
Also we have
\[
 T = \begin{cases}
      1 & \va = \chocolate, \vb = \cream \quad \text{or} \quad \va = \cream, 
\vb = \chocolate \\
      0 & \va = \vb = \chocolate \quad \text{or} \quad \va = \vb = \cream.
     \end{cases}
\]

We will denote Alice's and Bob's strategies by $\Sa$ and $\Sb$, 
respectively. 
Note that strategies, which are assumed to be behavioral, are nothing but 
probability distributions assigning probabilities to each action at each node, 
based on one party's observations up to that time. Therefore, we use 
probabilistic notations for strategies; for instance, $\Sa(a_3|a^2,b^2,g_A)$ 
is the probability that Alice chooses action $a_3$ at stage 3 of the game when 
$G_A = g_A$ and during stages 1 and 2, players had played 
action sequences $a^2=(a_1, a_2) ,b^2=(b_1, b_2)$.

\subsubsection{Main Result}\label{sec:THPMR}
\textbf{Identification of a strategy:}

One strategy for Bob is to always plays selfishly. We denote this 
strategy by $\Sb$.

One possible strategy for Alice in $n$-stage game, denoted by $\Sa$ is as 
follows. Intuitively speaking, Alice guesses that Bob usually plays selfishly 
in order to maximize his own gain greedily. Based 
on this assumption, she counts the 
number of times she has risked so far and calculates the number of failures (with 
gains less than half) and number of successes (gains more than half) among 
them. If the 
number of successes is more, she guesses that $T=1$ and therefore 
continues to risk. However, if she has failed most of the times, she stops 
risking and guarantees 
herself a gain of $1/2$ by playing $\nonrisky$. To be more specific, at stage $k$, Alice 
calculates
\begin{equation}
\label{eq:ng-nl}
 \begin{split}
  n_g(k) &= \left | \{ 1 \leq t \leq k-1 : \gawrt{a_t, b_t}{g_A} >
1/2 \} \right | \\
n_l(k) &= \left | \{ 1 \leq t \leq k-1 : \gawrt{a_t, b_t}{g_A} <
1/2 \} \right |,
 \end{split}
\end{equation}
then she plays risky at stage $k$ if $n_g(k) \geq n_l(k)$, and plays non-risky 
otherwise. Note that Alice always plays $\risky$ at stage 1, since at this time 
$n_l(1) = n_g(1) = 0$.

\textbf{Conditions for being an equilibrium:}

Now assume that in a one stage game, making risk is advantageous for Alice for every value of $g_A$. This means that her expected gain when playing $\risky$ is 
greater than her expected gain when playing non-risky which is $1/2$, i.e.\
\[
p(T=0|G_A = g_A) \gamin + p(T=1|G_A = g_A) \gamax > 1/2 \qquad 
\forall \, 
g_A.
\]

We want to show that the above strategy is an equilibrium in the $n$ stage 
game. In repeated games (and more generally, in extensive games) instead of 
taking Nash Equilibrium as the solution, usually Sequential Equilibrium is 
considered \cite{osborne}. We consider even a stronger equilibrium criteria 
named Trembling Hand Perfect Equilibrium (THP).
In this equilibrium, it is assumed that at each stage, the hand of each player 
might ``tremble'' and he deviates from what he is supposed to do given his 
strategy. Then, a strategic game is defined from the extensive game by 
associating a different player for every player at each stage, we call each of 
these (pseudo players) an ``agent''. Then the strategy which is going to be 
proved to be THP is fixed for one agent and the trembled strategy is considered 
for other agents. This strategy should be a best response for all agents when 
trembling probability goes to zero. For more information about THP, see 
\cite{osborne}. 

Now we are ready to state our main result
\begin{thm}
 Assume that Alice is better off playing $\risky$ at the one stage game, i.e.\
\begin{equation}
\label{eq:one-stage-risk}
p(T=0|G_A = g_A)\gamin + p(T=1|G_A = g_A) \gamax > 1/2 \qquad \forall \, 
g_A.
\end{equation}
Also we assume that $G_A \neq 1/2$ with probability 1
and $p(T=0|g_A), p(T=1|g_A) \neq 0$ for all $g_A$.
Then the strategy pair $\Sab = (\Sa,\Sb)$ described above in the $n$ stage 
game, $n \geq 1$, is 
both Nash Equilibrium and Trembling Hand Perfect Equilibrium (THP). 
Furthermore, since the game is with perfect recall, this strategy also yields a 
Sequential Equilibrium.
\end{thm}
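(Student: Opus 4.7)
The plan is to verify Trembling Hand Perfect (THP) directly at every information set; Nash Equilibrium is then immediate, and perfect recall upgrades THP to sequential equilibrium. I would split the argument into Alice's and Bob's best-response analyses against an $\epsilon$-trembled counterpart, and then let $\epsilon \to 0$.

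For Alice, the key step is Bayesian. At an information set summarized by the counts $(n_g,n_l)$ of her wins and losses among past risky stages, using that under $T=1$ a win occurs iff Bob plays selfishly while under $T=0$ a win occurs iff Bob plays nonselfishly, the posterior likelihood ratio against an $\epsilon$-trembled selfish Bob is
\[
\frac{\pr{T=1 \mid \text{history}, g_A}}{\pr{T=0 \mid \text{history}, g_A}} = \frac{p(T=1\mid g_A)}{p(T=0\mid g_A)}\left(\frac{1-\epsilon}{\epsilon}\right)^{n_g - n_l},
\]
which, as $\epsilon\to 0$, concentrates on $T=1$ when $n_g>n_l$, on $T=0$ when $n_g<n_l$, and coincides with the prior when $n_g=n_l$. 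Alice's expected one-stage gain from $\risky$ in the limit is $\pr{T=0\mid\cdot}\gamin+\pr{T=1\mid\cdot}\gamax$, which exceeds $1/2$ precisely when $n_g\ge n_l$, the boundary case being exactly assumption~\eqref{eq:one-stage-risk}; $\nonrisky$ yields $1/2$. Because a $\nonrisky$ stage freezes Alice's information state while a $\risky$ stage only sharpens her belief in the direction it already leans, myopic optimality coincides with the dynamic-programming optimum, so $\Sa$ is a best response at every information set.

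For Bob, condition on $G_B=g_B$. A stage-by-stage count yields the identity
\[
\sum_{t=1}^n \gb{a_t, b_t} = \frac{n}{2} + \bigl(\gbmax-\tfrac{1}{2}\bigr)\bigl(N_{\selfish}-N_{\nonselfish}\bigr),
\]
where $N_{\selfish},N_{\nonselfish}$ count risky stages on which Bob played selfishly, resp.\ nonselfishly. Under $\Sa$, the walk $W:=n_g-n_l$ moves by $-1$ per selfish play and by $+1$ per nonselfish play when $T=0$, and oppositely when $T=1$, and Alice plays $\risky$ exactly while $W\ge 0$. For $T=0$: if the walk reaches $-1$ within $n$ stages then $N_{\selfish}-N_{\nonselfish}=1$ and Bob's gain equals $n/2+\gbmax-\tfrac12$; otherwise the game is entirely risky and $N_{\selfish}-N_{\nonselfish}=-W_n\le 0$, capping the gain at $n/2$. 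Since $\Sb$ drives $W$ to $-1$ immediately, it attains the upper value $n/2+\gbmax-\tfrac12$ and is weakly optimal. For $T=1$: $\Sb$ makes $W$ strictly increasing, so every stage is risky and $N_{\selfish}-N_{\nonselfish}=n$, delivering the maximum $n\gbmax$; any deviation strictly reduces $N_{\selfish}-N_{\nonselfish}$, hence strictly reduces Bob's gain. Averaging over $T$ given $g_B$, $\Sb$ is a best response, and against an $\epsilon$-trembled Alice the identity is unchanged while only the distribution of $W$ shifts by $O(\epsilon)$, so $\Sb$ remains a best response in the limit.

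The main obstacle is the $T=0$ case for Bob, where the un-trembled best response is only weakly optimal (the walk identity yields exact equality between $\Sb$ and any strategy whose walk eventually hits $-1$). One must rule out that the $O(\epsilon)$ corrections introduced by Alice's tremble accumulate into a strictly profitable deviation. The cleanest route is a coupling between the trembled and un-trembled walks showing that the additional risky stages induced by Alice's tremble contribute the same expected net walk increment regardless of Bob's action on them, so that $\Sb$ remains a limiting best response at every information set and THP follows.
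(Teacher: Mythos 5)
Your architecture matches the paper's: the posterior likelihood ratio $\frac{p(T=1\mid g_A)}{p(T=0\mid g_A)}\left(\frac{1-\epsilon}{\epsilon}\right)^{n_g-n_l}$ is exactly the paper's Lemma on Alice's beliefs, and the displacement identity $\sum_t \gb{a_t,b_t} = \frac{n}{2}+(\gbmax-\frac12)(N_{\selfish}-N_{\nonselfish})$ is exactly the paper's observation that Bob's total gain is an affine increasing function of the final position of the $n_g-n_l$ walk. However, two steps are genuinely open. First, on Alice's side, the claim that ``myopic optimality coincides with the dynamic-programming optimum'' is asserted, not proved; it is in fact the substance of the argument. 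The paper closes it by explicitly computing both continuation values along the $\epsilon=0$ path in each of the three cases $n_g=n_l$, $n_g>n_l$, $n_g<n_l$ (e.g.\ $\hat f_{\risky}(0)=p(T=0\mid g_A)(\gamin+\frac{n-k}{2})+p(T=1\mid g_A)(n-k+1)\gamax$ versus the delayed-risk value), finding that the difference reduces to a strictly positive quantity such as $p(T=1\mid g_A)(\gamax-\frac12)$, which then survives small $\epsilon$ by continuity of the polynomial $\hat f_X(\epsilon)$. Your exchange-style heuristic does not by itself rule out that delaying risk by one stage (keeping the option value) beats risking now, so this computation cannot be skipped.

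Second, and more seriously, you correctly identify that Bob's $T=0$ case gives only a tie at $\epsilon=0$, but you do not close it, and the fix you gesture at (coupling the trembled and untrembled walks) is not the coupling that works. The paper avoids the limit entirely: it compares the two continuation walks generated by Bob's two choices at stage $k$ — one started at $\delta(k)+1$, the other at $\delta(k)-1$ — both evolving under the \emph{same} $\epsilon$-trembled dynamics, and couples them at their first meeting time. Monotonicity of the coupled trajectories plus the affine-in-final-position gain formula yields $f_{\selfish,t}(\epsilon)\ge f_{\nonselfish,t}(\epsilon)$ for \emph{every} $\epsilon>0$, so no $O(\epsilon)$ bookkeeping or tie-breaking argument is ever needed. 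Note also that for THP you must verify optimality at every information set of Bob, including off-path histories where the walk starts at an arbitrary $\delta(k)$ rather than at $0$; your case analysis is written only from the initial position. As it stands, the proposal is an incomplete version of the paper's proof: right skeleton, with the two load-bearing verifications missing.
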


Note that since THP is stronger than Nash Equilibrium in games with perfect 
recall, it suffices to prove 
THP. The proof of this theorem is given in Section \ref{sec:app-THP}.

%
\section{Adjusted Winner}\label{sec=AW}
Assume two parties, say Alice and Bob, are about to divide a set of $m$ goods.
Unlike the Divide and Choose method, they announce their valuations over these
goods which are nonnegative vectors of sum $1$ and size $m$,
$\av=(a_1,\dots,a_m)$
for Alice and $\bv=(b_1,\dots,b_m)$ for Bob to a third party whose duty is to
divide these
items fairly based on these announced valuations. Adjusted Winner is an
algorithm that solves a sequence of equations in order to give a
division
of the items which is equitable, envy free and efficient \cite{brams:cake:1996}.
We note that the
divide and choose method does not have these properties.

The adjusted winner algorithm divides the items as follows. Reorder the items
so that,
\begin{equation}
\frac{a_1}{b_1} \geq \frac{a_2}{b_2} \geq \dots \geq \frac{a_t}{ b_t} \geq 1 
>\frac{a_{t+1}}{b_{t+1}}
\geq \dots \geq \frac{a_m}{b_m}.
\end{equation}
Then give items $1$ through $t$ to Alice and items $t+1$ through $m$ to Bob. If
their gains at this step is equal, the job is finished. First assume Alice's
gain is more. In this case, give a portion of item $t$ to Bob so that their gain
becomes equal. If even by giving all of item $t$ this did not happen, go for
item $t-1$ and continue this procedure until the equality holds. For the
second case when Bob's gain is more, in a similar
way, give a portion of item $t+1$ to Alice to achieve equality. If this was not
sufficient, go to item $t+2$ and continue. Since eventually by giving all the
items to the party with less gain, his gain becomes more, at some point in
between their gains become equal and the procedure terminates.

\subsection{Spying in Adjusted Winner}
A motivation for studying spying in the adjusted winner game is a result by Brams and Taylor who showed
that in the case of having two items, a dishonest party who has \emph{full} information
about the other party's
valuation vector, while the other party is unaware of this and acts honestly (\emph{i.e.,} the other party is spying),
can trick the referee \cite{brams:cake:1996}. We are interested in quantifying the benefit \emph{partial} spying, by assuming that Bob announces his valuation honestly
while
Alice uses the partial
information he has gained by spying over Bob's valuation to trick the referee
and announce an untrue valuation instead of her true
valuation.

\subsubsection{Reduction to Divide and Choose}
Assume that the valuation vectors $\av=(a_1,\dots,a_m)$
and $\bv=(b_1,\dots,b_m)$  are limited to be taking values in finite sets,
and Alice can spy any arbitrary function of
Bob's valuation vector consistent with her spying rate.
The assumption that the set of valuations is finite is a practical assumption
since we can assume that the value assigned to an item by each individual is a
real number with finite precision.
Therefore the set of all valuation vectors is finite. 

We can find the trade off between the
``\emph{spying rate}" and Alice's ``\emph{spying gain}" via a simple
transformation from adjusted winner problem (AW) to the divide and choose (DC) problem as follows: 
Let Alice's
announced valuation, $\ta$, plays the role of the division $D$ in divide and
choose and the following gain functions could be defined,
\begin{equation}
 \begin{split}
  \gainwrt{A}{\tav}{\av}{\bv} &= \awwrt{A}{\tav,\bv} \cdot \av , \\
  \gainwrt{B}{\tav}{\av}{\bv} &= \awwrt{B}{\tav,\bv} \cdot \bv ,\\
 \end{split}
\end{equation}
where $\cdot$ is the inner product operation. 
Note that although the two problems have conceptual differences, by using this
transformation, we can consider this problem a special case of divide and
choose.
Also note that in this approach, the assumption of having two items, $m=2$, is not necessary.

\subsubsection{Restriction on the structure of spying \label{sec:binary-aw}}
The drawback of the above approach is that Alice is allowed to spy a complicated function of Bob's valuation vector.  Let us restrict Alice to be able to only spy a set of simpler (but more realistic)  questions  of
the form ``Is
Bob's valuation on the first item less than a particular value $\alpha$ or more
than that?''. We call these binary dividing questions. To study this problem, as in  Brams and Taylor's work, we assume that the number of
items, $m$, is equal to $2$.  In this case
the valuations are $\av=(a,1-a)$ for Alice and $\bv=(b,1-b)$ for Bob.
Therefore we can simply take $a$ and $b$ as \emph{valuation numbers} or more
simply valuations. We assume that Alice's valuation is fixed, while Bob's
valuation of the first item
is uniformly distributed in an interval $b\in[\bmin,\bmax]$. A binary dividing question divide the interval $[\bmin,\bmax]$ into two subintervals.  A sequence of $k$ dividing questions is like doing a dictionary search and can be represented in terms of cutting points $b_0< b_1< \cdots< b_{2^k}$ where  $b_0=\bmin$ and $b_{2^k} = \bmax$.

Let $\dgamax{k}{\bmin,\bmax}$ denote the maximum increase in Alice's gain by asking the optimal $k$ binary division questions. Clearly $\dgamax{0}{\bmin,\bmax}=0$ and $\{\dgamax{k}{\bmin,\bmax}\}_{k=0,1,2,\cdots}$ is a non-decreasing sequence of numbers because the more questions, the better Alice can perform. Since the valuation vectors add up to one, the maximum possible utility of Alice is one, and hence the increase in Alice's utility by $k$ questions, $\dgamax{k}{\bmin,\bmax}$, is also bounded from above by one. 

We ask the following questions:
\begin{enumerate}
\item[Q1.] Does the value of spying questions depreciate over time? For instance, is it true that the extra gain we get by asking the third question is less than the gain we get by asking second question, i.e.,  $$\dgamax{2}{\bmin,\bmax}-\dgamax{1}{\bmin,\bmax}\overset{?}{\geq}  \dgamax{3}{\bmin,\bmax}-\dgamax{2}{\bmin,\bmax}.$$
In other words, is $\{\dgamax{k}{\bmin,\bmax}\}_{k=0,1,2,\cdots}$ a concave sequence of numbers?
\item[Q2.] Assuming that the answer to the above question cannot be answered affirmative in all cases, can we find a linear control on the growth of $\dgamax{k}{\bmin,\bmax}\leq k\tilde{\Delta}$, with $\tilde{\Delta}$ as small as possible?
\end{enumerate}

Before answering Q1 and Q2, let us discuss one of their implications. Let us assume for a moment that instead of playing a single Adjusted Winner game, we are playing $n$ repetitions of the game. More specifically, assume that during $n$ games, Bob's
valuation is i.i.d.\ random variables uniformly distributed in $[\bmin,\bmax]$
and $a$ is fixed in all games. If Alice is allowed to ask $R$
questions on average in each
game, or totally $nR$ questions, we are interested in finding bounds for the
Alice's expected improvement in gain \emph{averaged} over $n$ games. If the answer to Q1 is affirmative, then 
 the strategy of spying either $\lfloor R\rfloor$ or
$\lceil R\rceil$
questions in each game (with the average number of questions no larger than R)
maximizes the spying gain of Alice. To see this in the case of $R$ being an integer, observe that 
if Alice asks $t_i$ questions in game $i$, then her average
improvement in gain will be at most,
\begin{equation}
 \frac{1}{n} \sum_{i=1}^n \dgamax{t_i}{\bmin,\bmax} \leq \dgamax{\frac{1}{n}
\sum t_i}{\bmin,\bmax} \leq \dgamax{R}{\bmin,\bmax},	
\end{equation}
where we have used the concavity of the sequence. If the answer to Q2 is affirmative, then  the average improvement on Alice's
expected gain which is averaged over $n$ games is bounded by
 $R \tD$ since
if Alice
asks $t_i$ questions in the $i$th game, her maximum improvement is,
\begin{equation}
  \frac{1}{n} \sum_{i=1}^n \dgamaxhead_{t_i} \leq \frac{1}{n}\sum_{i=1}^n t_i
\tD = R\tD,
\end{equation}

For Q1, we have the following result:
\begin{thm}
\label{thm:ocz}
 For a fixed $1/2\leq\bmin<\bmax\leq 1$, if $a$ is outside the interval
$(\tau_l,\tau_u)$ where 
\begin{equation}
 \begin{split}
  \tau_u(\bmin,\bmax) &= \frac{2 \bmax^2+2 \bmax
\bmin}{\bmax+3 \bmin}, \\
  \tau_l(\bmin,\bmax) &= \frac{2 \bmax \bmin+2
\bmin^2}{3 \bmax+\bmin},
 \end{split}
\end{equation}
then the sequence $\left \{ \dgamax{k}{\bmin,\bmax} \right \}_{k=0}^\infty$
 is concave. Furthermore, the optimal cutting points $\{b_0, b_1, \cdots, b_{2^k}\}$ forms a geometric progression. More specifically, having asked $i-1$ questions and having ended up with $b\in[x, y]$ as the interval of Bob's valuation, Alice should ask whether $b\in [x, \sqrt{xy}]$ or $b\in[\sqrt{xy},y]$ as her $i$-th question.
\end{thm}
\begin{remark}The assumption $1/2\leq\bmin<\bmax\leq 1$ 
means that Alice knows which of
the two items Bob likes more, but she does not know his exact valuation. Also
note that the case which $\bmin<\bmax<1/2$ (the entire interval falls in the left
half) could be
reduced to this case by changing the order
of items.
\end{remark}
For Q2, we have the following result:
\begin{thm}
\label{thm:general-bound-differentiableNew}
 Assume that for $\bmin\leq x < y \leq \bmax$, $\dgamax{1}{x,y}$ is
differentiable with respect to $y$. Then $\tD(x,y)$ defined by 
\begin{equation}
 \tD = \max_{\bmin\leq \gamma_1 \leq \gamma_2 \leq \bmax} \Lambda(\gamma_1,\gamma_2),
\end{equation}
where $ \Lambda(x,y)=\frac{\partial}{\partial
y} \Gamma(x,y)$ defined as follows
\begin{equation}
 \Gamma(x,y) = \begin{cases}
                (y-x) \dgamax{1}{x,y} & y>x, \\
                0 & y=x,
               \end{cases}
\end{equation}
satisfies
$$\dgamax{k}{\bmin,\bmax}\leq k\tilde{\Delta}, \qquad \forall k.$$
\end{thm}


%
\section{Maximum Nash Collective Utility Function}\label{sec=MNC} 
In this section we consider an arbitrary society with a government who wants to
divide its several resources among the citizens. Each person assigns a value for
each of the resources available to the government, and we assume that the
government knows these valuations. The Nash collective utility function (Nash 
CUF) 
for a given division strategy is equal to the product of the gains of individual 
members of the society of that division strategy. Maximizing the Nash CUF for 
this society
implies a division policy for the government, specifying how much of each
resource should be allocated to each individual. For practical reasons the
government may want to divide the citizens into several clusters, say drivers,
teachers, etc, and apply the same division strategy uniformly to all people from
the same class. We consider the increase of Nash CUF
for a clustering refinement and draw conceptual links between this problem and
the portfolio
selection problem in stock
markets \cite{thomas1991elements}. 
\subsection{The model}
 
Assume that the population of the society is $n$, which is fixed.
The valuation vectors of all the individuals in the society is known to the
government. We assume that the government has partitioned the society into $k$
clusters $\mathcal{P}=(\mathcal{P}_1,\dots,\mathcal{P}_k)$. Let $n_i$ denote the
number of
people in cluster $\mathcal{P}_i$ and
$\alpha_i=\frac{n_i}{n}$. The government has decided to use a fixed division
strategy for all people in cluster $i$ which is denoted by $\bv_i$.
The sum of the portion each
individual receives should sum up to one, i.e.\
$\sum_{i=1}^k n_i \bv_i=\textbf{1}=\sum_{i=1}^k(n\alpha_i) \bv_i=1$. Let us
denote the the
valuation vector of people in cluster $i$ by $\vv_{i1}, \vv_{i2},\dots,
\vv_{in_i}$.

Based on the valuation vectors of individuals, the government wants to divide 
the
items so as to maximize the Nash CUD of the society, which is
\begin{equation}
W_{\mathcal{P}}=\max_{\bv_{1:k}}\prod_{i}\prod_{j=1}^{n_i}\bv_i^t \vv_{ij}.
\end{equation}

In the second scenario, the government divides one of the classes, say the
first class, into two subclasses $1a$ and $1b$ and uses different division
protocols for these subclasses. If $\mathcal{P}'$ denotes the new
partitioning and $W_{\mathcal{P}'}$ to be the maximum
Nash CUF in the new scenario,
\begin{equation}
 W_{\mathcal{P}'} = \max_{\bv'_{1a},\bv'_{1b}, \bv'_{2:k}} \prod_{\vv_{1a}}
\bv^{'t}_{1a}
\vv_{1a}
\prod_{\vv_{1b}} \bv^{'t}_{1b} \vv_{1b} \prod_{i=2}^k \prod_{j=1}^{n_i}
\bv^{'t}_i
\vv_{i,j}.
\end{equation}
By taking $\bv'_{1a} =\bv'_{1b} = \bv_1$ and $\bv'_i = \bv_i$ for $i>1$, we
realize that
$ W_{\mathcal{P}'}\geq  W_{\mathcal{P}}$. In fact by refining the
classification, the government can improve the social welfare, which was
expected. In this section, we are interested in finding an upper bound on the
possible improvement after this refinement.

Define $\Vv_i$ to be the random variable whose distribution is the
\emph{empirical
distribution} of the valuation vector of people in class $i$, i.e. for any set
$\mathcal{A}$
\begin{equation}
P(\Vv_i\in \mathcal{A})=\frac{|\#j: \vv_{ij}\in \mathcal{A}
|}{n_i},
\end{equation}
also define r.v.'s
$\Vv_{1a}$ and $\Vv_{1b}$ to be the random variables for \emph{empirical
distribution} of
subclasses $1a$ and $1b$. Values of $\alpha_{1a}$ and $\alpha_{1b}$ are defined
in a natural way by dividing the size of classes $1a$ and $1b$ to $n$. Note that
$$p(\Vv_{1}=\vv_{1})=\frac{\alpha_{1a}}{\alpha_1}
p(\Vv_{1a}=\vv_{1})+\frac{\alpha_{1b}}{\alpha_1} p(\Vv_{1b}=\vv_{1} ).$$
We can define a random variable $E$ indicating where a randomly chosen person
from class $1$ belongs to $1a$, or to $1b$. In this case
$p(E=0)=\alpha_{1a}/\alpha_1$ and
$p(E=1)=\alpha_{1b}/\alpha_1$. Also $p(\Vv_{1}=\vv_{1}|E=0)=p(\Vv_{1a}=\vv_{1})$
and
$p(\Vv_{1}=\vv_{1}|E=1)=p(\Vv_{1b}=v_{1})$, which is simply the Bayes rule. We
denote the support of $\Vv_1$ by the set $\mathcal{V}_1$ (i.e.
$p(\Vv_1=v_1)>0~~\iff
\vv_1\in \mathcal{V}_1$). Similarly we let $\mathcal{V}_{1a}$ and
$\mathcal{V}_{1b}$ to be the support of $\Vv_{1a}$ and $\Vv_{1b}$. Note that
$\mathcal{V}_{1a}\subset \mathcal{V}_{1}$ and $\mathcal{V}_{1b}\subset
\mathcal{V}_{1}$.

In a more generalized but similar case, we can assume that instead of dividing
cluster $\mathcal{P}_1$ into $2$ clusters, we divide it into $t$ clusters
$\mathcal{P}_{1,1},\dots, \mathcal{P}_{1,t}$ and show the new partitioning by
$\mathcal{P}'$. Exactly in the same way, we
define random variables $E$ and $\Vv_1$.

\begin{remark}
If we define the distance between two non-negative vectors
$\textbf{v}$ and $\textbf{w}$ by $-\log(\textbf{v}\cdot\textbf{w})$, we can
reexpress the problem of finding an optimal $k$-clustering as an unsupervised
clustering of the valuation vectors of the whole society into $k$ clusters.
Vectors $\textbf{b}_i$ will be the centers
which need to satisfy
$\sum_{i=1}^kn_i\textbf{b}_i=\textbf{1}$. Note that
$-\log(\textbf{v}\cdot\textbf{w})$ is not a metric.
\end{remark}

Our result is the following:
\begin{thm}
\label{thm:nash-main}
 With the above notations, if we refine the clustering $\mathcal{P}$ by
dividing cluster $\mathcal{P}_1$ into $t$ clusters resulting in a new
clustering $\mathcal{P}'$, we have,
\begin{equation}
 W_{\mathcal{P}} \leq W_{\mathcal{P}'} \leq W_{\mathcal{P}} 2^{n_1 I(\Vv_1;E)},
\end{equation}
\end{thm}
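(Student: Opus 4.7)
The lower bound $W_{\mathcal{P}} \le W_{\mathcal{P}'}$ is immediate: given any feasible $(\bv_1,\ldots,\bv_k)$ for $\mathcal{P}$, setting $\bv'_{1,j}=\bv_1$ for every sub-cluster $j\in[1{:}t]$ and $\bv'_i=\bv_i$ for $i\ge 2$ is feasible for $\mathcal{P}'$ (the constraint $\sum_j n_{1,j}\bv'_{1,j}+\sum_{i\ge 2}n_i\bv'_i=\mathbf 1$ reduces to the original) and achieves the same Nash CUF value. So the work is entirely in the upper bound, which I plan to attack via a ``portfolio-shrinking'' argument that parallels Cover's bound on the value of side information in log-optimal portfolios.

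The plan is as follows. Let $(\bv'^*_{1,1},\ldots,\bv'^*_{1,t},\bv'^*_2,\ldots,\bv'^*_k)$ achieve $W_{\mathcal{P}'}$, and write $q_j=n_{1,j}/n_1=\Pr[E=j]$. I will produce a \emph{feasible} candidate for $\mathcal{P}$ by averaging over the refinement: $\bv_1:=\sum_{j=1}^t q_j\,\bv'^*_{1,j}$ and $\bv_i:=\bv'^*_i$ for $i\ge 2$. Feasibility follows because $n_1\bv_1+\sum_{i\ge 2}n_i\bv'^*_i=\sum_j n_{1,j}\bv'^*_{1,j}+\sum_{i\ge 2}n_i\bv'^*_i=\mathbf 1$. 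Using that $\Vv_1\mid E=j$ is distributed as $\Vv_{1,j}$, the identity $n_1\,\ev{\log(\bv_1^{\!\top}\Vv_1)}=\sum_j n_{1,j}\,\ev{\log(\bv_1^{\!\top}\Vv_{1,j})}$ lets the non-cluster-1 contributions cancel, leaving
\[
\log W_{\mathcal{P}'}-\log W_{\mathcal{P}}
\;\le\;\sum_{j=1}^t n_{1,j}\,\ev{\log\tfrac{\bv'^{*\top}_{1,j}\Vv_{1,j}}{\bv_1^{\!\top}\Vv_{1,j}}}.
\]

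The main step is to recognise the right-hand side as $n_1 I(\Vv_1;E)$ minus a non-negative quantity. For each value $\vv$ in the support of $\Vv_1$, define the probability vector $r_j(\vv):=q_j\,\bv'^{*\top}_{1,j}\vv\big/\bv_1^{\!\top}\vv$, which sums to $1$ in $j$ since $\bv_1=\sum_{j'}q_{j'}\bv'^*_{1,j'}$. Then the log-ratio $\log(\bv'^{*\top}_{1,j}\vv/\bv_1^{\!\top}\vv)$ equals $\log(r_j(\vv)/q_j)$. After multiplying and dividing by $\Pr[E=j\mid \Vv_1=\vv]$ inside the log, the expectation splits into two relative entropies, yielding
\[
\sum_\vv \Pr[\Vv_1=\vv]\bigl(D(\Pr[E\mid \Vv_1=\vv]\,\|\,q)-D(\Pr[E\mid \Vv_1=\vv]\,\|\,r(\vv))\bigr).
\]
The first term, averaged over $\vv$, is precisely $I(\Vv_1;E)$, while the second is non-negative; multiplying by $n_1$ gives the desired bound $\log W_{\mathcal{P}'}-\log W_{\mathcal{P}}\le n_1 I(\Vv_1;E)$.

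The main obstacle I expect is purely bookkeeping: the quantities $\mu_{1,j}(\vv)=\bv'^{*\top}_{1,j}\vv$ and $\mu_1(\vv)=\bv_1^{\!\top}\vv$ are \emph{not} probability distributions (they are inner products with valuation vectors), so the trick of renormalising them into the conditional law $r_j(\vv)$ and then introducing $\Pr[E=j\mid \Vv_1=\vv]$ inside the logarithm is what converts the raw log-ratio into an $I-D$ decomposition. Once this algebraic rearrangement is in place, the non-negativity of the second KL is what makes the bound tight; this is also the analogue, in this clustering setting, of the classical fact that conditioning cannot increase a log-optimal portfolio's doubling rate by more than the mutual information with the side information.
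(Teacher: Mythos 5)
Your proof is correct, but it follows a genuinely different route from the paper's. The paper fixes the optimizers of \emph{both} problems and compares them directly: it needs a first-order optimality (Kuhn--Tucker type) condition for the coarse problem (its Lemma~\ref{lem:cuf-maximality}, which shows a certain weighted sum of return ratios $A$ is at most $1$), it generates the KL terms by multiplying and dividing by the likelihood ratio $p(\Vv_1=\vv)/p(\Vv_{1a}=\vv)$ inside the logarithm together with Jensen's inequality, and it proves the case $t=2$ first and then inducts on $t$ via the chain rule $I(\Vv_1;E)=I(\Vv_1;T)+I(\Vv_1;E|T)$. You instead build an explicit \emph{feasible} point for $\mathcal{P}$ by averaging the refined optimum, $\bv_1=\sum_j q_j\bv'^*_{1,j}$, so you only ever invoke feasibility of the coarse problem (no analogue of Lemma~\ref{lem:cuf-maximality} is needed), and your normalization of the weighted returns into the distribution $r_j(\vv)$ turns the excess gain into $n_1\sum_{\vv}p(\vv)\bigl(D(p(E|\vv)\,\|\,q)-D(p(E|\vv)\,\|\,r(\vv))\bigr)$, which is bounded by $n_1 I(\Vv_1;E)$ upon dropping the second (non-negative) divergence; this handles general $t$ in one shot. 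Your argument is the clean transplant of the classical ``side information increases the doubling rate by at most the mutual information'' proof and is shorter and arguably more transparent; the paper's argument is longer but yields the optimality characterization of Lemma~\ref{lem:cuf-maximality} as a by-product, which may be of independent interest. Both require the same implicit positivity of the inner products on the relevant supports, and your feasibility check ($n_1\bv_1+\sum_{i\ge2}n_i\bv'^*_i=\mathbf{1}$, with non-negativity preserved under convex combination) is sound.
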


\begin{remark}
Since $\mathcal{P}'$ is a refined version of $\mathcal{P}$, the
lower bound on $W_{\mathcal{P}'}$ is expected. To intuitively understand the
upper
bound, note that a good clustering of $\mathcal{P}_{1}$ puts valuation vectors
that are geometrically close to each other into the same cluster. Therefore
knowing that a person is in a certain cluster $\mathcal{P}_{1,E}$ for some $E$
should provide some information about the geometrical location of the valuation
vector of the person. Thus $I(\Vv_1;E)$ is large for a good clustering. However
a large $I(\Vv_1;E)$ does not necessarily imply a good clustering.
Such information theoretic interpretation of clustering (traditionally a topic
of data mining and machine learning) may be new (we have not seen it) and it may
be of independent interest.
\end{remark}

\begin{remark}
 The distribution of $p(\textbf{v}_1)$ is the empirical distribution of
 valuation vectors in $\mathcal{P}_1$ and $p(\textbf{v}_1|E=e)$ is the empirical
 distribution of valuation vectors in $\mathcal{P}_{1,e}$. $I(\Vv_1;E)$ is
 computable from these empirical distributions.
\end{remark}

\begin{proof}[Proof of Theorem~\ref{thm:nash-main}]
As we have already discussed, $W_{\mathcal{P}'}\geq W_{\mathcal{P}}$, and it
remains to prove the other
side. For simplicity, we assume that $t=2$; the case of $t>2$ is similar. Note that maximizing $W_{\mathcal{P}}$ is equal to maximizing
\begin{align*}\frac{1}{n}\log
W_{\mathcal{P}}=&\max_{\bv_{1:k}}\sum_{i}\frac{1}{n}\sum_{j=1}^{n_i}\log(\bv_i^t
\vv_{ij})=\max_{ \bv_ {
1:k}}\sum_{i}\alpha_i\frac{1}{n_i}\sum_{j=1}^{n_i}\log(\bv_i^t \vv_{ij})
\\=&\max_{\bv_{1:k}}\sum_{i}\alpha_i\mathbb{E}\left[\log(\bv_i^t \Vv_i)\right],
\end{align*}
Similarly,
\begin{equation*}
 \begin{split}
\frac{1}{n}\log W_{\mathcal{P}'}=\max_{\bv'_{1a}, \bv'_{1b},
\bv'_{2:k}} & \Big
(
\alpha_{1a}\mathbb{E}\left[\log(\bv_{1a}^{'t}\Vv_{1a})\right]+\alpha_{1b}\mathbb
{E}
\left[\log(\bv_{1b}^{'t}\Vv_{1b})\right]\\
& +\sum_{i=2:k}\alpha_i\mathbb{E}\left[\log(\bv_i^{'t}\Vv_i)\right] \Big ).  
 \end{split}
\end{equation*}
The above equations show that Nash Collective Utility is equivalent to the mathematics of
portfolio
selection problem in stock markets (see \cite{thomas1991elements}), if we interpret division
into groups as the equivalent to side information. With this in mind, the benefit of adding extra groups
equivalent to adding additional side information. If we show the original divisions into groups (previous side information) by rv $E'$, the new division into groups will be corresponding to random variables $(\tilde E, E')$ where we assume that $\tilde E$ is a constant when $E'\neq 1$, and $\tilde E=E$ when $E'=1$. Thus, from the increase in the exponent of the growth \cite{thomas1991elements}, we get
$$\frac{1}{n}\log
W_{\mathcal{P}}-\frac{1}{n}\log W_{\mathcal{P}'}\leq I (V; \tilde E|E')=\alpha_1 I(V;\tilde{E}|E'=1)=\alpha_1 I(V;E)=\frac{n_1}{n}I(V;E).$$

\end{proof}


\section{Proofs for explicit Divide and Choose}

\subsection{Proof for one-shot spying 
}\label{sec:dc-achievability-one-shot}

\begin{proof}[Proof of Theorem \ref{thm:one-shot}]

We employ the technique of \cite{yassaee2013} to provide a lower bound on the expected utility of Alice after spying about Bob's valuation. Let $\mc=\{U(j)\}_{j=1}^{\sj}$ be a random product codebook, in which  are drawn independently from $q_{U}$. 
 Let $\mb:[1:\sj]\mapsto[1:\sm]$ be a random mapping (binning), mapping each element of $[1:\sj]$ uniformly and independently to the set $[1:\sm]$.   

We draw an index $j\in[1:\sj]$ with the probability
\[
P_{\enc}(j|v_B)=\dfrac{2^{\im_q(v_B;U(j))}}{\sum_{\tilde{j}}2^{\im_q(v_B;U(\tilde{j}))}}=
\dfrac{q_{V_B|U}(v_B|U(j))}{\sum_{\tilde{j}}q_{V_B|U}(v_B|U(\tilde{j}))}.
\]  
where we have use capital $P$ since the above conditional pmf is itself random (due to the random codebook generation). Then, we transmits $m=\mb(j)$ to Alice as the spying content. 
Alice uses $m$ to draw $\hat j$ from the following pmf
 \[
P_{\dec}(\hat{j}|m, v_A)=\dfrac{2^{\im_q(v_A;U(\hat{j}))}\mathbf{1}\{B(\hat{j})=m\}}{\sum_{\bar{j}}2^{\im_q(v_A;U(\bar{j}))}\mathbf{1}\{B(\bar{j})=m\}},
\]
where  $\mathbf{1}[\cdot]$ is the indicator function. 
Finally, Alice produces $D$ from $q_{D|UV_A}(d|U(\hat{j}),v_A)$.

Observe that the joint distribution of random variables factors as,
\begin{align*}
P(v_A, v_B, m,j,\hat{j},d)=q(v_A, v_B)P_{\enc}(j|v_B)\mathbf{1}\{B({j})=m\}P_{\dec}(\hat{j}|m, v_A)q(d|U(\hat{j}),v_A).
\end{align*}
We now compute the expected value of Alice's gain over the random codebook and random binning:
\begin{align}\mathbb{E}&[\gainwrt{A}{D}{V_A}{V_B}]= \e_{\mc,\mb}\sum_{v_A,v_B,d,m,j,\hat{j}}P(v_A, v_B, m,j,\hat{j},d)\gainwrt{A}{d}{v_A}{v_B}\\&\geq
 \e_{\mc,\mb}\sum_{v_A,v_B,d,m,j}P(v_A, v_B, m,j,\hat{J}=j,d)\gainwrt{A}{d}{v_A}{v_B}
\nonumber\\&= \e_{\mc,\mb}\sum_{v_A,v_B,d,m,j}q(v_A, v_B)P_{\enc}(j|v_B)\mathbf{1}\{B({j})=m\}P_{\dec}({j}|m, v_A)q(d|U({j}),v_A)\gainwrt{A}{d}{v_A}{v_B}\nonumber
\\&= \e_{\mc,\mb}\sum_{v_A,v_B,d}\sm \sj \cdot q(v_A, v_B)P_{\enc}(1|v_B)\mathbf{1}\{B({1})=1\}P_{\dec}({1}|1, v_A)q(d|U(1),v_A)\gainwrt{A}{d}{v_A}{v_B}\label{eqn:symmetryuse}
\\&= \e_{\mc,\mb}\sum_{v_A,v_B,d}\sm \sj \cdot q(v_A, v_B)\dfrac{2^{\im_q(v_B;U(1))}}{\sum_{\tilde{j}}2^{\im_q(v_B;U(\tilde{j}))}}\dfrac{2^{\im_q(v_A;U(1))}\mathbf{1}\{B(1)=1\}}{\sum_{\bar{j}}2^{\im_q(v_A;U(\bar{j}))}\mathbf{1}\{B(\bar{j})=1\}}q(d|U(1),v_A)\gainwrt{A}{d}{v_A}{v_B}\nonumber
\\&= \sum_{v_A,v_B,d}\sm \cdot q(v_A, v_B)
\e_{U(1), B(1)} \e_{\mc,\mb|U(1), B(1)}
\dfrac{\sj 2^{\im_q(v_B;U(1))}}{\sum_{\tilde{j}}2^{\im_q(v_B;U(\tilde{j}))}}\dfrac{2^{\im_q(v_A;U(1))}\mathbf{1}\{B(1)=1\}}{\sum_{\bar{j}}2^{\im_q(v_A;U(\bar{j}))}\mathbf{1}\{B(\bar{j})=1\}}\times\nonumber
\\&\qquad\qquad\times q(d|U(1),v_A)\gainwrt{A}{d}{v_A}{v_B}\nonumber
\\&\geq \sum_{v_A,v_B,d}\sm \cdot q(v_A, v_B)
\e_{U(1), B(1)}
\dfrac{\sj 2^{\im_q(v_B;U(1))}}{ \e_{\mc,\mb|U(1), B(1)}\sum_{\tilde{j}}2^{\im_q(v_B;U(\tilde{j}))}}\dfrac{2^{\im_q(v_A;U(1))}\mathbf{1}\{B(1)=1\}}{ \e_{\mc,\mb|U(1), B(1)}\sum_{\bar{j}}2^{\im_q(v_A;U(\bar{j}))}\mathbf{1}\{B(\bar{j})=1\}}\times\nonumber
\\&\qquad\qquad\times q(d|U(1),v_A)\gainwrt{A}{d}{v_A}{v_B}\label{eqn:jensenxy}
\\&\geq \sum_{v_A,v_B,d}\sm \cdot q(v_A, v_B)
\e_{U(1), B(1)}
\dfrac{\sj 2^{\im_q(v_B;U(1))}}{\sj-1+ 2^{\im_q(v_B;U(1))}}\dfrac{2^{\im_q(v_A;U(1))}\mathbf{1}\{B(1)=1\}}{ (\sj-1)\sm^{-1}+
2^{\im_q(v_A;U(1))}\mathbf{1}\{B(1)=1\}}\times\nonumber
\\&\qquad\qquad\times q(d|U(1),v_A)\gainwrt{A}{d}{v_A}{v_B}\label{eqn:jensenxy-after}
\\&= \sum_{v_A,v_B,d} q(v_A, v_B)
\e_{U(1)}
\dfrac{\sj 2^{\im_q(v_B;U(1))}}{\sj-1+ 2^{\im_q(v_B;U(1))}}\dfrac{2^{\im_q(v_A;U(1))}}{ (\sj-1)\sm^{-1}+
2^{\im_q(v_A;U(1))}}q(d|U(1),v_A)\gainwrt{A}{d}{v_A}{v_B}\nonumber
\\&= \sum_{v_A,v_B,d, u} q(v_A, v_B, u)
\dfrac{\sj}{\sj-1+ 2^{\im_q(v_B;u)}}\dfrac{2^{\im_q(v_A;u)}}{ (\sj-1)\sm^{-1}+
2^{\im_q(v_A;u)}}q(d|u,v_A)\gainwrt{A}{d}{v_A}{v_B}\label{eqn:ewithU}
\\&= \e\big[
\dfrac{1}{1-\sj^{-1}+\sj^{-1} 2^{\im_q(V_B;U)}}\dfrac{1}{1+ (\sj-1)\sm^{-1}2^{-\im_q(V_A;U)}
}\gainwrt{A}{D}{V_A}{V_B}\big],\nonumber
\end{align}
where equation \eqref{eqn:symmetryuse} is due to the symmetry of the random codebook generation; equation \eqref{eqn:jensenxy} follows from Jensen's inequality for the jointly convex function $f(x_1, x_2)=1/(x_1x_2)$;  equation \eqref{eqn:jensenxy-after}  follows from the fact that for any $\tilde{j}\neq 1$
\begin{align*}\e_{\mc,\mb|U(1), B(1)}2^{\im_q(v_B;U(\tilde{j}))}=\sum_{u}q(u)\frac{q(v_B,u)}{q(v_B)q(u)}=1.\end{align*}
Similarly for any $\bar{j}\neq 1$
\begin{align*}\e_{\mc,\mb|U(1), B(1)}2^{\im_q(v_A;U(\bar{j}))}\mathbf{1}\{B(\bar{j})=1\}=\sum_{u}q(u)\frac{q(v_A,u)}{q(v_A)q(u)}\sm^{-1}=\sm^{-1}.\end{align*}
Equation \eqref{eqn:ewithU} follows from the fact that $\e_{U(1)}$ is over $q(u)$ and $$2^{\im_q(v_B;u)}=\frac{q(u|v_B)}{q(u)}$$ and $ q(v_A, v_B,u)= q(v_A, v_B)q(u|v_B)$.

The derivation of Bob's gain is similar. Note that the spying strategy of Alice is randomized here. Based on a private randomness (the random codebook and random binning), she decides her spying function. 

Deriving the loosened bound is as follows:
\begin{align*}&\e\big[
\dfrac{1}{1+\sj^{-1} 2^{\im_q(V_B;U)}}\dfrac{1}{1+ \sj\sm^{-1}2^{-\im_q(V_A;U)}
}\gainwrt{A}{D}{V_A}{V_B}\big]
\\&\geq \e\big[
\dfrac{\mathbf{1}[\left\{\log\sj-\im(V_B;U)\ge\gamma,\ \im(V_A;U)-\log(\sj\sm^{-1})\ge\gamma\right\}
]}{(1+\sj^{-1} 2^{\im_q(V_B;U)})(1+ \sj\sm^{-1}2^{-\im_q(V_A;U)})
}\gainwrt{A}{D}{V_A}{V_B}\big]
\\&\geq \dfrac{1}{(1+2^{-\gamma})^2}\e\big[
\mathbf{1}[\left\{\log\sj-\im(V_B;U)\ge\gamma,\ \im(V_A;U)-\log(\sj\sm^{-1})\ge\gamma\right\}
]\gainwrt{A}{D}{V_A}{V_B}\big]
\\&\geq (1-3\times 2^{-\gamma})\e\big[
\mathbf{1}[\left\{\log\sj-\im(V_B;U)\ge\gamma,\ \im(V_A;U)-\log(\sj\sm^{-1})\ge\gamma\right\}
]\gainwrt{A}{D}{V_A}{V_B}\big]
\\&= (1-3\times 2^{-\gamma})\e\bigg[\big(1-
\mathbf{1}[\left\{\log\sj-\im(V_B;U)\ge\gamma,\ \mathsf{or}\ \im(V_A;U)-\log(\sj\sm^{-1})\ge\gamma\right\}
]\big)\gainwrt{A}{D}{V_A}{V_B}\bigg]
\\&\geq (1-3\times 2^{-\gamma})\e\big[
\gainwrt{A}{D}{V_A}{V_B}\big]-\bar g(1-3\times 2^{-\gamma})\mathsf{P}[\left\{\log\sj-\im(V_B;U)\ge\gamma,\ \mathsf{or}\  \im(V_A;U)-\log(\sj\sm^{-1})\ge\gamma\right\}]
\end{align*}
\end{proof}

\subsection{Proof of Theorem \ref{thm:DC-region}}\label{sec:dc-achievability}
For achievability, we use the existing results on the empirical
coordination, which is summarized as
follows. Assume two terminals have i.i.d.\~samples of random variables $X_1$ and $X_2$
with joint pmf $p(x_1,x_2)$, i.e.\ $p(x_1^n, x_2^n) = \prod_{i=1}^n
p(x_{1,i}, x_{2,i})$. The goal is
to simulate the channel $p(y_1, y_2|x_1,x_2)$ and generate $Y_1$ and $Y_2$ in
terminals $1$ and $2$ respectively. Since the first terminal has only
access to $X_1$ while $Y_1$ is dependent on both $X_1$ and $X_2$, which is the
same story for the second terminal, the two terminals need to communicate
with some rate in order to gain information about the other terminal so that
they can simulate the channel. This process could involve multiple rounds of communication in general. 
The two
terminals need to generate jointly typical sequences of $Y_1^n$ and
$Y_2^n$ with $X_1^n, X_2^n$ with high probability. Substituting $X_1$ by $V_A$,
$X_2$ by $V_B$, $Y_1$ by $D$ and $Y_2$ by a constant, say $0$, one can observe from the result of  \cite{yassaee2012} that empirical coordination holds if $R>I(V_B;U|V_A)$.  Using properties of typical sequences,
\begin{equation}
 |\tilde{G}_A - G_A| = |\frac{1}{n}\sum_{i=1}^n
\gainwrt{A}{D_i}{V_{A,i}}{V_{B,i}} - \ev{\gainwrt{A}{D}{V_A}{V_B}}| \leq
\delta G_A \leq \delta,
\end{equation}
since gains are bounded by $1$. The same inequality holds for $\tilde{G}_B$.
This proves the achievability.

The converse has much in common with the proof of the converse in
\cite{yassaee2012} by setting $X_1 = V_A$, $X_2 = V_B$, $D=Y_1$ and $Y_2 = 0$
in their terminology. Assuming an $(n,R)$ code with communication variable $C$ and division $D^n$, 
we define the auxiliary random variables $U$ and $D$ as follows: take
$Q$ to be a random variable independent from all other random variables and
uniformly distributed in $[1:n]$ and
\begin{equation}
 \begin{gathered}
  U = C V_{A_{[Q+1:n]}} V_{B_{[1:Q-1]}} Q, \\
  D =  D_Q,
 \end{gathered}
\end{equation}
note that since $Q$ is uniform and independent from all other random variables
and $V_A^n$ and $V_B^n$ are i.i.d.\ therefore $V_{A_Q} = V_A$ and $V_{B_Q} =
V_B$. Showing the Markov chains $V_A\rightarrow V_B\rightarrow U$ and $V_B\rightarrow V_AU\rightarrow D$ and the inequality $I(U;V_A|V_B)\leq H(C)/n$ 
are identical to that of \cite{yassaee2012} and thus omitted from here. Finally, note that 
\begin{equation}
 \begin{split}
  \left |\ev{\gainwrt{A}{D}{V_A}{V_B}} - G_A \right | &=
\left |\ev{\gainwrt{A}{D_Q}{V_{A,Q}}{V_{B,Q}}} - G_A\right | \\
&=\left | \frac{1}{n} \sum_{q=1}^{n} \ev{\gainwrt{A}{D_q}{V_{A,q}}{V_{B,q}}} -
G_A \right |\\
&= |\tilde{G}_A - G_A| < \delta.
 \end{split}
\end{equation}
Following a similar procedure $|\tilde{G}_B - G_B|<\delta$.

\subsection{Proof of Theorem \ref{thm:sharedrandomness-DC}}\label{APP:thm:sharedrandomness-DC}
\begin{proof}[Proof of Theorem \ref{thm:sharedrandomness-DC}] Let us fix some alphabet set $\mathcal{S}$ and pmf $p(s)$ for the shared randomness. Then, the strategy of Bob is $q(c|v_{B}^n,s)$ where $c$ is the message on the alphabet set $[1:2^{nR}]$. Alice's strategy is $q(d^n|v_{A}^n, c,s)$ where $d_i$ is the division by Alice in $i$-th game. For choosing $D_i$, Alice should only see how much it has learnt from $C, S$ and $V_{Ai}$ about $V_{Bi}$. This is due to the fact that the gain in the $i$-th stage depends only on the conditional pmf of  $V_{Ai}$ about $V_{Bi}$ given $C=c$ and $S=s$, \emph{i.e.,} only the marginal conditional distribution $q(v_{Ai}, v_{Bi}|c,s)$ matters. In other words, if we fix some action $q(c|v_B^n,s)$ for Bob, instead of the original identical distributions on $(V_{Ai}, V_{Bi})$, the two players play the cut and choose game with the modified pmfs $q_i(v_{A}, v_{B})=q(v_{Ai}, v_{Bi}|c,s)$ for $i\in[n]$. Observe that the Markov chain $C\rightarrow V_{B}^nS\rightarrow V_{A}^n$, independence of $S$ from $V_{B}^n, V_{A}^n$,  and the fact that $(V_{B}^n, V_{A}^n)$ are i.i.d., imply that $CS\rightarrow V_{Bi}\rightarrow V_{Ai}$ for $i\in[1:n]$. Hence,
$$q(v_{Ai}, v_{Bi}|cs)=q(v_{Bi}|cs)(v_{Ai}| v_{Bi}, cs)=q(v_{Bi}|cs)q(v_{Ai}| v_{Bi}).$$
Therefore, conditioning on $c$ and $s$ changes only the marginal distribution on $V_{Bi}$, and the conditional distribution of $V_{Ai}$ given $V_{Bi}$ is not affected.

Fix some action $p(c|v_B^n, s)$ for Bob. After revealing $C=c, S=s$, the marginal distribution of $(V_{Ai}, V_{Bi})$ reduces to $q(v_{Ai}, v_{Bi}|c,s)$. Alice plays her best response in the $i$-th game, and a rate pairs from  $\mathsf{G}(q(v_{Ai}, v_{Bi}|c,s))$ will occur.
We assume that Alice's choice of her best response is such that $$(G_A(q(v_{Ai}, v_{Bi}|c,s)), G^*_B(q(v_{Ai}, v_{Bi}|c,s))) \in \mathsf{G}(q(v_{Ai}, v_{Bi}|c,s))$$ occurs.  Then, let $r(c|v_B^n,s)$ be a maximizer of  
\begin{align}\max_{r(c|v_B^ns)}\frac 1n\sum_{i=1}^n\sum_{c,s}r(c,s)G^*_B(r(v_{Ai}, v_{Bi}|c,s))\label{eqn2gt},\end{align}
where the expression is computed with respect to $r(c,s,v_A^n,v_B^n)=r(c|v_B^n,s)q(v_A^n,v_B^n)p(s)$. Then $r(c|v_B^n,s)$ will lead to an equilibrium; Alice is always playing one of her best responses and Bob has maximized his payoff by choosing  $r(c|v_B^n,s)$. The gain of Alice will then become
\begin{align}\frac 1n\sum_{i=1}^n\sum_{c,s}r(c,s)G_A(r(v_{Ai}, v_{Bi}|c,s))\label{eqn2gtAlice}.\end{align}
It remains to show that the payoffs given in equations \eqref{eqn2gt} and \eqref{eqn2gtAlice} can be related to the ones given in the statement of the theorem in equation \eqref{eqn;rate;gain;pair}. We argue that the following two claims establish our desired result: Claim 1 is that for any $n$, any $p(s)$, and any $r(c|v_B^n,s)$ for $ |\mathcal C|\leq 2^{nR}$, we have that 
\begin{align}\frac 1n\sum_{i\in[n]}\sum_{c,s}r(c,s)G^*_B(r(v_{Ai}, v_{Bi}|c,s))\leq G_{Bmax}.\end{align} 
Therefore, the payoff of Bob in the Nash equilibriums that we have defined in  equation \eqref{eqn2gt} cannot exceed $G_{Bmax}$. Claim 2 is that
for any $\epsilon>0$, any arbitrary $q(\tilde c|v_B)$ where $I(\tilde C;V_B)= R-\epsilon$, we can find a sufficiently large $n$, shared randomness $p(s)$, and some $q(c|v_B^n,s)$ where $|\mathcal C|\leq 2^{nR}$ such that the gain of player $B$, 
$$Gain_B=\frac 1n\sum_{i=1}^n\sum_{c,s}q(c,s)G^*_B(q(v_{Ai}, v_{Bi}|c,s))$$
 satisfies
\begin{align}|Gain_B-\sum_{\tilde c}q(\tilde c)G^*_B(q( v_{A},  v_{B}|\tilde c))|\leq \epsilon,\label{eqn:eq5gt}\end{align}
and the gain of the player A,
$$Gain_A=\frac 1n\sum_{i=1}^n\sum_{c,s}p(c,s)G_A(q(v_{Ai}, v_{Bi}|c,s))$$
 satisfies
\begin{align}|Gain_A-\sum_{\tilde c}q(\tilde c)G_A(q( v_{A},  v_{B}|\tilde c))|\leq \epsilon.\label{eqn:eq5bgt}\end{align}
The above two claims prove our result. The reason is that by Claim 1, Bob can never expect to have a payoff larger than $G_{Bmax}$. Then if we choose some $q(\tilde c|v_B)$ where $I(\tilde C;V_B)\leq R-\epsilon/2$ such that $\sum_{\tilde c}q(\tilde c)G^*_B(q( v_{A},  v_{B}|\tilde c))$ is within $\epsilon/2$ of $G_{Bmax}$, Claim 2 shows that the resulting strategy of Bob will be within $\epsilon$ of $G_{Bmax}$. Therefore, it has to be an $\epsilon$-equilibrium from the perspective of Bob. Since Alice is always performing her best response cut (as we consider $G_A(q(v_{Ai}, v_{Bi}|c,s))$), she does not have any incentive to change her actions. Therefore, the payoffs given in equations \eqref{eqn2gt} and \eqref{eqn2gtAlice} can be made within $\epsilon$ distance of the rate pair given in equation \eqref{eqn;rate;gain;pair}.

\emph{Proof of Claim 1:}  Let $Q$ be a time-sharing variable, uniform on $[1:n]$ and independent of previously defined variables. Setting $\tilde{C}=(C,Q,S)$, $\tilde{V}_B=V_{BQ}, \tilde{V}_A=V_{AQ}$ we get that
$$\sum_{\tilde c}q(\tilde c)G^*_B(q(\tilde v_{A}, \tilde v_{B}|\tilde c))=\frac 1n\sum_{i\in[n]}\sum_{c,s}q(c,s)G^*_B(q(v_{Ai}, v_{Bi}|c,s)).$$
Furthermore, the joint pmf of $(\tilde{V}_A, \tilde{V}_B)$ is $q(v_A, v_B)$. Finally, the inequality $R\geq  I(\tilde C;\tilde V_{B})$ holds since \begin{align}nR&\geq H(C)\nonumber\\&\geq I(C;V_B^n|S)\nonumber
\\&=I(CS;V_B^n)\label{eqn:usseS}
\\&=\sum_{i=1}^nI(CS;V_{Bi}|V_{B1:i-1})\nonumber
\\&=\sum_{i=1}^nI(CSV_{B1:i-1};V_{Bi})\nonumber
\\&\geq \sum_{i=1}^nI(CS;V_{Bi})\nonumber
\\&=n\cdot I(CS;V_{BQ}|Q)\nonumber
\\&=n\cdot I(CSQ;V_{BQ})\nonumber
\\&=n\cdot  I(\tilde C;\tilde V_{B}),\nonumber\end{align}
where \eqref{eqn:usseS} follows from the independence of shared randomness $S^n$ from $V_B^n$.
\emph{Proof of Claim 2:} Let $(\tilde C^n, V_A^n, V_B^n)$ be $n$ i.i.d. repetitions according to $q(\tilde c, v_A, v_B)=q(\tilde c, v_B)q(v_A|v_B)$. Then clearly,
\begin{align}\sum_{\tilde c}q(\tilde c)G_A(q(v_{Ai}, v_{Bi}|\tilde c))&=\frac 1n\sum_{i=1}^n\sum_{\tilde c^n}q(\tilde c^n)G_A(q(v_{Ai}, v_{Bi}|\tilde c_i)),\label{eqnlast-2}
\\\sum_{\tilde c}q(\tilde c)G^*_B(q(v_{Ai}, v_{Bi}|\tilde c))&=\frac 1n\sum_{i=1}^n\sum_{\tilde c^n}q(\tilde c^n)G^*_B(q(v_{Ai}, v_{Bi}|\tilde c_i)).\label{eqnlast-1}
\end{align}
The problem is that the alphabet set of $\tilde C^n$ can be much larger than $2^{nR}=2^{n(I(V_B;\tilde C)+\epsilon)}$. In the rest of the proof, we show how to reduce the cardinality of the message to around $2^{n(I(V_B;\tilde C)+\epsilon)}$.

Let $B_1$ and $B_2$ be two random bin index of $\tilde C^n$ at rates $2^{n(H(\tilde C|V_B)-\frac{\epsilon}{2})}$ and $[1:2^{n(I(V_B;\tilde C)+\epsilon)}]$ respectively. Then, for almost all choices of the random binning, one can recover $\tilde C^n$ from $(B_1, B_2)$ with probability $1-\epsilon$ (Slepian-Wolf) via a deterministic mapping $q^{SW}( \tilde c^n|b_1,b_2)$ such that
\begin{align}\|q^{SW}( \tilde c^n|b_1,b_2)q(b_1, b_2)-q(\tilde c^n, b_1, b_2)\|\leq \epsilon.\label{eqn:fina1}\end{align} Further by the OSRB lemma \cite[Theorem 1]{OSRB}, $B_2$ is almost independent of $V_B^n$:
\begin{align}\|q(b_2,v_B^n)-q(b_2)q(v_B^n)\|\leq \epsilon.\label{eqn:fina2}\end{align}
Furthermore, because $B_1$ and $B_2$ are functions of $\tilde{C}^n$, we have $q(b_1,b_2,v_A^n, v_B^n)=q(b_1,b_2, v_B^n)q(v_A^n|v_B^n)$. From equation  \eqref{eqn:fina2}, we have
\begin{align}\|q(b_1,b_2, v_A^n, v_B^n)-q(b_2)q(v_A^n,v_B^n)q(b_1|v_B^nb_2)\|\leq \epsilon.\label{2epseqn-1}\end{align}
Then, from equations \eqref{eqn:fina1} and the fact that $q^{SW}( \tilde c^n|b_1,b_2)$ is a deterministic mapping, we get
\begin{align}\|q(b_1,b_2, \tilde c^n, v_A^n, v_B^n)-q(b_2)q(v_A^n,v_B^n)q(b_1|v_B^nb_2)q^{SW}( \tilde c^n|b_1,b_2)\|\leq 2\epsilon.\label{2epseqn}\end{align}

Let us assume the following alternative desirable scenario with $B_2$ as a shared randomness between the two players. Shared randomness $B_2$ is independent of $(V_A^n, V_B^n)$ and jointly distributed according to $q(b_2)q(v_A^n,v_B^n)$. Message $B_2$ is created by Bob from $(V_B^n, B_2)$ according to $q(b_1|v_B^nb_2)$ and sent to Alice; the rate of this message is $R$. Alice uses $B_1$ and $B_2$ to recover $\tilde{C}^n$ via the deterministic function $q^{SW}( \tilde c^n|b_1,b_2)$. The joint pmf induced by the alternative scenario will be $q(b_2)q(v_A^n,v_B^n)q(b_1|v_B^nb_2)q^{SW}( \tilde c^n|b_1,b_2)$, which is in $2\epsilon$ total variation distance of the original i.i.d.\ pmf by equation \eqref{2epseqn}. With probability $1-2\epsilon$, the two scenarios are not statistically distinguishable. Hence,
\begin{align}\label{eqnlast0}\bigg|\frac 1n\sum_{i=1}^n\sum_{b_1, b_2}q(b_1, b_2)G^*_B(q(v_{Ai}, v_{Bi}|b_1, b_2))-\frac 1n\sum_{i=1}^n\sum_{\tilde c^n}q(\tilde c^n)G^*_B(q(v_{Ai}, v_{Bi}|\tilde c))\bigg|\leq 2\epsilon \bar{g}_B,\end{align}
where $\bar{g}_B$ is a universal upper bound on $G^*_B(p(\cdot, \cdot))$ for all pmfs $p(\cdot, \cdot)$. 
Using equation  \eqref{eqnlast-1}, we obtain
\begin{align}\label{eqnlast0}\bigg|\frac 1n\sum_{i=1}^n\sum_{b_1, b_2}q(b_1, b_2)G^*_B(q(v_{Ai}, v_{Bi}|b_1, b_2))-\sum_{\tilde c}q(\tilde c)G^*_B(q(v_{Ai}, v_{Bi}|\tilde c))\bigg|\leq 2\epsilon \bar{g}_B.\end{align}
Similarly, we have
\begin{align}\bigg|\frac 1n\sum_{i=1}^n\sum_{b_1, b_2}q(b_1, b_2)G_A(q(v_{Ai}, v_{Bi}|b_1, b_2))- \sum_{\tilde c}q(\tilde c)G_A(q(v_{Ai}, v_{Bi}|\tilde c))\bigg|\leq 2\epsilon \bar{g}_A,\label{eqnlast1n}\end{align}
where $\bar{g}_A$ is a universal upper bound on $G_A(p(\cdot, \cdot))$ for all pmfs $p(\cdot, \cdot)$. The alternative scenario works for us because by taking communication variable as $B_1$ and shared randomness $S$ as $B_2$, we obtain the desired result.

\end{proof}

\section{Proof of Trembling Hand Perfect Equilibrium (THP) 
\label{sec:app-THP}}
Let $\Sab=(\Sa,\Sb)$ denote the strategy given in Section \ref{sec:THPMR}. In order to show that this pair of strategy is THP, we should introduce a
sequence of completely mixed strategies converging to $\Sab$ where $\Sab$ 
should be the best response at all information sets for every element of the 
sequence. Define completely mixed strategy
pair $\Se=(\Sae,\Sbe)$ as follows. At any given information set, the player who
should continue the game chooses the action given by $\Sab$ with probability
$1-\epsilon$ and the other possible action with probability $\epsilon$. Then
with $\epsilon \rightarrow 0$, $\Se$ converges to $\Sab$. We will show that for
$\epsilon$ small enough, $\Sab$ is the best response at any information set
given $\Se$ for other information sets, yielding the desired sequence of
completely mixed strategies.

For doing so, for any given information set $\is$, we fix strategy
$S^\epsilon$ for  information sets $\hat{\is} \neq \theta$ and find the
optimal action at $\is$. This action turns out to be the action given by
strategy $S$. 
$\theta$ can be an information set of Alice or Bob.
In the following, we first analyze Alice's information sets and 
then we will go through Bob's information sets in Section~\ref{sec:Bob-IS}.

\subsection{Alice's Information Sets \label{sec:Alice-IS}}
Assume $\is = (g_A, a^{k-1},b^{k-1})$ is a given information set for
Alice. Assume $n_g(k)$ and $n_l(k)$ are the number of gains and losses of Alice in 
this information set as was defined in \eqref{eq:ng-nl} which could be 
computed by having $\theta$. We shall fix strategy 
$S^\epsilon$ for all other information sets and find the optimal strategy in 
$\is$. Such a strategy would be relevant 
only if we pass through $\theta$.
Upon reaching $\theta$, Alice's strategy would be a combination of playing 
$\risky$ and
$\nonrisky$. However, since the resulting maximization is linear and hence 
could be restricted to 
pure strategies in this information set, it suffices to show that the 
action given by strategy $S$ in $\is$ is optimal. 
Also, note that since $\theta$ is given, Alice's gains at stages $1$ through 
$k-1$ could be deterministically calculated. Therefore, we only need to consider 
gains at stages $k$ through $n$ and show that playing according to the strategy in $\is$ is optimal. In other words, we want to show that for sufficiently small $\epsilon$, the maximum
\[
 \max_{a_k \in \{ \risky , \nonrisky\}} \evwrt{S^\epsilon}{\ga{ (a_k, 
A_{[k+1:n]}), B_{[k:n]}}| \theta}
\]
is taken for $a_k = \risky$ if $n_g(k) \geq n_l(k)$, and $a_k = \nonrisky$ if $n_g(k)> 
n_l(k)$ where 
$n_g(k)$ and $n_l(k)$ are the gains and losses of Alice which can be computed  from 
$\theta = (g_A, a^{k-1}, b^{k-1})$. Let
\begin{align}
\begin{split}
 f_\risky(\epsilon) &= 
\evwrt{S^\epsilon}{\ga{(A_k=\risky,
A_{[k+1:n]}),B_{[k:n]}}|\theta} \\
f_\nonrisky(\epsilon) &= 
\evwrt{S^\epsilon}{\ga{(A_k=\nonrisky,
A_{[k+1:n]}),B_{[k:n]}}|\theta}.
\end{split}
\end{align}
Observe that in $f_\nonrisky(\epsilon)$, since $A_k = \nonrisky$, Bob has no choice in that stage of the game and the value of $B_k$ is irrelevant. We need to compare $f_\nonrisky(\epsilon)$ and $f_\risky(\epsilon)$ for $\epsilon$ sufficiently small. 

Observe that
\[
\begin{split}
 f_\risky(\epsilon) &= p(T =0 | \theta) 
\evwrt{S^\epsilon}{\ga{(A_k=\risky,
A_{[k+1:n]}),B_{[k:n]}}|\theta, T=0}
\\
& \qquad + p(T =1 | \theta) 
\evwrt{S^\epsilon}{\ga{(A_k=\risky,
A_{[k+1:n]}),B_{[k:n]}}|\theta, T=1}
\end{split}
\]
A similar expression could be written for $f_\nonrisky(\epsilon)$. In order 
to make the comparison, we need the 
following lemma:
\begin{lem}\label{lemma:NA1}
 If $n_g(k) < n_l(k)$ then 
 \[
  \lim_{\epsilon \rightarrow 0} \frac{p(T=1|\theta)}{p(T=0|\theta)} = 0.
 \]
Also if $n_l(k) < n_g(k)$ then
\[
  \lim_{\epsilon \rightarrow 0} \frac{p(T=0|\theta)}{p(T=1|\theta)} = 0,
\]
and finally if $n_g(k) = n_l(k)$ then for all values of $\epsilon$ we have 
\[
\frac{p(T=0|\theta)}{p(T=1|\theta)} = \frac{p(T=0|g_A)}{p(T=1|g_A)}.
\]
\end{lem}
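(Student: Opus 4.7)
The plan is to compute $p(T{=}1|\theta)/p(T{=}0|\theta)$ explicitly via Bayes' rule and pass to the limit. Writing $\theta = (g_A, a^{k-1}, b^{k-1})$,
\[
\frac{p(T{=}1|\theta)}{p(T{=}0|\theta)} \;=\; \frac{p(T{=}1|g_A)}{p(T{=}0|g_A)} \cdot \frac{p(a^{k-1},b^{k-1}\,|\,g_A, T{=}1)}{p(a^{k-1},b^{k-1}\,|\,g_A, T{=}0)},
\]
so the task reduces to computing the likelihood ratio of the action history under the two hypotheses. The hypothesis $p(T{=}t|g_A)\neq 0$ in the theorem statement ensures that all conditional probabilities are well-defined.

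The next step is to factor the likelihood stage by stage. At stage $i$, Alice's action $a_i$ is drawn under $S_A^\epsilon$, which depends only on $g_A$ and the past history $(a^{i-1},b^{i-1})$, not on $T$. Hence in the ratio of likelihoods the Alice-factors at every stage cancel, and only Bob's action factors at the \emph{risky} stages contribute. At a risky stage, under $S_B^\epsilon$ Bob plays selfishly with probability $1-\epsilon$ and nonselfishly with probability $\epsilon$. The key observation is that, given $g_A$ and $T$, one can identify which of $\Bleft,\Bright$ is the selfish choice: when $T{=}1$, Alice and Bob prefer the same piece, so Bob's selfish play corresponds precisely to Alice receiving $g_A^{\max}$ (a ``gain'' in Alice's bookkeeping); when $T{=}0$, his selfish play corresponds to Alice receiving $g_A^{\min}$ (a ``loss'').

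Plugging this in, each of the $n_g(k)$ risky stages in the history contributes a factor of $1-\epsilon$ under $T{=}1$ and $\epsilon$ under $T{=}0$, while each of the $n_l(k)$ stages contributes $\epsilon$ under $T{=}1$ and $1-\epsilon$ under $T{=}0$. Multiplying,
\[
\frac{p(a^{k-1},b^{k-1}\,|\,g_A, T{=}1)}{p(a^{k-1},b^{k-1}\,|\,g_A, T{=}0)} \;=\; \Bigl(\tfrac{1-\epsilon}{\epsilon}\Bigr)^{n_g(k)-n_l(k)}.
\]
From here the three cases are immediate: if $n_g(k)<n_l(k)$ the exponent is negative and $(1-\epsilon)/\epsilon\to\infty$ forces the ratio to $0$; if $n_g(k)>n_l(k)$ the reciprocal ratio tends to $0$; and if $n_g(k)=n_l(k)$ the exponent is $0$ and the ratio equals $p(T{=}1|g_A)/p(T{=}0|g_A)$ for \emph{every} $\epsilon$, not just in the limit.

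The only real subtleties are bookkeeping ones: correctly identifying the selfish/nonselfish interpretation of each observed $b_i$ in terms of $(g_A,T)$, and checking that Alice's $S_A^\epsilon$ factors genuinely do not depend on $T$ so they cancel. These are straightforward once one keeps careful track of the definitions of $n_g(k), n_l(k)$, and of the random variable $T$, so I do not anticipate a substantive obstacle beyond the decomposition above.
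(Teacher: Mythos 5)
Your proposal is correct and follows essentially the same route as the paper's proof: Bayes' rule to reduce to the likelihood ratio of the history, cancellation of Alice's strategy factors since they do not depend on $T$, and identification of Bob's selfish/nonselfish play with Alice's gains/losses under each hypothesis, yielding the ratio $\left(\frac{1-\epsilon}{\epsilon}\right)^{n_g(k)-n_l(k)}$. No gaps.
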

We provide the proof of this lemma later but observe that this lemma has an intuitive interpretation: when $n_g(k) > n_l(k)$, it is 
more probable that $T=1$, since $T=0$ results in more losses to Alice, 
therefore it suffices for Alice to assume $T=1$ and perform her strategy 
accordingly, which is $\risky$ in this case. On the other hand, when $n_g(k) < 
n_l(k)$, Alice is better off assuming $T=0$ and play $\nonrisky$. When $n_g(k) = 
n_l(k)$, the posteriors $p(T=0|g_A)$ and $p(T=1|g_A)$ is Alice's belief on $T$.

This lemma implies that to compare $f_\risky(\epsilon)$ with $f_\nonrisky(\epsilon)$ 
for small $\epsilon$, when $n_g(k) > n_l(k)$ we need to compare 
\[
 \hat{f}_X(\epsilon) = \evwrt{S^\epsilon}{\ga{(A_k=X, 
A_{[k+1:n]}),B_{[k:n]}}|\theta, T=1},
\]
for two value of $X = \risky$ and $X = \nonrisky$. Similarly, when $n_g < n_l$ 
we need to compare
\[
 \hat{f}_X(\epsilon) = \evwrt{S^\epsilon}{\ga{(A_k=X,
A_{[k+1:n]}),B_{[k:n]}}|\theta, T=0},
\]
and when $n_g = n_l$ we should compare
\[
\begin{split}
 \hat{f}_X(\epsilon) &= p(T=0|g_A) 
\evwrt{S^\epsilon}{\ga{(A_k=X,
A_{[k+1:n]}),B_{[k:n]}}|\theta, T=0} \\
& \qquad +
p(T=1|g_A) \evwrt{S^\epsilon}{\ga{(A_k=X,
A_{[k+1:n]}),B_{[k:n]}}|\theta, T=1}.
\end{split}
\]
In the sequel, we do the comparisons for these three cases separately.

\subsubsection{$n_g(k) = n_l(k)$}
In this case, we show that $\hat{f}_\risky(0) > \hat{f}_\nonrisky(0)$ which 
implies that for small values of $\epsilon$, we have
$\hat{f}_\risky(\epsilon) > \hat{f}_\nonrisky(\epsilon)$. To do so, note that setting $\epsilon=0$ is equivalent to setting 
$A_{[k+1:n]}$ and $B_{[k+1:n]}$ to the path given by strategy $S$.

In order to compute $\hat{f}_\risky(0)$, note that  
such a path could be obtained by noting that when $T=0$, Alice plays $\risky$ 
in stage $k$ which results to a loss (since Bob plays $\selfish$ and $T=0$) and 
from then on, Alice plays $\nonrisky$ resulting in an overall gain of $\gamin + 
(n-k)/2$. On the other hand, when $T=1$, Alice will always gain and $n_g$ will 
always remain greater than $n_l$, yielding an overall gain of $(n-k+1)\gamax$. 
Therefore
\[
 \hat{f}_\risky(0) = p(T=0|g_A) \left ( \gamin + \frac{n-k}{2} \right ) + 
p(T=1|g_A)  (n-k+1) \gamax.
\]

$\hat{f}_\nonrisky(0)$ could be computed similarly. Note that when $T=0$, since 
we have assumed that Alice plays $\nonrisky$ in stage $k$, he gains $1/2$ in 
that stage, leaving $n_g = n_l$. If $k<n$, Alice risks in stage $k+1$ and will 
play $\nonrisky$ in the remaining stages. This will give her a total gain of
\[
\frac{1}{2} + \one{k<n} \gamin + \frac{1}{2} \pp{n-k-1}. 
\]
On the other hand, when $T=1$, Alice gains $1/2$ in stage $k$ and then will 
risk at the remaining stage and will gain $\gamax$ yielding an overall gain of 
$\frac{1}{2} + (n-k) \gamax$. Hence,
\[
 \hat{f}_\nonrisky (0) = p(T=0|g_A) \left( \frac{1}{2} + \one{k<n} \gamin + 
\frac{1}{2} \pp{n-k-1} \right) + p(T=1|g_A) \left( \frac{1}{2} + (n-k) \gamax 
\right).
\]

In order to compare these two values, note that
\[
\begin{split}
 \hat{f}_\risky(0) - \hat{f}_\nonrisky(0) &= p(T=0|G_A = g_A) \left ( \one{k=n} 
\gamin + 
\frac{1}{2}\np{n-k-1} \right ) + p(T=1|G_A = g_A) \left (\gamax - 
\frac{1}{2}\right) \\
&= \begin{cases}
    p(T=0|G_A = g_A) \left ( \gamin - \frac{1}{2} \right ) + p(T=1|G_A = g_A) 
\left ( \gamax - \frac{1}{2} \right ) & k=n \\
  p(T=1|G_A = g_A) \left ( \gamax - \frac{1}{2} \right ) & k < n
  \end{cases} \\
  & \stackrel{(a)}{\geq} p(T=0|G_A = g_A) \left (\gamin - \frac{1}{2} 
\right ) + p(T = 1 | G_A = g_A) \left ( \gamax - \frac{1}{2} \right ) \\
 & \stackrel{(b)}{>} 0
\end{split}
\]
where $(a)$ exploits the fact that $\gamin \leq 1/2$ and $(b)$ uses our 
main assumption
\eqref{eq:one-stage-risk}. Since this is strictly greater than zero, we are 
done in this case.

\subsubsection{$n_g(k) > n_l(k)$}
As was discussed before, we should only consider the terms corresponding to 
$T=1$. Quite similar to the discussion of the last section, we have
\[
 \hat{f}_\risky(0) = (n-k+1) \gamax,
\]
and
\[
 \hat{f}_\nonrisky(0) = \frac{1}{2} + (n-k) \gamax.
\]
Therefore
\[
\hat{f}_\risky(0) - \hat{f}_\nonrisky(0) = p(T = 1 | G_A = g_A) \left ( \gamax 
- 
\frac{1}{2} \right ) > 0
\]
which is strictly greater than zero and we are done.

\subsubsection{$n_g(k) < n_l(k)$}
Considering terms corresponding to $T=0$ we have
\[
 \hat{f}_\risky(0) = \gamin + \frac{1}{2} (n-k),
\]
and
\[
 \hat{f}_\nonrisky(0) = \frac{1}{2} (n-k+1),
\]
hence,
\[
 \hat{f}_\nonrisky(0) - \hat{f}_\risky(0) = p(T=0 | G_A = g_A) \left ( 
\frac{1}{2} - 
\gamin \right ) > 0
\]
and we are done.

\begin{proof}[Proof of Lemma \ref{lemma:NA1}]
First assume $n_g(k) < n_l(k)$. Using Bayes rule we have
\begin{equation}
\label{eq:frac-limit-simplified}
\frac{p(T=1|\theta)}{p(T=0|\theta)} = \frac{p(T=1|g_A, a^{k-1}, 
b^{k-1})}{p(T=0|g_A, a^{k-1}, b^{k-1})} = \frac{p(a^{k-1}, b^{k-1}|g_A, 
T=1)}{p(a^{k-1}, b^{k-1}|g_A, T=0)} \frac{p(T=1|g_A)}{p(T=0|g_A)}.
\end{equation}
Now for $t \in \{0,1\}$ we have
\[
 p(a^{k-1}, b^{k-1} | g_A, T=t) = \prod_{i=1}^{k-1} p(a_i| g_A, T=t, a^{i-1}, 
b^{i-1}) p(b_i|g_A, T=t, a^{i}, b^{i-1}).
\]
Note that the term $p(a_i| g_A, T=t, a^{i-1}, 
b^{i-1})$ corresponding to Alice's strategy is independent from the value of 
$t$, since Alice only looks at $g_A$ and determines the number of gains and 
losses from the sequence of actions to determine her action. Therefore
\[
 \frac{p(a^{k-1}, b^{k-1} | g_A, T=1)}{p(a^{k-1}, b^{k-1} | g_A, T=0)} = 
\frac{\prod_{i=1}^{k-1} p(b_i|g_A, T=1, a^{i}, b^{i-1})}{\prod_{i=1}^{k-1} 
p(b_i|g_A, T=0, a^{i}, b^{i-1})}.
\]
We know that when Alice plays $\risky$, Bob always plays $\selfish$ with 
probability $1-\epsilon$ and $\nonselfish$ with probability $\epsilon$. As a 
result, in case $T=0$, we can conclude that when Alice looses, Bob has played 
$\selfish$. The number of such stages is $n_l$, contributing a term 
$(1-\epsilon)^{n_l(k)}$ to the probability. Furthermore, when Alice gains, Bob has 
played $\nonselfish$ which contributes a term $\epsilon^{n_g(k)}$. The case $T=1$ is 
similar. Therefore
\[
 \frac{\prod_{i=1}^{k-1} p(b_i|g_A, T=1, a^{i}, b^{i-1})}{\prod_{i=1}^{k-1} 
p(b_i|g_A, T=0, a^{i}, b^{i-1})} = \frac{\epsilon^{n_l(k)} 
(1-\epsilon)^{n_g(k)}}{\epsilon^{n_g(k)} (1-\epsilon)^{n_l(k)}}.
\]
Putting this into \eqref{eq:frac-limit-simplified} and sending $\epsilon 
\rightarrow 0$ we get the desired result. The two other cases are similar.
\end{proof}

\subsection{Bob's Information Sets \label{sec:Bob-IS}}
Assume an information set $\is = (\vb, a^k, b^{k-1})$ of Bob is given. If $a_k 
= 
\nonrisky$, then Bob has no choice, therefore assume that $a_k  = \risky$. 
We shall fix strategy 
$S^\epsilon$ for all other information sets and find the optimal strategy in 
$\is$.
Such a strategy would be relevant 
only if we pass through $\theta$.
Upon reaching $\theta$, Bob's strategy would be a combination of playing 
$\selfish$ and
$\nonselfish$. However, following the same discussion we had for Alice, since the resulting maximization is linear and hence 
could be restricted to 
pure strategies in this information set, it suffices to show that the 
action given by strategy $S$ in $\is$ is optimal. 
Also, note that since $\theta$ is given, Bob's gains at stages $1$ through 
$k-1$ could be deterministically calculated. Therefore, as before we only need to consider 
gains at stages $k$ through $n$ and show that playing selfishly is optimal.
In other words we should solve the following optimization problem
\begin{align}
 \max_{b_k\in \{\selfish, \nonselfish\}} 
\evwrt{\Se}{\gbwrt{A_{[k:n]}, (b_k,B_{[k+1:n]})}{g_B} | \is},
\end{align}
for small values of $\epsilon$, where $A_k=a_k$ (since we have conditioned on $\theta$) and the future actions $A_{[k+1:n]},B_{[k+1:n]}$ follow then one given by $\Sab^{\epsilon}$ (i.e. $\Sab$ with probability
$1-\epsilon$ and the other possible action with probability $\epsilon$). We need to show that the answer to the above maximization problem is $b_k = \selfish$. 
In fact we should compare the following two functions for small values of 
$\epsilon$:
\begin{equation}
\label{eq:f-epsilon-s-ns}
  f_X(\epsilon) = \evwrt{\Se}{\gbwrt{A_{[k:n]}, (B_k=X,B_{[k+1:n]})}{g_B} | 
\is}, \qquad X \in \{ \selfish, \nonselfish\}.
\end{equation}
Note that we can expand
\[
\begin{split}
 f_X(\epsilon) &= p(T=0|\is) \evwrt{\Se}{\gbwrt{A_{[k:n]}, (B_k=X,B_{[k+1:n]})}{g_B} | 
\is,T=0} \\
&\qquad + p(T=1|\is) \evwrt{\Se}{\gbwrt{A_{[k:n]}, (B_k=X,B_{[k+1:n]})}{g_B} | 
\is,T=1},
\qquad X \in \{ \selfish, \nonselfish\}.
\end{split}
\]
Hence, it suffices to show that both conditioned on $T=0$ and $T=1$, $\selfish$ 
is dominant. In other words, if we define
\begin{equation}
\label{eq:f-epsilon-s-ns-t}
 f_{X,t}(\epsilon) = \evwrt{\Se}{\gbwrt{A_{[k:n]}, (B_k=X,B_{[k+1:n]})}{g_B} | 
\is,T=t}, X\in \{\selfish, \nonselfish\}, t\in\{0,1\}
\end{equation}
we will show that for any value of $T=t\in\{0,1\}$ we have
$f_{\selfish,t}(\epsilon) \geq f_{\nonselfish,t}(\epsilon)$ for $\epsilon$ 
sufficiently small.

\subsubsection{Case $T=1$: Showing that $f_{\selfish,1}(\epsilon) \geq f_{\nonselfish,1}(\epsilon)$ for small $\epsilon$ 
}
Since $T$ and $g_B$ are fixed, similar to Section~\ref{sec:Alice-IS} we can 
define $n_g$ and $n_l$ with 
respect to Alice at 
each stage.
It is more convenient to look at terms in \eqref{eq:f-epsilon-s-ns-t} as a one 
dimensional random walk in the following way. Note that at stage $j$, if we 
define $\delta(j) = n_g(j) - n_l(j)$ to be the difference between gains and 
losses of Alice before stage $j$, action pair $a_j,b_j$ will either add one 
unit to 
this value after playing at stage $j$, subtract one unit or leave it unchanged.
More precisely, if $\delta \geq 0$, there are three possible moves: $(i)$ $a_j 
= 
\risky, b_j = 
\selfish$ which increases $\delta$ by one (since $T=1$) which is 
equivalent to one move to the right in the random walk with probability 
$(1-\epsilon)^2$, this action has a gain of $\Gbmax$ for Bob, $(ii)$ $a_j = 
\risky, 
 b_j = \nonselfish$ which is equivalent to a left move with probability 
$\epsilon(1-\epsilon)$ and gain $\Gbmin$ and $(iii)$ $a_j = \nonrisky$ 
which is equivalent to no move with probability $1-\epsilon$ and gain $1/2$. 
Similarly one can define transition probabilities and gains for $\delta < 0$ 
which is depicted in Figure~\ref{fig:va-friend-vb-random-walk}.

\begin{figure}
\centering
\begin{tikzpicture}
  \footnotesize
 \tikzstyle{place}=[fill=blue!20,draw=blue!50];
 \tikzstyle{shaddow}=[circle,inner sep=2mm];
 \tikzstyle{my loop}=[->,to path={ .. controls +(310:1.3) and +(230:1.3) .. 
(\tikztotarget) \tikztonodes}]
 \node [shaddow] (a0) at (0,0) {};
 \node [shaddow] (a1) at (3,0) {};
 \node [shaddow] (a-1) at (-3,0) {};
 \node [shaddow] (a-2) at (-6,0) {};
 \node [shaddow] (a-3) at (-9,0) {};
 
 \draw [->,blue!50,very thick] (-10,0) -- (4,0);
 
 \node at (4.5,0) {$\delta$};
 
 \draw [place] (a0) circle (2pt);
 \draw [place] (a1) circle (2pt);
 \draw [place] (a-1) circle (2pt);
 \draw [place] (a-2) circle (2pt);
 \draw [place] (a-3) circle (2pt);
 
 \draw (a0) edge [->,bend left] node[above] 
{$(1-\epsilon)^2$, \color{red}{$\gbmax$}} (a1);
 \draw (a0) edge [my loop] node[below] {$\epsilon$, \color{red}{$1/2$}} (a0);
 \draw (a0) edge [->,bend right] node [above] 
{$\epsilon(1-\epsilon)$, \color{red}{$\gbmin$}} (a-1);

  \draw (a-2) edge [->,bend left] node[above] 
{$\epsilon(1-\epsilon)$, \color{red}{$\gbmax$}} (a-1);
  \draw (a-2) edge [my loop] node[below] {$1-\epsilon$, \color{red}{$1/2$}} 
(a-2);
  \draw (a-2) edge [bend right] node[above] 
{$\epsilon^2$, \color{red}{$\gbmin$}} (a-3);
 
 \node[text=blue!50] at (a0.south) {0};
 \node[text=blue!50] at (a1.south) {1};
 \node[text=blue!50] at (a-1.south) {-1};
 \node[text=blue!50] at (a-2.south) {-2};
 \node[text=blue!50] at (a-3.south) {-3};
\end{tikzpicture}
\caption{\label{fig:va-friend-vb-random-walk}Auxiliary random walk for the case 
$T=1$. Transition probabilities as well as Bob's gain in red are 
shown for two values of $\delta=0$ and $\delta=-2$. Note that the values for 
$\delta\geq 0$ are identical to that of $\delta = 0$ and values for $\delta <0$ 
are all the same as $\delta = -2$.}
\end{figure}
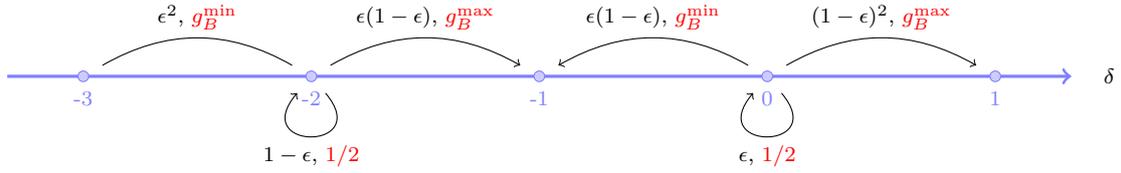

From now on, we shall continue our argument solely on this random walk. Define 
$F^1_{\delta,l}(\epsilon)$ to be the expected value of sum 
of the gains one would observe if he started at position $\delta$ and moved $l$ 
times, which is a polynomial in $\epsilon$. Note that
\[
 f_{\selfish, 1}(\epsilon) = \gamax + F^1_{\delta(k)+1, n-k}(\epsilon),
\]
and
\[
 f_{\nonselfish, 1}(\epsilon) = \gamin + F^1_{\delta(k)-1, n-k}(\epsilon).
\]
Now we claim that for all values of $\epsilon$, we have $f_{\risky, 
1}(\epsilon) \geq f_{\nonrisky,1}(\epsilon)$, showing that playing $\risky$ at 
this information set is dominant. In order to show this, we use the idea of 
\emph{coupling} in this random walk. When we choose to play $\selfish$ at stage $k$, 
we move man number 1, say the selfish man, from position $\delta(k)$ to 
$\delta(k) + 1$ and from then on, he moves randomly $n-k$ times. On the other 
hand, when we decide to play $\nonselfish$, we move man number 2, say the 
non-selfish man, from position $\delta(k)$ to $\delta(k) -1$ and let him move 
$n-k$ times. Assume $\Delta_\selfish(j)$ for $j>0$ be the position of the 
selfish man at stage $k+j-1$; hence, $\Delta_\selfish(1) = \delta(k) +1$ and 
$\Delta_\selfish(j)$ for $j>0$ is a random variable. Define 
$\Delta_\nonselfish(j)$ similarly for the non-selfish man. Also, 
$\Delta_\selfish(j)$ is independent from $\Delta_\nonselfish(k)$ since their 
moves are independent.

For each of the two men, $X \in \{\selfish, \nonselfish\}$, define $M_X(j) = 
\Delta_X(j) - \Delta_X(j-1)$ with $\Delta_\selfish(0) = \Delta_\nonselfish(0) 
= \delta(k)$ to be their move at stage $k+j-1$, hence $M_X(j) \in 
\{-1,1,0\}$. Note that since $\gamax + \gamin = 1$, for each of the 
two men, gain at stage $j$ is equal to
\[
 \frac{1}{2} + M_X(j) \left ( \gbmax - \frac{1}{2} \right ).
\]
Therefore the overall gain in stages $k$ through $n$ would be
\[
\begin{split}
 \sum_{j=1}^{n-k+1}\left( \frac{1}{2} + M_X(j) \left( \gbmax - \frac{1}{2} \right)\right) &= 
\frac{n-k+1}{2} + \left( \gbmax - \frac{1}{2} \right) \sum_{j=1}^{n-k+1} M_X(j) 
\\ 
& = \frac{n-k+1}{2} + \left( \gbmax - \frac{1}{2} \right) 
(\Delta_X(n-k+1)-\delta(k) ) \qquad X \in \{\selfish, \nonselfish\},
\end{split}
\]
and
\[
 f_X(\epsilon) = \frac{n-k+1}{2}- \delta(k) \left ( \gbmax - \frac{1}{2} \right 
)  + \ev{\Delta_X(n-k+1) } \left( \gbmax  - \frac{1}{2} \right) \qquad \qquad X 
\in \{ \selfish, \nonselfish \}.
\]
This suggests that the overall gain is only a function of the moves through the overall displacement at stage $n$ (i.e. the final location of the two men at the final stage). This is a result of the fact that the average of gains of moving 
right and left in the random walk is equal to the gain of no move. 

Now define $\tilde{\Delta}_X(j)$ ($X \in \{\selfish, \nonselfish\}$) to be the coupled placement of the two men as follows:
The two men move independently until the first time they reach the same node. From that point on, the two men are coupled and forced to move together on the Markov chain (i.e. one of the men walks randomly on the chain with the other man mimicking his moves). It is evident that the marginal distribution of 
$\tilde{\Delta}_X(j)$ is equal to the distribution of $\Delta_X(j)$, since the two 
men are moving on the same random walk. Since the non-selfish man starts to the left of the selfish man, i.e.
$\tilde{\Delta}_\nonselfish(1) = \delta(k) - 1 < \delta(k) + 1 = 
\tilde{\Delta}_\selfish(1)$, and they get coupled when they reach at the same node, the non-selfish man cannot go to the right of the selfish man. In other words with probability one we have
\begin{equation}
\label{eq:tilde-delta-comp-T=1}
 \tilde{\Delta}_\nonselfish(j) \leq \tilde{\Delta}_\selfish(j) \qquad 1 \leq 
j \leq n-k+1.
\end{equation}
Therefore
\[
\begin{split}
 f_{\nonselfish,1}(\epsilon) &= \frac{n-k+1}{2}- \delta(k) \left ( \gbmax - 
\frac{1}{2} \right 
)  + \ev{\Delta_\nonselfish(n-k+1) } \left( \gbmax  - \frac{1}{2} \right) \\
& = \frac{n-k+1}{2}- \delta(k) \left ( \gbmax - 
\frac{1}{2} \right 
)  + \ev{\tilde{\Delta}_\nonselfish(n-k+1) } \left( \gbmax  - \frac{1}{2} 
\right) \\
& \stackrel{(a)}{\leq} \frac{n-k+1}{2}- \delta(k) \left ( \gbmax - 
\frac{1}{2} \right 
)  + \ev{\tilde{\Delta}_\selfish(n-k+1) } \left( \gbmax  - \frac{1}{2} 
\right) \\
& = \frac{n-k+1}{2}- \delta(k) \left ( \gbmax - 
\frac{1}{2} \right 
)  + \ev{\Delta_\selfish(n-k+1) } \left( \gbmax  - \frac{1}{2} 
\right) \\
&= f_{\selfish,1}(\epsilon),
\end{split}
\]
where $(a)$ exploits equation \eqref{eq:tilde-delta-comp-T=1} and the fact that $\gbmax 
\geq 1/2$.
This completes the proof for this case.

\subsubsection{Case $T=0$: Showing that $f_{\selfish, 0}(\epsilon) > f_{\nonselfish, 0}(\epsilon)$ for small $\epsilon$ }
Similar to the previous case, we can define a similar yet different random 
walk which is depicted in Figure~\ref{fig:va-enemy-vb-random-walk}. 
$F_{\delta,l}^{0}(\epsilon)$ is defined 
similarly.
Quite similar to the previous case, consider selfish and non-selfish men and 
denote their placement by $\Delta_\selfish(j)$ and $\Delta_\nonselfish(j)$ for 
$j>0$ and $\Delta_\selfish(0) = \delta_\nonselfish(0) = \delta(k)$. Also 
$\Delta_\selfish(1) = \delta(k) - 1$ and $\Delta_\nonselfish(1) = \delta(k) + 
1$. Furthermore define $M_X(j) = \Delta_X(j) - \Delta_X(j-1)$. Then gain of max 
$X$ at stage $k+j-1$ is 
\[
 \frac{1}{2} + M_X(j) \left ( \gbmin - \frac{1}{2} \right).
\]
It is easily verified that all expressions are similar to those of case $T=1$ 
by substituting $\gbmax$ with $\gbmin$, hence
\[
 f_{X,0}(\epsilon) = \frac{n-k+1}{2}- \delta(k) \left ( \gbmax - \frac{1}{2} 
\right 
)  + \ev{\Delta_X(n-k+1) } \left( \gbmin  - \frac{1}{2} \right) \qquad \qquad X 
\in \{ \selfish, \nonselfish \}.
\]
Again, define $\tilde{\Delta}_X(j)$ to be the coupled random variable 
representing coupled placement of the two men. Since 
\[
 \tilde{\Delta}_\selfish(1) = \delta(k) - 1 \leq \delta(k) + 1 = 
\tilde{\Delta}_\nonselfish(1),
\]
therefore we have
\begin{equation}
\label{eq:tilde-delta-comp-T=0}
 \tilde{\Delta}_\selfish(j) \leq \tilde{\Delta}_\nonselfish(j) \qquad 1 \leq j 
\leq n-k+1.
\end{equation}
Using this and the fact that the marginal distribution of $\tilde{\Delta_X}$ 
is equal to the distribution of $\Delta_X$, we have
\[
\begin{split}
 f_{\nonselfish,0}(\epsilon) &= \frac{n-k+1}{2}- \delta(k) \left ( \gbmin - 
\frac{1}{2} \right 
)  + \ev{\Delta_\nonselfish(n-k+1) } \left( \gbmin  - \frac{1}{2} \right) \\
& = \frac{n-k+1}{2}- \delta(k) \left ( \gbmax - 
\frac{1}{2} \right 
)  + \ev{\tilde{\Delta}_\nonselfish(n-k+1) } \left( \gbmin  - \frac{1}{2} 
\right) \\
& \stackrel{(a)}{\leq} \frac{n-k+1}{2}- \delta(k) \left ( \gbmax - 
\frac{1}{2} \right 
)  + \ev{\tilde{\Delta}_\selfish(n-k+1) } \left( \gbmin  - \frac{1}{2} 
\right) \\
& = \frac{n-k+1}{2}- \delta(k) \left ( \gbmin - 
\frac{1}{2} \right 
)  + \ev{\Delta_\selfish(n-k+1) } \left( \gbmin  - \frac{1}{2} 
\right) \\
&= f_{\selfish,0}(\epsilon),
\end{split}
\]
where $(a)$ exploits equation \eqref{eq:tilde-delta-comp-T=0} and the fact that $\gbmin 
\leq 1/2$. This completes the proof by showing that playing $\selfish$ is 
dominant.

\begin{figure}
\centering
\begin{tikzpicture}
  \footnotesize
 \tikzstyle{place}=[fill=blue!20,draw=blue!50];
 \tikzstyle{shaddow}=[circle,inner sep=2mm];
 \tikzstyle{my loop}=[->,to path={ .. controls +(310:1.3) and +(230:1.3) .. 
(\tikztotarget) \tikztonodes}]
 \node [shaddow] (a0) at (0,0) {};
 \node [shaddow] (a1) at (3,0) {};
 \node [shaddow] (a-1) at (-3,0) {};
 \node [shaddow] (a-2) at (-6,0) {};
 \node [shaddow] (a-3) at (-9,0) {};
 
 \draw [->,blue!50,very thick] (-10,0) -- (4,0);
 
 \node at (4.5,0) {$\delta$};
 
 \draw [place] (a0) circle (2pt);
 \draw [place] (a1) circle (2pt);
 \draw [place] (a-1) circle (2pt);
 \draw [place] (a-2) circle (2pt);
 \draw [place] (a-3) circle (2pt);
 
 \draw (a0) edge [->,bend left] node[above] 
{$\epsilon(1-\epsilon)$, \color{red}{$\gbmin$}} (a1);
 \draw (a0) edge [my loop] node[below] {$\epsilon$, \color{red}{$1/2$}} (a0);
 \draw (a0) edge [->,bend right] node [above] 
{$(1-\epsilon)^2$, \color{red}{$\gbmax$}} (a-1);

  \draw (a-2) edge [->,bend left] node[above] 
{$\epsilon^2$, \color{red}{$\gbmin$}} (a-1);
  \draw (a-2) edge [my loop] node[below] {$1-\epsilon$, \color{red}{$1/2$}} 
(a-2);
  \draw (a-2) edge [bend right] node[above] 
{$\epsilon(1-\epsilon)$, \color{red}{$\gbmax$}} (a-3);
 
 \node[text=blue!50] at (a0.south) {0};
 \node[text=blue!50] at (a1.south) {1};
 \node[text=blue!50] at (a-1.south) {-1};
 \node[text=blue!50] at (a-2.south) {-2};
 \node[text=blue!50] at (a-3.south) {-3};
\end{tikzpicture}
\caption{\label{fig:va-enemy-vb-random-walk}Auxiliary random walk for the case 
$T=0$. Transition probabilities as well as Bob's gain in red are 
shown for two values of $\delta=0$ and $\delta=-2$. Note that the values for 
$\delta\geq 0$ are identical to that of $\delta = 0$ and values for $\delta <0$ 
are all the same as $\delta = -2$.}
\end{figure}
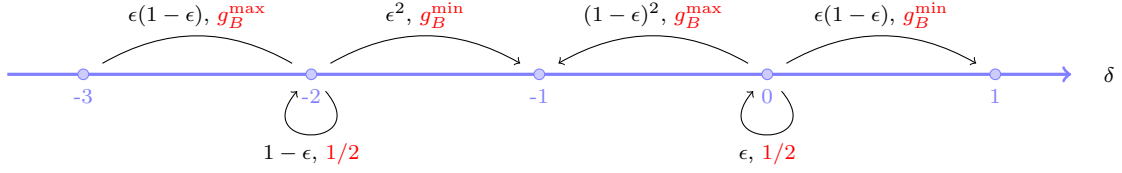


\section{Proofs for Adjusted Winner}
The
division given by the AW algorithm are functions of
algorithm's inputs, $a$ and $b$, when $m=2$.
In this case, $\awwrt{A}{a,b}$ is a vector of length $2$, say
$(d^A_1,d^A_2)$, indicating the portion of goods given to Alice. Similarly
$\awwrt{B}{a,b}=(d^B_1,d^B_2)$ denotes the portion of goods given to Bob. Since
we divide the
goods between parties, $d^A_i + d^B_i = 1,i=1,2$. In the following, since we are
interested in
Alice's gain, we use $\aw{a,b}$ for $\awwrt{A}{a,b}$ unless otherwise stated. The
following formulation for $\aw{a,b}$, as derived in
Section ~\ref{sec:binary-aw-proofs}:
\begin{equation}
\label{eq:aw-four-case}
 \aw{a,b} = \begin{cases}
             \left(0,\frac{1}{2-a-b} \right ) & 0 \leq a \leq \min(1-b,b), \\
             \left(\frac{1}{a+b},0 \right ) & \max(1-b,b) \leq a \leq 1, \\
             \left(1,\frac{1-a-b}{2-a-b} \right ) & b < a < 1-b \wedge b\leq
1/2, \\
             \left(1-\frac{1}{a+b},1 \right ) & 1-b < a < b \wedge b> 1/2.
            \end{cases}
\end{equation}

\begin{definition}
\label{def:daw}
 Alice's gain when she announces valuation $(\ta,1-\ta)$ while her true
valuation is $(a,1-a)$ in the case that Bob's true valuation is $(b,1-b)$ which
is equal to his announced valuation is denoted by $\daw{\ta}{a}{b}$ which is
equal to,
\begin{equation}
 \daw{\ta}{a}{b} = \aw{\ta,b}\cdot (a,1-a),
\end{equation}
where $\cdot$ is the inner product operation.
\end{definition}
Using \eqref{eq:aw-four-case} we can write the exact expression of this
function as we will see later. An example of $\dawhead$ is presented in
Figure~\ref{fig:psi-example}.

\begin{figure}
 \centering
  \includegraphics[width=0.5\textwidth]{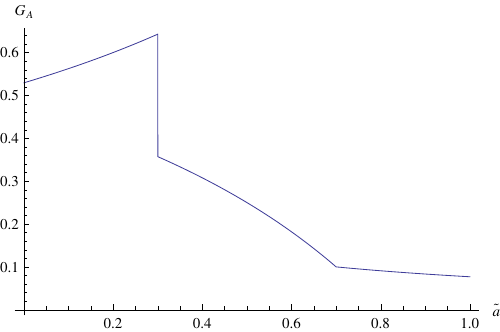}
  \caption{An example of $\dawhead$ as a function of $\ta$ for $b=0.3$
and $a=0.1$. Note the discontinuity of the function at $\ta=b$, its convexity in 
the intervals
$(0,b)$ and $(1-b,1)$ and its concavity in the interval $(b,1-b)$.
\label{fig:psi-example}}
\end{figure}

In the case where $b$ is uniformly distributed in $[\bmin,\bmax]$, the gain
associated with $\ta$ is an integral with respect to $b$, which is discussed in
the following definition.

\begin{definition}
  Alice's expected gain when she announces valuation $(\ta,1-\ta)$ while her
true
valuation is $(a,1-a)$ in the case that Bob's true valuation is uniformly
distributed in $[\bmin,\bmax]$ and he acts honestly is denoted by
$\daw{\ta}{a}{\bmin,\bmax}$ and is equal to,
\begin{equation}
\label{eq:psi-integral}
\daw{\ta}{a}{\bmin,\bmax} =\frac{1}{\bmax - \bmin} \int_{\bmin}^{\bmax}
\daw{\ta}{a}{b} \mathrm{d} b.
\end{equation}
\end{definition}

Note that $\ta$ and $b$ are the two inputs to the Adjusted Winner algorithm.
When we integrate over $b$, at one point in the integration $b=\ta$. As is
discussed in Section ~\ref{sec:binary-aw-proofs} in the case where the two
inputs
to the Adjusted Winner algorithm are identical, there are two possible divisions
of the cake as the output of the algorithm. If both players had announced their
valuations truly, these two divisions would give them the same gains; however,
in our scenario, Alice announces an untrue valuation. Thus, when $\ta=b$, these
two valuations result in two different gains for Alice and the function under
integration is not defined in this one point. However, since the integral is not
dependent on the value of one point, we can omit it.


\begin{definition}
\label{def:dawmax}
 The maximum expected value of Alice's gain with above conditions is defined as,
\begin{equation}
\label{eq:dawmax-definition}
\begin{split}
 \dawmax{a}{\bmin,\bmax} &= \max_{0\leq\ta\leq1}
\daw{\ta}{a}{\bmin,\bmax}\\
&=\max_{\bmin\leq\ta\leq\bmax}
\daw{\ta}{a}{\bmin,\bmax}.
\end{split}
\end{equation}
\end{definition}

The second line in \eqref{eq:dawmax-definition} suggests that the optimum value
of $\ta$ for
$\dawhead$
falls in the interval $[\bmin,\bmax]$ which is justified in
Corollary~\ref{cor:def-dawmax-justification}.

\begin{definition}
  \label{def:dgamax}
 For a fixed value of Alice's valuation, $a$, and Bob's valuation $b$ uniform in
$[\bmin,\bmax]$ and a series of dividing
points for $k$ questions $\bmin=b_0\leq\dots\leq b_{2^k} = \bmax$, the
improvement of Alice's gain by asking this set of questions is denoted by
\begin{equation}
\label{eq:dga-definition}
 \dga{k}{b_0,\dots,b_{2^k}} = \sum_{i=1}^{2^k} \frac{b_i-b_{i-1}}{b_{2^k}-b_0}
\dawmax{a}{b_{i-1},b_i} - \dawmax{a}{\bmin,\bmax}.
\end{equation}
The maximum improvement by asking $k$ questions is
\begin{equation}
 \dgamax{k}{\bmin,\bmax} = \max_{
\bmin=b_0\leq\dots\leq b_{2^k} = \bmax} \dga{k}{b_0,\dots,b_{2^k}}.
\end{equation}

\end{definition}

Note that the term $\frac{b_i-b_{i-1}}{b_{2^k}-b_0}$ in
\eqref{eq:dga-definition} is the probability of the event $b\in [b_{i-1},b_i]$;
in fact, $\dga{k}{b_0,\dots,b_{2^k}}$ is the expected value of Alice's
improvement in gain having the fact that $b$ is uniformly distributed in
$[\bmin,\bmax]$.

For the
sake of simplicity we have assumed the maximums in
Definitions~\ref{def:dawmax} and \ref{def:dgamax} exist. One can check that if
we replace maximums by supremum   and taking suboptimal points, the same results
hold.

Note that $\dgamaxhead$ is defined on intervals. When we want
to
prove upper bounds on gain improvement, it will be convenient to work with a
special set of these functions, which we name \emph{interval concave}. Note
that this terminology is not related to the concept of concavity and is used
simply because the condition has similarities to what we have for concave
functions.

\begin{definition}
 \label{def:interval-concave}
  If $\mathcal{M}$ is the set of all pairs $(x,y) \in \reals^2$ such that
$\bmin\leq x<y\leq \bmax$, a function $\Delta:\mathcal{M}\rightarrow \reals$ is
said to be interval concave in $[\bmin,\bmax]$ if for all $(x,y) \in
\mathcal{M}$ and $x\leq t \leq y$ we have,
\begin{equation}
  \frac{t-x}{y-x} \Delta(x,t) + \frac{y-t}{y-x} \Delta(t,y) \leq \Delta(x,y) .
\end{equation}
\end{definition}

\begin{thm}
\label{thm:upper-bound-general}
 Assume $a$ is fixed, $b$ is uniformly distributed in $[\bmin,\bmax]$ and
we have an interval concave $\tD$ in $[\bmin,\bmax]$ which is
an upper bound for
$\dgamaxhead_1$ in this interval, i.e.\
\begin{equation}
\label{eq:thm:upper-bound-general-k=1}
 \forall x,y \:\: \bmin \leq x < y \leq \bmax \qquad \dgamax{1}{x,y} \leq
\tD(x,y),
\end{equation}
then for all $k\geq 1$ we have
\begin{equation}
 \dgamax{k}{\bmin,\bmax} \leq k \tD(\bmin,\bmax).
\end{equation}
\end{thm}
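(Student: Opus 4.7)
The plan is to prove the theorem by induction on $k$. The base case $k=1$ is immediate from the hypothesis \eqref{eq:thm:upper-bound-general-k=1} applied at $x=\bmin,\, y=\bmax$.

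For the inductive step, I would first derive a recursion for $\dgamaxhead_k$. Because the $k$-question binary search strategy forms a balanced tree, the first question at some point $c = b_{2^{k-1}}$ splits $[\bmin,\bmax]$ into $[\bmin,c]$ and $[c,\bmax]$, after which an independent $(k-1)$-question strategy is applied in each half. Partitioning the sum in Definition~\ref{def:dgamax} into left and right contributions and renormalizing the weights $(b_i-b_{i-1})/(\bmax-\bmin)$ so that each half becomes a proper convex combination over that half yields
\begin{align*}
\dgamax{k}{\bmin,\bmax} = \max_{c \in [\bmin,\bmax]} \Bigl\{ & \tfrac{c-\bmin}{\bmax-\bmin}\dgamax{k-1}{\bmin,c} + \tfrac{\bmax-c}{\bmax-\bmin}\dgamax{k-1}{c,\bmax} \\
& {}+ \delta_1(c) \Bigr\},
\end{align*}
where $\delta_1(c) := \tfrac{c-\bmin}{\bmax-\bmin}\dawmax{a}{\bmin,c} + \tfrac{\bmax-c}{\bmax-\bmin}\dawmax{a}{c,\bmax} - \dawmax{a}{\bmin,\bmax}$ is the one-question improvement obtained by cutting at $c$. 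By the hypothesis, $\delta_1(c) \leq \dgamax{1}{\bmin,\bmax} \leq \tD(\bmin,\bmax)$ uniformly in $c$.

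Applying the inductive hypothesis $\dgamax{k-1}{x,y} \leq (k-1)\tD(x,y)$ termwise inside the maximum and pulling the constant upper bound on $\delta_1(c)$ out then gives
\[
\dgamax{k}{\bmin,\bmax} \leq \tD(\bmin,\bmax) + (k-1)\max_{c \in [\bmin,\bmax]} \Bigl[\tfrac{c-\bmin}{\bmax-\bmin}\tD(\bmin,c) + \tfrac{\bmax-c}{\bmax-\bmin}\tD(c,\bmax)\Bigr],
\]
and the interval concavity of $\tD$ (Definition~\ref{def:interval-concave}) bounds the bracketed convex combination above by $\tD(\bmin,\bmax)$ for every $c$. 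The desired inequality $\dgamax{k}{\bmin,\bmax} \leq k\tD(\bmin,\bmax)$ follows.

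The main obstacle is conceptual rather than technical: one must recognize that Definition~\ref{def:interval-concave} is crafted precisely to make this inductive combining argument work without any accumulated slack, and that the one-question contribution $\delta_1(c)$ separates cleanly from the two recursive calls so that only a single extra $\tD(\bmin,\bmax)$ is added at each induction step. Once the recursive decomposition is set up correctly, the remaining algebra is routine.
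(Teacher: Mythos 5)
Your proof is correct and follows essentially the same route as the paper's: split at the first question point $c=b_{2^{k-1}}$, separate the one-question improvement $\dga{1}{b_0,c,b_{2^k}}$ from the two $(k-1)$-question subproblems, apply induction termwise, and absorb the resulting convex combination of $\tD$ values via interval concavity. The only cosmetic difference is that the paper explicitly states the strengthened induction claim $\dgamax{k-1}{x,y}\leq (k-1)\tD(x,y)$ for all subintervals $\bmin\leq x<y\leq\bmax$ (whose base case is exactly hypothesis \eqref{eq:thm:upper-bound-general-k=1}), which your argument implicitly relies on when bounding the two recursive terms.
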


The proof of the above theorem is given in
Section ~\ref{proofsAW2}. The proof of the other main theorems are given in appendices \ref{secProofThem5AW} and  \ref{secProofThem5AW2}.

\subsection{Deriving Adjusted Winner formulation for two goods
\label{sec:binary-aw-proofs}}
When there are only two goods, by changing their ordering, we
realize that $\aw{a,b}$, which is vector of size $2$, is the reverse of
$\aw{1-a,1-b}$. Therefore it suffices to analyze the case when $b\leq 1/2$. We
will take three cases:

\textbf{Case I,} $0\leq a \leq b$: Since the valuation of Bob is more than
Alice in the first good and the valuation of Alice is more in the second good,
the initial allocation is 
\begin{equation*}
\left[ \begin{array}{c c c} d^A_1 & d^B_1 \\
d^A_2 & d^B_2
\end{array}
\right ] = 
\left [
\begin{array}{c c c} 0 & 1 \\ 1 & 0 
\end{array} \right ],
\end{equation*}
 where Alice's gain is $1-a$ and
Bob's is $b$. Since Alice's gain is more, a
portion of the second good should be given to Bob. Solving the equations, the
final allocation would be,
\begin{equation*}
\left [
\begin{array}{c c c} 0 & 1 \\ \frac{1}{2-a-b} & \frac{1-a-b}{2-a-b}
\end{array} \right ],
\end{equation*}
It should be noted that in the case of $a=b$, there is no unique allocation,
since in that case the initial allocation is giving all the goods to Alice, but
we can start with either the first good to give to Bob or the second one,
therefore any of the following allocations is feasible,
\begin{equation*}
\left [
\begin{array}{c c c} 0 & 1 \\ \frac{1}{2-2b} & \frac{1-2b}{2-2b}
\end{array} \right ]
\qquad
\left [
\begin{array}{c c c} 1 & 0 \\ \frac{1-2b}{2-2b} & \frac{1}{2-2b}
\end{array} \right ],
\end{equation*}
which give us exactly the same gain. We note that we would get the second
allocation instead of the first if we took the case of $a=b$ in Case II
(discussed below).
Therefore the AW function is not well defined when the valuations are identical.

\textbf{Case II,} $b<a\leq 1-b$: the initial allocation is
\begin{equation*}
\left [
\begin{array}{c c c} 1 & 0 \\ 0 & 1
\end{array} \right ],
\end{equation*}
where Bob's gain is $1-b$ which is greater than that of Alice which is $a$,
therefore a portion of the second good should be given to Alice. Solving for
equality we get the following final allocation
\begin{equation*}
\left [
\begin{array}{c c c} 1 & 0 \\ \frac{1-a-b}{2-a-b} & \frac{1}{2-a-b}
\end{array} \right ].
\end{equation*}

\textbf{Case III,} $1-b<a\leq 1$: the initial allocation is
\begin{equation*}
\left [
\begin{array}{c c c} 1 & 0 \\ 0 & 1
\end{array} \right ],
\end{equation*}
where Alice's gain is $a$ which is greater than that of Bob which is $1-b$,
therefore a portion of the first good should be given to Bob. Solving for
equality, the final allocation would be
\begin{equation*}
\left [
\begin{array}{c c c} \frac{1}{a+b} & 1-\frac{1}{a+b} \\ 0 & 1
\end{array} \right ].
\end{equation*}

When $b>1/2$, by considering $\aw{1-a,1-b}$ and reversing the answer, we can
find the allocation in general:
\begin{equation}
 \aw{a,b} = \begin{cases}
                      \left (0 , \frac{1}{2-a-b}\right )
                      & a\leq b \wedge b\leq 1/2, \\
                     \left (1 , \frac{1-a-b}{2-a-b} \right )
                      & b<a\leq 1-b \wedge b\leq 1/2, \\
                     \left (\frac{1}{a+b} , 0 \right )
                     & 1-b<a\leq 1 \wedge b\leq 1/2, \\
		     \left (\frac{1}{a+b} , 0 \right )
                     & b\leq a \leq 1 \wedge b>1/2, \\
		      \left (1-\frac{1}{a+b} , 1 \right )
                     & 1-b\leq a <b \wedge b>1/2, \\
		      \left (0, \frac{1}{2-a-b} \right )
                     & a<1-b \wedge b>1/2. \\
            \end{cases}
\end{equation}
Taking the similar terms together and neglecting the cases when $a=b$ which is
not well defined as discussed before, we get the following simplified
formulation:
\begin{equation}
 \aw{a,b} = \begin{cases}
             \left(0,\frac{1}{2-a-b} \right ) & 0 \leq a \leq \min(1-b,b), \\
             \left(\frac{1}{a+b},0 \right ) & \max(1-b,b) \leq a \leq 1, \\
             \left(1,\frac{1-a-b}{2-a-b} \right ) & b < a < 1-b \wedge b\leq
1/2, \\
             \left(1-\frac{1}{a+b},1 \right ) & 1-b < a < b \wedge b> 1/2.
            \end{cases}
\end{equation}
Note that as discussed before, the special case when $a=b$ does not result in a
unique division, and we have taken one of the possible cases. However, as we
will see later,
the case of $a=b$ is not interesting for us, therefore this conflict is
acceptable for the purpose of our study.

An interesting fact is that, the four above cases are not independent. In fact
the two following
equalities (which are true, even when $m>2$) relate these four cases:
\begin{equation}
 \begin{split}
  \aw{1-(a,b)} &= \aw{a,b}^r, \\
  \aw{(a,b)^r} &= 1- \aw{a,b},
 \end{split}
\end{equation}
where the reverse operator acts as $(\alpha, \beta)^r = (\beta, \alpha)$. Note
that these are simply the case where the ordering of players or the placement
of items are altered.


\subsection{Some lemmas \label{sec:general-uniform-proofsNN}}
First we start by the following observation regarding the $\dawmaxhead$
function. The optimal value of $\ta$ when Alice
knows Bob's valuation has been analyzed formerly, a discussion could be found in
\cite{brams:cake:1996}.

\begin{prop}
  \label{prop:dawmax-observation}
 Assume $a$ and $b$ are fixed. Then
$\daw{\ta}{a}{b}$ is a concave function of $\ta$ if $\min(b,1-b) < \ta <
\max(b,1-b)$ and convex when $\ta<\min(b,1-b)$ or $\ta>\max(b,1-b)$. Also it is
increasing when $\ta<b$ and decreasing when $\ta>b$.
Furthermore,
\begin{equation}
\label{eq:prop-daw}
 \supremum_{\ta} \daw{\ta}{a}{b} = \begin{cases}
                                    \lim_{x\rightarrow b^{+}} \daw{x}{a}{b} &
a>b, \\
                                    \lim_{x\rightarrow b^{-}} \daw{x}{a}{b} &
a<b, \\
				    \daw{b}{b}{b} & a=b.
                                   \end{cases}
\end{equation}
\end{prop}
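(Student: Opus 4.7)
The plan is to reduce Proposition~\ref{prop:dawmax-observation} to elementary one-variable calculus on at most four pieces. First I would substitute the piecewise formula \eqref{eq:aw-four-case} for $\aw{\ta,b}$ into Definition~\ref{def:daw} to write $\daw{\ta}{a}{b}$ explicitly as a function of $\ta$: on $[0,\min(b,1-b)]$ it equals $\frac{1-a}{2-\ta-b}$; on $[\max(b,1-b),1]$ it equals $\frac{a}{\ta+b}$; and on the middle interval it equals $1-\frac{1-a}{2-\ta-b}$ when $b\le 1/2$, or $1-\frac{a}{\ta+b}$ when $b>1/2$. So the proposition just needs to be verified on each of these rational-in-$\ta$ pieces.

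Next, for each piece I would compute the first and second derivatives in $\ta$. On both outer pieces the second derivative is a positive constant divided by a positive cube, giving the claimed convexity outside $(\min(b,1-b),\max(b,1-b))$; on both forms of the middle piece the second derivative carries an extra minus sign, giving concavity in the middle interval. A parallel first-derivative check shows that $\daw{\ta}{a}{b}$ is increasing on $[0,b)$ (on the left outer piece, and if $b>1/2$ also on the middle piece) and decreasing on $(b,1]$ (on the right outer piece, and if $b\le 1/2$ also on the middle piece), which gives the monotonicity statement once one verifies continuity at $\ta=\min(b,1-b)$ and $\ta=\max(b,1-b)$; this is immediate from the closed forms (both one-sided values equal $a$ or $1-a$ depending on the sub-case).

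Finally, for the supremum formula \eqref{eq:prop-daw}, the monotonicity just established shows that the only candidates for the supremum are the two one-sided limits $L_\pm=\lim_{x\to b^\pm}\daw{x}{a}{b}$ at the unique discontinuity $\ta=b$. Reading $L_-$ and $L_+$ from the corresponding pieces gives $L_+-L_-=\frac{a-b}{1-b}$ when $b\le 1/2$ and $L_+-L_-=\frac{a-b}{b}$ when $b>1/2$. In both sub-cases the sign of $L_+-L_-$ is the sign of $a-b$, which yields the three-case formula in \eqref{eq:prop-daw}. The equality case $a=b$ falls out because both limits equal $1/2$, matching $\daw{b}{b}{b}=1/2$ computed directly from Case~1 of \eqref{eq:aw-four-case}.

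The only mild obstacle is bookkeeping across the two sub-cases $b\le 1/2$ and $b>1/2$; once the closed-form piecewise expression is in hand, all remaining work is routine differentiation and comparing the two limits at the jump, so no serious difficulty is anticipated.
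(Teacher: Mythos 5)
Your proposal is correct and follows essentially the same route as the paper's proof: write out the explicit piecewise rational expression for $\daw{\ta}{a}{b}$ in the two sub-cases $b\leq 1/2$ and $b>1/2$, read off convexity/concavity and monotonicity from the derivatives of each piece, check continuity at $\min(b,1-b)$ and $\max(b,1-b)$, and locate the supremum by comparing the one-sided limits at the jump $\ta=b$ (your computed differences $\frac{a-b}{1-b}$ and $\frac{a-b}{b}$ agree with the paper's limit values). No gaps.
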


In fact this shows that if $a>b$, the optimal value of $\ta$ is $b^+ = b+
\epsilon$, and when $a<b$, the optimal value is $\ta=b^- = b-\epsilon$. In fact
in these two cases $\dawhead$ does not have a maximum. It should be noted that
 the AW function is not well defined when $\ta=b$ and
$a\neq b$. 

\begin{proof}
 First we give the exact formulation of $\dawhead$ using
\eqref{eq:aw-four-case} and Definition~\ref{def:daw},
\begin{equation}
 \daw{\ta}{a}{b} = \begin{cases}
                    \frac{1-a}{2-\ta-b} & 0 \leq \ta \leq \min(1-b,b), \\
		    \frac{a}{\ta+b} & \max(1-b,b) \leq \ta \leq 1, \\
		    a + (1-a) \frac{1-\ta-b}{2-\ta-b} & b < \ta < 1-b \wedge b
\leq 1/2, \\
		    1-\frac{a}{\ta+b} & 1-b < \ta < b \wedge b > 1/2.
                   \end{cases}
\end{equation}
First assume $b\leq 1/2$. In this case,
\begin{equation}
  \daw{\ta}{a}{b} = \begin{cases}
                     \frac{1-a}{2-\ta-b} & 0 \leq \ta \leq b, \\
		      a + (1-a) \frac{1-\ta-b}{2-\ta-b} & b < \ta < 1-b, \\
                     \frac{a}{\ta+b} & 1-b \leq \ta \leq 1,
                    \end{cases}
\end{equation}
which is increasing in $\ta\leq b$, decreasing in $b<\ta<1-b$ and $1-b\leq
\ta$, also the limit of the second case when $\ta$ goes to $1-b$ from left is
equal to $a$ which is equal to the value of the third case for $\ta = 1-b$.
Therefore the function is continuous everywhere expect possibly in $b$. The
left and right limits at $b$ are $(1-a)/(2-2b)$ and $(1+a-2b)/(2-2b)$
respectively. We see that the left limit is greater when $a<b$, they
are equal when $a=b$ and the right limit is greater when $b<a$, which shows
\eqref{eq:prop-daw} in this special case. The concave/convex statements are
evident from the expression. 

Now assume $b>1/2$, we have,
\begin{equation}
 \daw{\ta}{a}{b} = \begin{cases}
                    \frac{1-a}{2-\ta-b} & 0\leq \ta \leq 1-b, \\
                    1-\frac{a}{\ta+b} & 1-b<\ta<b, \\
                    \frac{a}{\ta+b} & b \leq \ta \leq 1,
                   \end{cases}
\end{equation}
which is increasing in $0\leq \ta \leq 1-b$ and $1-b<\ta <b$ and decreasing in
$b<\ta$. The limit of the second case and the value of the first case are both
equal to $1-a$ at $\ta = 1-b$, therefore the function is equal at that point.
The
left and right limits at $b$ are $1-a/2b$ and $a/2b$ respectively, therefore
left limit is greater when $b>a$, the right limit is greater when $b<a$ and
they are equal when $a=b$, which again verifies \eqref{eq:prop-daw}. Again, the
concave/convex statement are
evident from the expression. 
\end{proof}

Using this Proposition, we can conclude the following statement which justifies
Definition~\ref{def:dawmax}.

\begin{cor}
\label{cor:def-dawmax-justification}
 The optimum value for $\ta$ for $\daw{\ta}{a}{\bmin}{\bmax}$ falls in
$[\bmin,\bmax]$, i.e.\
\begin{equation*}
\max_{0\leq\ta\leq1}
\daw{\ta}{a}{\bmin,\bmax}
=\max_{\bmin\leq\ta\leq\bmax}
\daw{\ta}{a}{\bmin,\bmax}.
\end{equation*}
\end{cor}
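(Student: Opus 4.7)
The plan is to deduce the claim directly from the pointwise monotonicity of $\Psi(\tilde{a}, a\|b)$ in $\tilde{a}$ established in Proposition~\ref{prop:dawmax-observation}, and then transfer this monotonicity through the integral that defines $\Psi(\tilde{a}, a\| b_{\min}, b_{\max})$.

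First I would set up the following elementary comparison. Fix $a$ and any $\tilde{a}_1 \leq \tilde{a}_2 \leq b_{\min}$. For every $b \in (b_{\min}, b_{\max}]$ we have $\tilde{a}_1 \leq \tilde{a}_2 < b$, so by the ``increasing on $[0,b)$'' part of Proposition~\ref{prop:dawmax-observation},
\[
\Psi(\tilde{a}_1, a\| b) \leq \Psi(\tilde{a}_2, a\| b).
\]
Integrating the inequality in $b$ from $b_{\min}$ to $b_{\max}$ and dividing by $b_{\max}-b_{\min}$ gives
\[
\Psi(\tilde{a}_1, a\| b_{\min}, b_{\max}) \leq \Psi(\tilde{a}_2, a\| b_{\min}, b_{\max}).
\]
The single problematic point $b=\tilde{a}_j$ causes no issue because it has Lebesgue measure zero and is ignored by the integral, as noted after \eqref{eq:psi-integral}. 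In particular, taking $\tilde{a}_2 = b_{\min}$ shows that for any $\tilde{a} \leq b_{\min}$,
\[
\Psi(\tilde{a}, a\| b_{\min}, b_{\max}) \leq \Psi(b_{\min}, a\| b_{\min}, b_{\max}).
\]

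Symmetrically, if $b_{\max} \leq \tilde{a}_2 \leq \tilde{a}_1$, then for every $b \in [b_{\min}, b_{\max})$ we have $b < \tilde{a}_2 \leq \tilde{a}_1$, so the ``decreasing on $(b,1]$'' part of the same Proposition gives $\Psi(\tilde{a}_1, a\| b) \leq \Psi(\tilde{a}_2, a\| b)$. Integrating as before yields
\[
\Psi(\tilde{a}, a\| b_{\min}, b_{\max}) \leq \Psi(b_{\max}, a\| b_{\min}, b_{\max})
\]
for any $\tilde{a} \geq b_{\max}$.

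Combining the two displays, the supremum over $\tilde{a} \in [0,1]$ is attained (or equalled up to the subinterval) inside $[b_{\min}, b_{\max}]$, which is exactly the identity asserted in \eqref{eq:dawmax-definition}. The only subtle point, and the one I would flag as the main obstacle, is the discontinuity of $\Psi(\tilde{a}, a\| b)$ at $\tilde{a}=b$ noted in Proposition~\ref{prop:dawmax-observation}; however, because this discontinuity affects the integrand only on a single point in the integration variable $b$, it is harmless for the comparison of integrals, and the monotonicity argument carries through.
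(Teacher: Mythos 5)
Your argument is correct and is essentially the paper's own proof: both use the pointwise monotonicity of $\daw{\ta}{a}{b}$ in $\ta$ from Proposition~\ref{prop:dawmax-observation} (increasing for $\ta<b$, decreasing for $\ta>b$) and integrate over $b$ to compare any $\ta\notin[\bmin,\bmax]$ unfavorably with the nearest endpoint. Your extra care about the measure-zero point $b=\ta$ is a harmless refinement of the same argument.
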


\begin{proof}
 Assume $\ta \notin [\bmin,\bmax]$. First assume $\ta < \bmin$. As we have
shown in Proposition~\ref{prop:dawmax-observation}, $\daw{\ta}{a}{b}$ is
increasing in $[\ta,b)$ for all $b\in [\bmin,\bmax]$. Therefore,
\begin{equation}
 \begin{split}
  \daw{\ta}{a}{\bmin,\bmax} &=\frac{1}{\bmax - \bmin} \int_{\bmin}^{\bmax}
\daw{\ta}{a}{b} \mathrm{d} b \\
  &< \frac{1}{\bmax - \bmin} \int_{\bmin}^{\bmax}
\daw{\bmin}{a}{b} \mathrm{d} b \\
  &= \daw{\bmin}{a}{\bmin,\bmax},
 \end{split}
\end{equation}
hence the maximum can not happen at this $\ta$. The proof for the case where
$\ta > \bmax$
is similar using the fact that $\daw{\ta}{a}{b}$ is
decreasing in $(b,\ta]$ for all $b\in [\bmin,\bmax]$.
\end{proof}

We
expect that by asking a number of questions, the expected gain for
Alice increases, and the more questions she asks, the more is this improvement.
The following proposition states this.
\begin{prop}
\label{prop:delta-positive}
 Assume $a$ is fixed and $b$ is uniformly distributed in $[\bmin,\bmax]$. Then
the sequence $\dgamax{k}{\bmin,\bmax}$ for $k\geq 1$ is
nonnegative, nondecreasing and bounded by $1$, i.e.\
\begin{equation}
0\leq \dgamaxhead_1 \leq \dgamaxhead_2 \leq \dots \leq 1.
\end{equation}
\end{prop}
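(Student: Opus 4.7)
The plan is to handle the three claims separately, each by a short direct argument from the definitions in Definitions~\ref{def:dawmax} and \ref{def:dgamax}. The only subtlety is a convention issue when consecutive dividing points coincide, which I address up front: whenever $b_{i-1}=b_i$ in the sum \eqref{eq:dga-definition}, the coefficient $\frac{b_i-b_{i-1}}{b_{2^k}-b_0}$ vanishes, so the term contributes zero regardless of how one defines $\dawmax{a}{b_{i-1},b_i}$ on the diagonal. I will use this silently throughout.

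\textbf{Nonnegativity.} I would exhibit one admissible partition whose $\dgahead$ equals zero. The simplest choice is the degenerate partition
\[
b_0=b_1=\cdots=b_{2^k-1}=\bmin,\qquad b_{2^k}=\bmax,
\]
which is allowed since the definition only requires $\bmin=b_0\le b_1\le\cdots\le b_{2^k}=\bmax$. All but the last summand vanish by the convention above, and the last one equals $\dawmax{a}{\bmin,\bmax}$, so $\dga{k}{b_0,\dots,b_{2^k}}=\dawmax{a}{\bmin,\bmax}-\dawmax{a}{\bmin,\bmax}=0$. Hence $\dgamax{k}{\bmin,\bmax}\ge 0$.

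\textbf{Monotonicity.} Given any admissible partition $b_0\le\cdots\le b_{2^k}$ for $k$ questions, I would build a partition $c_0\le\cdots\le c_{2^{k+1}}$ for $k+1$ questions by doubling each point, namely $c_{2j}=b_j$ for $0\le j\le 2^k$ and $c_{2j+1}=b_j$ for $0\le j\le 2^k-1$. The intervals of odd index collapse, while the intervals of even index reproduce the original $[b_{j-1},b_j]$'s; summing yields
\[
\sum_{i=1}^{2^{k+1}}\frac{c_i-c_{i-1}}{c_{2^{k+1}}-c_0}\dawmax{a}{c_{i-1},c_i}
=\sum_{j=1}^{2^k}\frac{b_j-b_{j-1}}{b_{2^k}-b_0}\dawmax{a}{b_{j-1},b_j}.
\]
Therefore $\dga{k+1}{c_0,\dots,c_{2^{k+1}}}=\dga{k}{b_0,\dots,b_{2^k}}$. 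Taking the maximum over the $k$-question partition on the right and noting the left side is a particular $(k+1)$-question partition gives $\dgamax{k+1}{\bmin,\bmax}\ge\dgamax{k}{\bmin,\bmax}$.

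\textbf{Upper bound by $1$.} Alice's gain $\daw{\ta}{a}{b}$ is a convex combination of the two coordinates of a valuation vector that sums to $1$, hence $0\le\daw{\ta}{a}{b}\le 1$. Averaging over $b$ preserves the bound, so $\dawmax{a}{x,y}\le 1$ for every subinterval. Then the convex combination $\sum_i\frac{b_i-b_{i-1}}{b_{2^k}-b_0}\dawmax{a}{b_{i-1},b_i}$ in \eqref{eq:dga-definition} is also $\le 1$, and subtracting the nonnegative quantity $\dawmax{a}{\bmin,\bmax}$ gives $\dga{k}\le 1$ for every admissible partition, so $\dgamax{k}{\bmin,\bmax}\le 1$.

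I do not expect any serious obstacle: the main thing to be careful about is the bookkeeping around degenerate subintervals, which is handled uniformly by observing that the corresponding weight in \eqref{eq:dga-definition} is zero. The nontrivial content of the paper lies in bounding $\dgamaxhead_k$ above in terms of $\dgamaxhead_1$, which is the subject of Theorems~\ref{thm:upper-bound-general} and~\ref{thm:general-bound-differentiable}, not in this proposition.
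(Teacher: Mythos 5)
Your proof is correct, but for the first two claims it takes a genuinely different route from the paper. The paper first establishes the stronger local fact that every single refinement step helps, namely $\dga{1}{b_0,b_1,b_2}\ge 0$ for all $b_0\le b_1\le b_2$ (by evaluating both subintervals at the $\ta$ that is optimal for the whole interval and using the additivity of the integral in \eqref{eq:psi-integral}), and then obtains both nonnegativity and monotonicity from the telescoping decomposition
$\dga{k}{b_0,\dots,b_{2^k}}=\sum_{i\ \mathrm{even}}\frac{b_i-b_{i-2}}{b_{2^k}-b_0}\dga{1}{b_{i-2},b_{i-1},b_i}+\dga{k-1}{b_0,b_2,\dots,b_{2^k}}$.
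You instead embed a $k$-question partition into a $(k+1)$-question one (and the trivial partition into a $k$-question one) by repeating dividing points, so that the coarser value is literally attained as a feasible point of the finer maximization. Your argument is shorter and more elementary, but it leans on the convention that a zero-length subinterval contributes zero to \eqref{eq:dga-definition} even though $\dawmaxhead$ is formally $0/0$ there; you flag this, and since Definition~\ref{def:dgamax} permits non-strict inequalities among the $b_i$ this is a defensible reading (one could avoid it entirely by perturbing the repeated points by $\varepsilon$ and letting $\varepsilon\downarrow 0$, using boundedness of $\dawmaxhead$). What the paper's route buys is the refinement inequality $\dga{1}\ge 0$ and the telescoping identity themselves, which are the structural ingredients reused in the proofs of Theorems~\ref{thm:upper-bound-general} and~\ref{thm:ocz}; your route proves only the proposition as stated. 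The upper bound by $1$ is argued essentially as in the paper, and you correctly make explicit the needed fact $\dawmax{a}{\bmin,\bmax}\ge 0$ that the paper leaves implicit.
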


\subsection{Proof of Theorem~\ref{thm:ocz}}\label{secProofThem5AW}
First we prove some tools. In this special case when $1/2\geq \bmin$, the
integral in
\eqref{eq:psi-integral} could be computed
and the following properties could be easily derived by taking the first and
second derivatives. 

\begin{lem}
 If $1/2\leq\bmin$, then for $\bmin\leq\ta\leq\bmax$ we have,
\begin{equation}
 \daw{\ta}{a}{\bmin,\bmax} = 
\frac{a \log \left(\frac{4 \ta^2}{(\ta+\bmax)
(\ta+\bmin)}\right)-\ta+\bmax}{\bmax-\bmin},
\end{equation}
is concave in $\ta$, therefore it has a unique maximum. Furthermore if
$a\geq \tau_u(\bmin,\bmax)$ then the derivative is positive inside the
interval and therefore the maximum happens at $\bmax$ and if $a\leq
\tau_l(\bmin,\bmax)$ the derivative is negative inside the interval and
therefore the maximum happens at $\bmin$.
\end{lem}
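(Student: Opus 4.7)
The plan is to compute $\daw{\ta}{a}{\bmin,\bmax}$ in closed form from its integral definition \eqref{eq:psi-integral}, and then analyze the first and second derivatives in $\ta$. First I would identify which branch of the piecewise formula for $\daw{\ta}{a}{b}$ derived in Appendix~\ref{sec:binary-aw-proofs} applies for each $b\in[\bmin,\bmax]$ when $\ta\in[\bmin,\bmax]$ and $\bmin\geq 1/2$. Since $\ta\geq 1/2\geq 1-b$ for every such $b$, only two branches can arise: for $b\leq \ta$ the formula gives $a/(\ta+b)$, and for $\ta<b$ (where automatically $b>1/2$) it gives $1-a/(\ta+b)$; the other two branches are ruled out (except on a measure-zero set). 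Splitting the integral at $b=\ta$ and evaluating each half with elementary logarithms then produces $a\log\bigl(4\ta^2/((\ta+\bmax)(\ta+\bmin))\bigr)+\bmax-\ta$, which divided by $\bmax-\bmin$ yields the claimed closed form.

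For concavity I would differentiate twice. The first derivative equals
\[
\frac{1}{\bmax-\bmin}\left[a\!\left(\frac{2}{\ta}-\frac{1}{\ta+\bmin}-\frac{1}{\ta+\bmax}\right)-1\right],
\]
and the second derivative equals
\[
\frac{a}{\bmax-\bmin}\left[-\frac{2}{\ta^2}+\frac{1}{(\ta+\bmin)^2}+\frac{1}{(\ta+\bmax)^2}\right].
\]
Since $\bmin,\bmax>0$ we have $\ta+\bmin,\ta+\bmax>\ta$, so each of the two positive reciprocals is strictly less than $1/\ta^2$ and the bracketed quantity is strictly negative whenever $a>0$. Hence $\daw{\ta}{a}{\bmin,\bmax}$ is strictly concave in $\ta$ and attains a unique maximum on $[\bmin,\bmax]$ (the degenerate case $a=0$ makes $\Psi$ linearly decreasing, with the unique maximum at $\bmin$, consistent with the $\tau_l$ criterion below).

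For the endpoint criteria I would combine the rational part of the first derivative over a common denominator,
\[
\frac{2}{\ta}-\frac{1}{\ta+\bmin}-\frac{1}{\ta+\bmax}=\frac{(\bmin+\bmax)\ta+2\bmin\bmax}{\ta(\ta+\bmin)(\ta+\bmax)}.
\]
Substituting $\ta=\bmax$ reduces the condition ``derivative $\geq 0$'' to $a\geq 2\bmax(\bmax+\bmin)/(\bmax+3\bmin)=\tau_u(\bmin,\bmax)$, and substituting $\ta=\bmin$ reduces ``derivative $\leq 0$'' to $a\leq 2\bmin(\bmin+\bmax)/(3\bmax+\bmin)=\tau_l(\bmin,\bmax)$. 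By the strict concavity just proved, the first derivative is strictly decreasing in $\ta$, so positivity at the right endpoint $\bmax$ propagates to the whole interval and negativity at the left endpoint $\bmin$ likewise propagates, giving the claimed monotonicity and the locations of the maximum.

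The computations are routine; the main care needed is the case split over the four branches of $\Psi(\cdot,a\|b)$ and the algebraic simplification that produces the clean closed forms $\tau_u$ and $\tau_l$.
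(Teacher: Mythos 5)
Your proposal is correct and follows essentially the same route as the paper: split the integral at $b=\ta$ using the two relevant branches of $\dawhead$, evaluate to get the closed form, establish concavity via the second derivative (each of $1/(\ta+\bmin)^2$ and $1/(\ta+\bmax)^2$ being less than $1/\ta^2$), and then use concavity to reduce the sign conditions to the endpoints $\ta=\bmax$ and $\ta=\bmin$, which after simplification give exactly $\tau_u$ and $\tau_l$. Your explicit handling of the degenerate case $a=0$ is a small additional care the paper omits, but otherwise the arguments coincide.
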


\begin{proof}
 Using the expressions in \eqref{eq:prop-daw}, we have
\begin{equation}
 \begin{split}
  \daw{\ta}{a}{\bmin,\bmax} &= \frac{1}{\bmax-\bmin} \int_{\bmin}^{\bmax}
\daw{\ta}{a}{b} \mathrm{d} b \\
&= \frac{1}{\bmax-\bmin} \left ( \int_{\bmin}^{\ta} \frac{a}{\ta + x} \mathrm{d}
x + \int_{\ta}^{\bmax} \left(1- \frac{a}{\ta+x}\right) \mathrm{d} x \right ) \\
&= \frac{a \log \left(\frac{4 \ta^2}{(\ta+\bmax)
(\ta+\bmin)}\right)-\ta+\bmax}{\bmax-\bmin}.
 \end{split}
\end{equation}
Omitting the linear terms, we need to show that $\log \left(\frac{\ta^2}{(\ta +
\bmin)
(\ta+\bmax)}\right)$ is concave in $\ta$, the second derivative is equal to
\begin{equation}
\begin{gathered}
 \frac{1}{(\bmin+\ta)^2} + \frac{1}{(\bmax+\ta)^2} - \frac{2}{\ta^2} \\
= \left ( \frac{1}{(\bmin+\ta)^2} - \frac{1}{\ta^2} \right ) 
 + \left ( \frac{1}{(\bmax+\ta)^2} - \frac{1}{\ta^2} \right ) \leq 0,
\end{gathered}
\end{equation}
which shows the concavity.

Now assume that $a\geq \tau_u$. Since the function is concave, it suffices to
show that the derivative is positive at $\ta = \bmax$. The first derivative is
equal to,
\begin{equation}
\frac{ \frac{\ta(\ta + \bmax) - 2 a \bmax}{\ta(\bmin - \bmax)} -
\frac{a}{\ta+\bmin}}{\ta+\bmax}.
\end{equation}
Substituting $\ta = \bmax$,
\begin{equation}
 \frac{ -2\bmax(\bmax + \bmin) + a(\bmax+3\bmin)}{2\bmax(\bmax^2 - \bmin^2)}.
\end{equation}
Note that the denominator is positive since $\bmax>\bmin>0$, therefore
expression is greater than or equal to zero if and only if $a\geq \tau_u$. For
the second case, again since the function is concave, in order to show that the
maximum happens at $\bmin$, it suffices to check the derivative at $\ta = \bmin$
which is equal to,
\begin{equation}
 \frac{a(3\bmax + \bmin) - 2\bmax \bmin - 2\bmin^2}{2\bmin(\bmax^2 -\bmin^2)}.
\end{equation}
Again since the denominator is positive, the first derivative is less than or
equal to zero if and only if $a\leq \tau_l$.
\end{proof}

\begin{lem}
\label{lem:taus}
 (a) If $1/2\leq \bmin < \bmax \leq 1$, the thresholds $\tau_l$ and $\tau_u$
satisfy
\begin{equation}
 \tau_l \leq \bmin \leq \bmax \leq \tau_u,
\end{equation}
(b) If $[b_0,b_1]$ is a subinterval of $[\bmin,\bmax]$, i.e. $\bmin\leq
b_0$ and $b_1 \leq \bmax$, then
\begin{equation}
 \tau_l(\bmin,\bmax) \leq \tau_l(b_0,b_1) \leq \tau_u(b_0,b_1) \leq
\tau_u(\bmin,\bmax).
\end{equation}
\end{lem}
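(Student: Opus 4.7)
The plan is to dispatch both parts of the lemma by direct algebraic manipulation of the explicit formulas for $\tau_l$ and $\tau_u$. Both quantities are rational functions of $\bmin$ and $\bmax$, and the inequalities we need are essentially sign and monotonicity statements that follow from straightforward calculation once the right expressions are written down.

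For part (a), I would simply compute the two differences $\tau_u(\bmin,\bmax) - \bmax$ and $\bmin - \tau_l(\bmin,\bmax)$ after clearing the common denominators. A short calculation yields
\[
\tau_u(\bmin,\bmax) - \bmax \;=\; \frac{\bmax(\bmax-\bmin)}{\bmax + 3\bmin},
\qquad
\bmin - \tau_l(\bmin,\bmax) \;=\; \frac{\bmin(\bmax-\bmin)}{3\bmax+\bmin},
\]
both of which are non-negative since $1/2 \leq \bmin \leq \bmax$ forces both numerator and denominator to be non-negative. This gives $\tau_l \leq \bmin \leq \bmax \leq \tau_u$ immediately.

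For part (b), the key is to show that $\tau_u$ is non-decreasing in $\bmax$ and non-increasing in $\bmin$, while $\tau_l$ is non-decreasing in $\bmin$ and non-increasing in $\bmax$. Rewriting $\tau_u(\bmin,\bmax) = 2\bmax(\bmax+\bmin)/(\bmax + 3\bmin)$ and $\tau_l(\bmin,\bmax) = 2\bmin(\bmax+\bmin)/(3\bmax+\bmin)$, a direct differentiation gives
\[
\frac{\partial \tau_u}{\partial \bmin} = \frac{-4\bmax^2}{(\bmax+3\bmin)^2},
\qquad
\frac{\partial \tau_u}{\partial \bmax} = \frac{2\bmax^2 + 12\bmax\bmin + 6\bmin^2}{(\bmax+3\bmin)^2},
\]
with the sign-mirrored analogues for $\tau_l$. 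All signs are immediate under $\bmin,\bmax>0$. Then, going from $[\bmin,\bmax]$ to the subinterval $[b_0,b_1]$ in two steps (first raise the lower endpoint from $\bmin$ up to $b_0$, then lower the upper endpoint from $\bmax$ down to $b_1$), each step decreases $\tau_u$ and increases $\tau_l$, yielding $\tau_u(b_0,b_1) \leq \tau_u(\bmin,\bmax)$ and $\tau_l(b_0,b_1) \geq \tau_l(\bmin,\bmax)$. The remaining middle inequality $\tau_l(b_0,b_1) \leq \tau_u(b_0,b_1)$ is just part (a) applied to the subinterval $[b_0,b_1]$ (which still satisfies $b_0 \geq 1/2$).

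I do not anticipate any substantive obstacle here: the whole lemma is a collection of routine rational-function inequalities whose only difficulty is bookkeeping. The only point requiring a bit of care is to verify that the hypothesis $\bmin \geq 1/2 > 0$ (and hence $b_0 > 0$) keeps all denominators strictly positive, so that the sign of each ratio is dictated by the sign of its numerator.
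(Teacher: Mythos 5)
Your proposal is correct and follows essentially the same route as the paper: part (a) by a direct sign computation and part (b) by a monotonicity argument for $\tau_u$ and $\tau_l$ as the endpoints move. The only cosmetic difference is that the paper establishes the monotonicity by writing $\tau_u=\bmax\cdot 2(s+1)/(s+3)$ and $\tau_l=\bmin\cdot 2(s+1)/(3s+1)$ with $s=\bmax/\bmin$ and using monotonicity in the single parameter $s$, whereas you differentiate in each endpoint separately and move them one at a time; both calculations check out.
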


\begin{proof}
 If $s$ denotes the ratio of endpoints, $\bmax/\bmin$, we see that,
\begin{equation}
\label{eq:tau-s}
 \begin{split}
\tau_u &=  \bmax \frac{2(s+1)}{s+3}, \\
\tau_l  &= \bmin \frac{2(s+1)}{3s+1}.
\end{split}
\end{equation}

For part (a), note that $2(s+1)/(s+3) \geq 1$ for $s\geq 1$. Since
$s=\bmax/\bmin\geq 1$ and  $\tau_u \geq \bmax$. Similarly, for $s\geq 1$,
$2(s+1)/(3s+1) \leq 1$ which shows that $\tau_l \leq \bmin$.

For the second part, if $s'$ denotes $b1/b0$, we have $b1\leq \bmax$ and
$s'<s$, therefore \eqref{eq:tau-s} and the fact that the function
$2(s+1)/(s+3)$ is increasing show that $\tau_u(b0,b1) \leq
\tau_u(\bmin,\bmax)$. 
Similarly, since $\bmin\leq b_0$ and the function $2(s+1)/(3s+1)$ is
decreasing, $\tau_l(b0,b1) \geq
\tau_l(\bmin,\bmax)$
\end{proof}

The following Lemma gives a simple expression for $\dawmaxhead$ in this special
case.

\begin{lem}
\label{lem:ocz-dawmax}
 With conditions of Theorem~\ref{thm:ocz}, we have the following formulation
for $\dawmaxhead$,
\begin{equation}
 \dawmax{a}{\bmin,\bmax} = \begin{cases}
                            \frac{a \log \left(\frac{2
\bmax}{\bmax+\bmin}\right)}{\bmax-\bmin} & a\geq \tau_u, \\
			    \frac{a \log \left(\frac{2
\bmin}{\bmax+\bmin}\right)}{\bmax-\bmin}+1 & a\leq \tau_l.
                           \end{cases}
\end{equation}
\end{lem}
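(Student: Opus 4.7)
The plan is to derive the two case expressions by directly substituting the optimal announced valuation into the closed-form integral expression for $\daw{\ta}{a}{\bmin,\bmax}$ established in the first lemma of this section. That lemma already tells us two things: (i) on $[\bmin,\bmax]$ the function $\daw{\ta}{a}{\bmin,\bmax}$ is concave in $\ta$, and (ii) when $a\geq \tau_u$ its derivative is positive throughout the interval, whereas when $a\leq \tau_l$ its derivative is negative throughout. Combining these observations with Corollary~\ref{cor:def-dawmax-justification}, which localizes the global maximum to $[\bmin,\bmax]$, we obtain that $\dawmax{a}{\bmin,\bmax} = \daw{\bmax}{a}{\bmin,\bmax}$ when $a\geq \tau_u$ and $\dawmax{a}{\bmin,\bmax} = \daw{\bmin}{a}{\bmin,\bmax}$ when $a\leq \tau_l$.

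First I would handle the case $a\geq \tau_u$ by plugging $\ta=\bmax$ into
\[
\daw{\ta}{a}{\bmin,\bmax}=\frac{a\log\!\left(\frac{4\ta^2}{(\ta+\bmax)(\ta+\bmin)}\right)-\ta+\bmax}{\bmax-\bmin}.
\]
The linear term $-\ta+\bmax$ vanishes, and the log argument simplifies as $4\bmax^2/\bigl((2\bmax)(\bmax+\bmin)\bigr)=2\bmax/(\bmax+\bmin)$, yielding the first expression in the lemma.

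Next I would handle the case $a\leq \tau_l$ by plugging $\ta=\bmin$ into the same formula. Here the log argument becomes $4\bmin^2/\bigl((\bmin+\bmax)(2\bmin)\bigr)=2\bmin/(\bmax+\bmin)$, while the linear term contributes $-\bmin+\bmax=\bmax-\bmin$, which after division by $\bmax-\bmin$ produces the additive $+1$ in the second expression.

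There is no serious obstacle here: all of the analytical content (concavity in $\ta$, the sign of the derivative at the endpoints relative to the thresholds $\tau_l,\tau_u$, and confinement of the maximizer to $[\bmin,\bmax]$) has already been proved in the preceding lemma and corollary, so the proof reduces to routine algebraic simplification.
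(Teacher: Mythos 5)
Your proposal is correct and follows essentially the same route as the paper's own proof: both rely on the concavity of $\daw{\ta}{a}{\bmin,\bmax}$ in $\ta$ and the sign of its derivative relative to the thresholds $\tau_u,\tau_l$ (established in the preceding lemma) to place the maximizer at the endpoint $\bmax$ or $\bmin$, and then substitute into the closed-form expression. Your algebraic simplifications of the logarithm's argument and of the linear term at each endpoint are also correct.
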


\begin{proof}
 As we have shown before, $\daw{\ta}{a}{\bmin,\bmax}$ is differentiable
and concave in $\ta$, therefore its maximum value either happens at endpoints
or could be obtained by setting its derivative equal to zero. However, since
the function is concave, the maximum happens at $\bmax$ if and only if the
derivative is nonnegative entirely in the interval, which reduces to the
condition that the derivative is nonnegative at $\bmax$. Simplifying this
condition, we realize that this happens when $a\geq \tau_u$, therefore
substituting $\ta = \bmax$ we get the expression for the first case. Using a
similar method and by setting the derivative at $\bmin$ to be less than or
equal to zero, we get the second case.
\end{proof}

In the next Lemma, we derive the exact form of $\dgamaxhead_k$.

\begin{lem}
\label{lem:geom-sec}
 For $a$ and $[\bmin,\bmax]$ fixed, the geometric sequence
$\hb_0,\dots,\hb_{2^k}$ where $\hb_0 = \bmin$, $\hb_{2^k} = \bmax$ and
\begin{equation}
 \log \hb_i = \left(1-\frac{i}{2^k}\right) \log \hb_0 + \frac{i}{2^k} \log
\hb_{2^k}
\quad 1<i<2^k,
\end{equation}
maximizes $\dgamax{k}{\bmin,\bmax}$.
\end{lem}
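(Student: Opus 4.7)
The plan is to reduce the optimization defining $\dgamax{k}{\bmin,\bmax}$ to a symmetric concave maximization under a single linear (after a logarithmic change of variables) constraint, and then invoke Jensen's inequality to conclude that the optimal subdivision is equispaced in the log scale.

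First I would split into the two cases $a\geq \tau_u(\bmin,\bmax)$ and $a\leq \tau_l(\bmin,\bmax)$ allowed by Theorem~\ref{thm:ocz}. By Lemma~\ref{lem:taus}(b), the assumption $a\geq \tau_u(\bmin,\bmax)$ implies $a\geq \tau_u(b_{i-1},b_i)$ for every subinterval $[b_{i-1},b_i]\subseteq[\bmin,\bmax]$, and similarly for the $\tau_l$ case. Hence the same branch of the formula in Lemma~\ref{lem:ocz-dawmax} applies uniformly to every subinterval, giving
\begin{equation*}
(b_i-b_{i-1})\,\dawmax{a}{b_{i-1},b_i}
\;=\;
\begin{cases}
a\,\log\!\dfrac{2 b_i}{b_i+b_{i-1}} & a\geq \tau_u,\\[1mm]
a\,\log\!\dfrac{2 b_{i-1}}{b_i+b_{i-1}}+(b_i-b_{i-1}) & a\leq \tau_l.
\end{cases}
\end{equation*}
Plugging this into the definition of $\dga{k}{b_0,\dots,b_{2^k}}$ and using $b_0=\bmin$, $b_{2^k}=\bmax$, the improvement becomes, up to an additive constant and a positive multiplicative factor $a/(\bmax-\bmin)$, either $\sum_{i=1}^{N}\log\!\frac{2 b_i}{b_i+b_{i-1}}$ (for $a\ge\tau_u$) or $\sum_{i=1}^{N}\log\!\frac{2 b_{i-1}}{b_i+b_{i-1}}$ (for $a\le\tau_l$), where $N=2^k$.

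Next I would change variables. In the case $a\geq \tau_u$, set $r_i=b_{i-1}/b_i\in(0,1]$ and $x_i=\log r_i\leq 0$; then the telescoping product gives the constraint $\sum_i x_i=\log(\bmin/\bmax)$, and the objective becomes $\sum_i f(x_i)$ with $f(x)=\log 2-\log(1+e^{x})$. A direct computation shows $f''(x)=-e^{x}/(1+e^{x})^{2}<0$, so $f$ is strictly concave. By Jensen's inequality (or, equivalently, by Lagrange multipliers — the KKT conditions force all $x_i$ equal at an interior maximizer), the sum $\sum_i f(x_i)$ under a fixed-sum constraint is maximized when $x_1=\dots=x_N$, i.e.\ $r_i=(\bmin/\bmax)^{1/N}$ for every $i$. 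This is exactly the geometric sequence $\hb_i$. The case $a\leq \tau_l$ is handled in parallel with $s_i=b_i/b_{i-1}\geq 1$ and $y_i=\log s_i$, yielding the same strictly concave $f$ and the same conclusion.

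Finally I would verify that the candidate $\{\hb_i\}$ is admissible — i.e.\ that $\hb_0\leq \hb_1\leq\dots\leq\hb_{2^k}$, which is automatic since $\bmin\leq\bmax$ and the common ratio $(\bmax/\bmin)^{1/N}\geq 1$. I expect no real obstacles: the only piece that requires care is checking that the uniform branch of Lemma~\ref{lem:ocz-dawmax} survives the subdivision, which is exactly the content of Lemma~\ref{lem:taus}(b), and the concavity of $f$ is an elementary one-line computation. The heart of the argument is therefore a Jensen/Lagrange step after the logarithmic change of variables.
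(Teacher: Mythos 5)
Your proposal is correct and follows essentially the same route as the paper's proof: both reduce the objective via Lemma~\ref{lem:taus}(b) and Lemma~\ref{lem:ocz-dawmax} to a sum of the form $\sum_i \log\bigl(2/(1+e^{\pm x_i})\bigr)$ with $x_i$ the log-ratios of consecutive endpoints summing to $\log(\bmax/\bmin)$, and then apply Jensen's inequality to the (strictly) concave/convex summand to force all ratios equal. The only difference is cosmetic — you maximize a concave $f$ where the paper minimizes the convex $\log(1+e^{-x})$ — so no further comment is needed.
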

\begin{proof}
We will take two cases, $a\geq \tau_u$ or $a\leq \tau_l$. First assume that $a
\geq \tau_u$. Using Lemma~\ref{lem:ocz-dawmax} we have,
\begin{equation}
 \begin{split}
  \dga{k}{b_0,\dots,b_{2^k}} &= \sum_{i=1}^{2^k}
\frac{%
b_i - b_{i-1}
}{b_{2^k} - b_0}\dawmax{a}{b_{i-1},b_i} - \dawmax{a}{b_0,b_{2^k}} \\
  &= \sum_{i=1}^{2^k} \frac{a \log \left ( \frac{2b_i}{b_i + b_{i-1}} \right )
}{b_{2^k} - b_0} - \frac{a\log \left ( \frac{2b_{2^k}}{b_{2^k} + b_0} \right
)}{b_{2^k} - b_0},
 \end{split}
\end{equation}
where the last equality holds since $[b_{i-1},b_i]$ is a subinterval of
$[\bmin,\bmax]$ and hence using Lemma~\ref{lem:taus}, $a\geq
\tau_u(b_{i-1},b_0)$.
Note that $a, b_0$ and $b_{2^k}$ are constant, therefore by defining $s_i =
b_i/b_{i-1}$ we should maximize the following,
\begin{equation}
 \begin{split}
  \sum_{i=1}^{2^k} \log \left ( \frac{2b_i}{b_i + b_{i-1}} \right ) &=
\sum_{i=1}^{2^k} \log \left ( \frac{2}{1 + \frac{1}{s_i}} \right )= \log \prod_{i=1}^{2^k} \frac{2}{1+\frac{1}{s_i}}.
 \end{split}
\end{equation}
Since $\log$ is increasing, in order to maximize this, we need to minimize
$\alpha$ where,
\begin{equation}
 \alpha = \prod_{i=1}^{2^k} \left ( 1 + \frac{1}{s_i} \right).
\end{equation}
If we define $\hs_i = \hb_i / \hb_{i-1}$, since $\hb_i$ is a geometric
sequence,
\begin{equation}
 \hs_i = \left ( \frac{\bmax}{\bmin} \right ) ^{1/2^k}.
\end{equation}
Now define $\rho_i = \log s_i$ and $\hr_i = \log \hs_i$. Note that $\hr_i$ is a
constant sequence. In fact, since $\prod s_i =
\prod \hs_i = \bmax / \bmin$, $\sum \rho_i = \sum \hr_i = \log \bmax - \log
\bmin$. Also $\hb_i$ is geometric, hence for all $1\leq j \leq 2^k$
\begin{equation}
\label{eq:hr_i-sum-rho_i}
 \hr_j = \frac{\log \bmax - \log \bmin}{2^k} = \frac{\sum_{i=1}^{2^k}
\rho_i}{2^k}.
\end{equation}
Now, by defining $f(x) = \log ( 1+e^{-x})$ which is convex,
\begin{equation}
 \begin{split}
  \log \alpha &= \sum_{i=1}^{2^k} \log \left ( 1 + \frac{1}{s_i} \right ) = \sum_{i=1}^{2^k} \log \left ( 1+ \exp (-\rho_i)\right ) = \sum_{i=1}^{2^k} f(\rho_i) \\
  &\stackrel{(a)}{\geq} 2^k f\left ( 
\frac{\sum_{i=1}^{2^k}
\rho_i}{2^k} \right ) \stackrel{(b)}{=}   2^k f\left ( 
\frac{\sum_{i=1}^{2^k} \hr_i}{2^k} \right )\stackrel{(c)}{=} \sum_{i=1}^{2^k} f(\hr_i),
 \end{split}
\end{equation}
where $(a)$ uses Jensen's inequality and the fact that $f(x)$ is convex,
$(b)$ uses \eqref{eq:hr_i-sum-rho_i} and $(c)$ uses the fact that $\hr_i$ is a
constant sequence. Thus $\hb_i$ minimizes $\alpha$ or equivalently
maximizes $\dgahead_k$.

Now consider he case where $a \leq \tau_l$. In this case we have,
\begin{equation}
 \begin{split}
  \dga{k}{b_0,\dots,b_{2^k}} &= \sum_{i=1}^{2^k}
\frac{%
b_i - b_{i-1}
}{b_{2^k} - b_0}\dawmax{a}{b_{i-1},b_i}, - \dawmax{a}{b_0,b_{2^k}} \\
  &= \sum_{i=1}^{2^k} \frac{b_i - b_{i-1}}{b_{2^k} - b_0} \left ( \frac{a \log
\left ( \frac{2 b_{i-1}}{b_i + b_{i-1}} \right )}{b_i - b_{i-1} } +1 \right ) 
- \left ( \frac{a \log \left ( \frac{2b_0}{b_{2^k} + b_0} \right )}{b_{2^k} -
b_0} +1\right )\\
&= \sum_{i=1}^{2^k} \frac{a \log
\left ( \frac{2 b_{i-1}}{b_i + b_{i-1}} \right )}{b_{2^k} - b_0} 
- \frac{a \log \left ( \frac{2b_0}{b_{2^k} + b_0} \right )}{b_{2^k} - b_0},
 \end{split}
\end{equation}
where again we have used Lemma~\ref{lem:taus} which guarantees that $a\leq
\tau_l(b_{i-1},b_i)$.
Omitting the constant terms, we should maximize
\begin{equation}
 \sum \log \frac{2b_{i-1}}{b_i + b_{i-1}} = \log \prod \frac{2}{1+s_i}.
\end{equation}
Since $\log$ is increasing, we should minimize $\beta = \prod_{i=1}^{2^k}
1+s_i$. By defining $g(x) = f(-x) = \log \left ( 1+ e^x\right ) $ which is
convex, we have
\begin{equation}
 \begin{split}
  \log \beta &= \sum_{i=1}^{2^k} \log \left ( 1 + s_i \right ) = \sum_{i=1}^{2^k} \log \left ( 1+ \exp (\rho_i)\right ) = \sum_{i=1}^{2^k} g(\rho_i) \\
  &\stackrel{(a)}{\geq} 2^k g\left ( 
\frac{\sum_{i=1}^{2^k}
\rho_i}{2^k} \right ) \stackrel{(b)}{=}   2^k g\left ( 
\frac{\sum_{i=1}^{2^k} \hr_i}{2^k} \right )\stackrel{(c)}{=} \sum_{i=1}^{2^k} g(\hr_i),
 \end{split}
\end{equation}
where $(a)$ uses Jensen's inequality and convexity of $g$, $(b)$ uses
\eqref{eq:hr_i-sum-rho_i} and $(c)$ uses the fact that $\hr_i$ is a
constant sequence. Thus $\hb_i$ minimizes $\beta$ or equivalently
maximizes $\dgahead_k$.
\end{proof}

\begin{remark}
 Note that this lemma shows that the optimal series of divisions for $k$
questions is exactly the same for that of $k-1$ questions together with the
optimal dividing question for each of the $2^{k-1}$ subintervals.
\end{remark}

\begin{lem}
\label{lem:interval-concave}
 If $a \notin [\tau_l(\bmin,\bmax),\tau_u(\bmin,\bmax)]$, then $\dgamaxhead_1$
is interval concave.
\end{lem}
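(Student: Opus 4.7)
The plan is to combine Lemmas~\ref{lem:ocz-dawmax}, \ref{lem:geom-sec}, and~\ref{lem:taus} into a closed form for $\dgamaxhead_1(x,y)$ on every subinterval, reduce interval concavity to a one-dimensional superadditivity statement, and then verify superadditivity through convexity.

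First I would invoke Lemma~\ref{lem:taus}(b): the hypothesis $a\notin[\tau_l(\bmin,\bmax),\tau_u(\bmin,\bmax)]$ propagates to every subinterval, i.e.\ $a\notin[\tau_l(x,y),\tau_u(x,y)]$ for all $[x,y]\subseteq[\bmin,\bmax]$. Thus Lemma~\ref{lem:ocz-dawmax} supplies a closed form for $\dawmax{a}{x,y}$ on every such subinterval, and Lemma~\ref{lem:geom-sec} with $k=1$ identifies the optimal split point as the geometric mean $b_1=\sqrt{xy}$. Substituting $b_1=\sqrt{xy}$ into the definition of $\dga{1}{x,b_1,y}$, a short calculation shows that in the $a\le\tau_l$ regime the linear additive term $(q-p)$ coming from $(q-p)\dawmax{a}{p,q}$ cancels telescopically, and in both regimes one arrives at
\[
(y-x)\,\dgamax{1}{x,y} \;=\; a\,\log\phi(x,y), \qquad \phi(x,y)\;:=\;\frac{2(x+y)}{(\sqrt{x}+\sqrt{y})^2}.
\]

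Multiplying the interval-concavity inequality by $y-x$ shows (trivially for $a=0$, and after dividing by $a$ when $a>0$) that it is equivalent to the superadditivity
\[
\log\phi(x,t)+\log\phi(t,y)\;\le\;\log\phi(x,y),\qquad x\le t\le y.
\]
Since $\phi(x,y)$ depends on $(x,y)$ only through $\theta:=\tfrac{1}{2}\log(y/x)$, I would set $h(\theta):=\log\phi(x,y)$ and reformulate the claim as
\[
h(\theta_1)+h(\theta_2)\;\le\;h(\theta_1+\theta_2), \qquad 0\le\theta_1,\theta_2,\ \theta_1+\theta_2\le\tfrac{1}{2}\log(\bmax/\bmin).
\]

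Next, from the identity $\phi=\cosh\theta/\cosh^2(\theta/2)$ I would obtain $h(\theta)=\log\cosh\theta-2\log\cosh(\theta/2)$, hence $h(0)=0$ and
\[
h''(\theta) \;=\; \frac{1}{\cosh^2\theta}-\frac{1}{2\cosh^2(\theta/2)}.
\]
Using $\cosh\theta=2\cosh^2(\theta/2)-1$ and writing $q=\cosh\theta$, the condition $h''(\theta)\ge 0$ reduces to $q^2-q-1\le 0$, i.e.\ $\cosh\theta\le\varphi:=(1+\sqrt 5)/2$. Under the standing hypothesis $1/2\le\bmin<\bmax\le 1$ we have $\bmax/\bmin\le 2$, so $\theta\le\tfrac{1}{2}\log 2$ and $\cosh\theta\le\cosh(\tfrac{1}{2}\log 2)=3\sqrt{2}/4\approx 1.06<\varphi$. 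Therefore $h$ is convex on the relevant interval.

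Finally, a convex function $h$ with $h(0)=0$ satisfies $h(\lambda\sigma)\le\lambda h(\sigma)$ for $\lambda\in[0,1]$; applying this with $\sigma=\theta_1+\theta_2$ and $\lambda=\theta_i/(\theta_1+\theta_2)$ for $i=1,2$ and summing produces the desired superadditivity and hence interval concavity. The main obstacle is the algebraic simplification leading to the compact closed form $(y-x)\dgamax{1}{x,y}=a\log\phi(x,y)$ in both parameter regimes; once the change of variables $\theta=\tfrac{1}{2}\log(y/x)$ and the identity $\phi=\cosh\theta/\cosh^2(\theta/2)$ are in hand, the remaining computation is a short exercise in elementary calculus.
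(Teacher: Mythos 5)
Your proof is correct, and while its first half coincides with the paper's own argument (both invoke Lemma~\ref{lem:taus}(b) to propagate the hypothesis on $a$ to subintervals, Lemma~\ref{lem:ocz-dawmax} for the closed form of $\dawmaxhead$, and Lemma~\ref{lem:geom-sec} to place the optimal split at the geometric mean, arriving at the same expression $(y-x)\dgamax{1}{x,y}=a\log\frac{2(x+y)}{(\sqrt{x}+\sqrt{y})^2}$), the second half is genuinely different. The paper establishes interval concavity via the cruder sufficient condition that $\dgamaxhead_1$ is monotone under interval inclusion, i.e.\ $\dgamax{1}{x,t}\leq\dgamax{1}{x,y}$ and $\dgamax{1}{t,y}\leq\dgamax{1}{x,y}$, which it reduces to the monotonicity of a one-variable function $f(s)$ on $[1,\sqrt{2}]$ --- and that monotonicity is supported only by a plot and the remark that it ``could be shown analytically.'' You instead reduce interval concavity exactly (not merely sufficiently) to superadditivity of $h(\theta)=\log\cosh\theta-2\log\cosh(\theta/2)$ in the additive variable $\theta=\tfrac{1}{2}\log(y/x)$, and prove it from $h(0)=0$ together with convexity, where $h''\geq 0$ is verified analytically through the clean criterion $\cosh\theta\leq(1+\sqrt{5})/2$, which holds on the range $\theta\leq\tfrac{1}{2}\log 2$ guaranteed by the standing assumption $1/2\leq\bmin<\bmax\leq 1$ (the same constraint the paper uses to restrict $s$ to $[1,\sqrt{2}]$). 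Your route therefore closes the analytic gap the paper leaves open, at the modest cost of a hyperbolic change of variables; the paper's route, had $f$'s monotonicity been proved rather than plotted, would be marginally shorter but logically weaker in the sense that it proves a stronger intermediate property than is needed.
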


\begin{proof}
Assume $b_0\geq \bmin$ and $b_2 \leq \bmax$, as a result of
Lemma~\ref{lem:taus} part (b), $a\notin [\tau_l(b_0,b_2),\tau_u(b_0,b_2)]$.
 We can derive the formulation for
$\dgahead_2$. Using Lemma~\ref{lem:ocz-dawmax}, for the case of $a\leq \tau_l$:
\begin{equation}
 \begin{split}
  \dga{1}{b_0,b_1,b_2} &= \sum_{i=1}^{2}
\frac{%
b_i - b_{i-1}
}{b_{2} - b_0}\dawmax{a}{b_{i-1},b_i} - \dawmax{a}{b_0,b_2} \\
&= \sum_{i=1}^{2} \frac{a \log
\left ( \frac{2 b_{i-1}}{b_i + b_{i-1}} \right )}{b_{2} - b_0} 
- \frac{a \log \left ( \frac{2b_0}{b_{2} + b_0} \right )}{b_{2} - b_0} \\
&= \frac{a \log \left(\frac{2 b_1
(b_2+b_0)}{(b_1+b_2)
(b_1+b_0)}\right)}{b_2-b_0}.
 \end{split}
\end{equation}
Similarly, for the case of $a\geq \tau_u$:
\begin{equation}
\label{eq:dga-012-second}
 \begin{split}
  \dga{1}{b_0,b_1,b_2} &= \sum_{i=1}^{2}
\frac{%
b_i - b_{i-1}
}{b_{2} - b_0}\dawmax{a}{b_{i-1},b_i} - \dawmax{a}{b_0,b_{2}} \\
  &= \sum_{i=1}^{2} \frac{a \log \left ( \frac{2b_i}{b_i + b_{i-1}} \right )
}{b_{2} - b_0} - \frac{a\log \left ( \frac{2b_{2}}{b_{2} + b_0} \right
)}{b_{2} - b_0} \\
&= \frac{a \log \left(\frac{2 b_1
(b_2+b_0)}{(b_1+b_2)
(b_1+b_0)}\right)}{b_2-b_0}.
 \end{split}
\end{equation}
We observe that $\dga{1}{b_0,b_1,b_2}$ is the same in the two cases. Using
Lemma~\ref{lem:geom-sec} and substituting $b_1 = \sqrt{b_0
b_2}$,
\begin{equation}
 \dgamax{1}{b_0,b_2} =   \frac{a \log \left(\frac{2
(b_2+b_0)}{\left(\sqrt{b_2}+\sqrt{b_0}\right)^2}
\right)}{b_2-b_0}.
\end{equation}
By defining $s = \sqrt{b_2 / b_0}$, the ratio of interval endpoints, we can
rewrite
$\dgamaxhead_2$ in the following form,
\begin{equation}
\label{eq:delta1-simplified-form}
 \dgamax{1}{b_0,b_2} = \frac{a}{b_0} \frac{\log
\left(\frac{\left(s-1\right)^2}{\left(s+1\right)^2}+1\right)}{s^2-1}
.
\end{equation}

 We show that if $b_0\leq b_1 \leq b_2$, then $\dgamax{1}{b_0,b_1} \leq
\dgamax{1}{b_0,b_2}$ and $\dgamax{1}{b_1,b_2} \leq
\dgamax{1}{b_0,b_2}$ which is sufficient for a function to be interval
concave. Note that the first term in \eqref{eq:delta1-simplified-form},
$a/b_0$, is decreasing in $b_0$, thus it suffices to show that $f(s)$
defined as,
\begin{equation}
\label{eq:AW-fs}
f(s)= \frac{\log
\left(\frac{\left(s-1\right)^2}{\left(s+1\right)^2}+1\right)}{s^2-1}
,
\end{equation}
is increasing in $s$ when $1\leq s \leq \sqrt{2}$ ($s$ is the square root
of the ratio of
$\bmax$ and $\bmin$ and hence is greater than $1$, also $\bmax/\bmax$ is at
most $2$, since $\bmin\geq 1/2$ and $\bmax\leq 1$). Monotonicity of $f$ could
be shown analytically. Its plot is provided in Figure~\ref{fig:ocz-f(s)}.
\end{proof}

\begin{figure}
 \centering
  \includegraphics[width=0.5\textwidth]{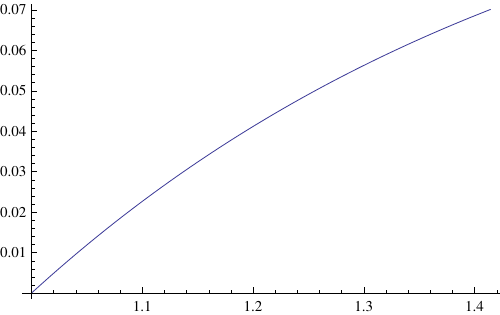}
  \caption{The plot of $f(s)$ as defined in \eqref{eq:AW-fs} for $1\leq s\leq
\sqrt{2}$\label{fig:ocz-f(s)}}
\end{figure}


Now we have sufficient tools to prove Theorem~\ref{thm:ocz}.

\begin{proof}[Proof of Theorem~\ref{thm:ocz}]
 For the sake of simplicity, we use $\dgamaxhead_k$ to denote
$\dgamax{k}{\bmin,\bmax}$. For proving the concavity of the sequence, it
suffices to
prove that for $k\geq 2$,
\begin{equation}
 \dgamaxhead_k - \dgamaxhead_{k-1} \geq \dgamaxhead_{k-1} - \dgamaxhead_{k-2},
\end{equation}
where $\dgamaxhead_0$ is defined to be $0$. Assume $b_0,\dots, b_{2^k}$ is the
sequence given by Lemma~\ref{lem:geom-sec} which maximize $\dgahead_k$.
Note that since the sequence is geometric, the sequence $b_i, 0\leq i \leq
2^k,i \equiv 0 \pmod{2}$ is the sequence maximizing $\dgahead_{k-1}$ and also
the sequence $b_i, 0\leq i \leq
2^k,i \equiv 0 \pmod{4}$ maximizes $\dgahead_{k-2}$. Hence,
\begin{equation}
 \begin{split}
  \dgamaxhead_k &= \sum_{i=1}^{2^k} \frac{%
b_i - b_{i-1}
}{b_{2^k} - b_0} \dawmax{a}{b_{i-1},b_i} -
\dawmax{a}{b_0, b_{2^k}}, \\
  \dgamaxhead_{k-1} &= \sum_{\stackrel{i=1}{2|i}}^{2^k}
\frac{%
b_i - b_{i-2}
}{b_{2^k} - b_0}\dawmax{a}{b_{i-2},b_i} -
\dawmax{a}{b_0, b_{2^k}}, \\
\dgamaxhead_{k-2} &= \sum_{\stackrel{i=1}{4|i}}^{2^k}
\frac{%
b_i - b_{i-4}
}{b_{2^k} - b_0} \dawmax{a}{b_{i-4},b_i} -
\dawmax{a}{b_0, b_{2^k}}.
 \end{split}
\end{equation}
Note that when $k=2$ the last equality reduces to
$\dgamaxhead_0 = 0$ which is consistent with our definition.
Subtracting $\dgamaxhead_{k-1}$ from $\dgamaxhead_k$ and simplifying,
\begin{equation}
 \begin{split}
  \dgamaxhead_k - \dgamaxhead_{k-1} &= \sum_{\stackrel{i=1}{2|i}}^{2^k}
\frac{b_i - b_{i-2}}{b_{2^k} - b_0} \dga{1}{b_{i-2},b_{i-1},b_i} \\
  &= \sum_{\stackrel{i=1}{2|i}}^{2^k} \frac{b_i - b_{i-2}}{b_{2^k} - b_0}
\dgamax{1}{b_{i-2},b_i},
 \end{split}
\end{equation}
where we have used the fact that since $b_i$ is geometric, $b_{i-1} = \sqrt{b_i
b_{i-2}}$. Similarly,
\begin{equation}
 \dgamaxhead_{k-1} - \dgamaxhead_{k-2} = \sum_{\stackrel{i=1}{4|i}}^{2^k}
\frac{b_i - b_{i-4}}{b_{2^k} - b_0} \dgamax{1}{b_{i-4},b_i}.
\end{equation}
Now,
\begin{equation}
 \begin{split}
  \dgamaxhead_k - \dgamaxhead_{k-1} &=  \sum_{\stackrel{i=1}{2|i}}^{2^k}
\frac{b_i - b_{i-2}}{b_{2^k} -
b_0} \dgamax{1}{b_{i-2},b_i} \\
  &= \sum_{\stackrel{i=1}{4|i}}^{2^k} \frac{b_{i-2} - b_{i-4}}{b_{2^k} - b_0}
\dgamax{1}{b_{i-4},b_{i-2}} + \frac{b_i - b_{i-2}}{b_{2^k} - b_0}
\dgamax{1}{b_{i-2},b_i} \\
&= \sum_{\stackrel{i=1}{4|i}}^{2^k} \frac{b_i - b_{i-4}}{b_{2^k} - b_0}
\left ( \frac{b_{i-2} - b_{i-4}}{b_i - b_{i-4}}\dgamax{1}{b_{i-4},b_{i-2}} +
\frac{b_i - b_{i-2}}{b_i - b_{i-4}} \dgamax{1}{b_{i-2},b_i} \right ) \\
  &\leq \sum_{\stackrel{i=1}{4|i}}^{2^k} \frac{b_i - b_{i-4}}{b_{2^k} - b_0}
\dgamax{1}{b_{i-4},b_i} = \dgamaxhead_{k-1} - \dgamaxhead_{k-2},
 \end{split}
\end{equation}
where we have used the fact from Lemma~\ref{lem:interval-concave} that
$\dgamaxhead_1$ is interval concave.
\end{proof}


\subsection{Proof of Theorem~\ref{thm:upper-bound-general}
} \label{proofsAW2}
\begin{proof}[Proof of Theorem~\ref{thm:upper-bound-general}]
 We prove this by induction. In fact we prove a stronger statement; we claim
that for all $k\geq 1$ and $b_0, b_{2^k}$ such that $\bmin\leq b_0 <
b_{2^k} \leq \bmax$,
\begin{equation}
 \dgamax{k}{b_0,b_{2^k}} \leq k \tD(b_0, b_{2^k}),
\end{equation}
which reduces to what we expect by substituting $b_0 = \bmin$ and $b_{2^k} =
\bmax$. Note that for $k=1$ this reduces to
\eqref{eq:thm:upper-bound-general-k=1} which is assumed to be true. Now assume
it is true for $k-1$. If $b_1,\dots,b_{2^k-1}$ are the divisions which
maximize $\dgamax{k}{b_0,b_{2^k}}$, we have:
\begin{equation}
 \begin{split}
  \dgamax{k}{b_0,b_{2^k}} &= \dga{k}{b_0,b_1,\dots,b_{2^k}} \\
  &=\sum_{i=1}^{2^k}
\frac{b_i-b_{i-1}}{b_{2^k}-b_0}
\dawmax{a}{b_{i-1},b_i} - \dawmax{a}{\bmin,\bmax}\\
&= \frac{b_{2^{k-1}} - b_0}{b_{2^k} - b_0} \dga{k-1}{b_0,\dots,b_{2^{k-1}}} +
\frac{b_{2^k} - b_{2^{k-1}}}{b_{2^k} - b_0}
\dga{k-1}{b_{2^{k-1}},\dots,b_{2^k}} \\
 &\quad + \frac{b_{2^{k-1}} - b_0}{b_{2^k} - b_0} \dawmax{a}{b_0,b_{2^{k-1}}}
+ \frac{b_{2^k} - b_{2^{k-1}}}{b_{2^k} - b_0} \dawmax{a}{b_{2^{k-1}}, b_{2^k}}\\
&\quad- \dawmax{a}{b_0,b_{2^k}} \\
&= \frac{b_{2^{k-1}} - b_0}{b_{2^k} - b_0} \dga{k-1}{b_0,\dots,b_{2^{k-1}}} +
\frac{b_{2^k} - b_{2^{k-1}}}{b_{2^k} - b_0} \dga{k-1}{b_{2^{k-1}},
\dots,b_{2^k}} \\
&\quad + \dga{1}{b_0,b_{2^{k-1}}, b_{2^k}}\\
&\leq \frac{b_{2^{k-1}} - b_0}{b_{2^k} - b_0} \dgamax{k-1}{b_0,b_{2^{k-1}}} +
\frac{b_{2^k} - b_{2^{k-1}}}{b_{2^k} - b_0} \dgamax{k-1}{b_{2^{k-1}}, b_{2^k}}
\\
&\quad + \dgamax{1}{b_0,b_{2^k}}.
 \end{split}
\end{equation}
Now by using the induction hypothesis
\begin{equation}
 \begin{split}
  \dgamax{k}{b_0,b_{2^k}} &\leq (k-1) \left ( \frac{b_{2^{k-1}} - b_0}{b_{2^k} -
b_0} \tD(b_0,b_{2^{k-1}}) + \frac{b_{2^k} - b_{2^{k-1}}}{b_{2^k} - b_0}
\tD(b_{2^{k-1}},b_{2^k}) \right ) \\
&\quad + \tD(b_0,b_{2^k}).
 \end{split}
\end{equation}
Since $\tD$ is interval concave, we have
$
 \dgamax{k}{b_0,b_{2^k}} \leq k \tD(b_0,b_{2^k}). \qedhere
$
\end{proof}

\subsection{Proof of Theorem  \ref{thm:general-bound-differentiableNew}}\label{secProofThem5AW2}
Using Theorem~\ref{thm:upper-bound-general}, it suffices to prove the following theorem:
\begin{thm}
\label{thm:general-bound-differentiable}
 Assume that for $\bmin\leq x < y \leq \bmax$, $\dgamax{1}{x,y}$ is
differentiable with respect to $y$. Then $\tD(x,y)$ defined by 
\begin{equation}
 \tD(x,y) = \max_{x\leq \gamma_1 \leq \gamma_2 \leq y} \Lambda(\gamma_1,\gamma_2),
\end{equation}
where $ \Lambda(x,y)=\frac{\partial}{\partial
y} \Gamma(x,y)$ defined as follows
\begin{equation}
 \Gamma(x,y) = \begin{cases}
                (y-x) \dgamax{1}{x,y} & y>x, \\
                0 & y=x,
               \end{cases}
\end{equation}
is an interval concave upper bound for $\dgamaxhead_1$.
\end{thm}

\begin{proof}[Proof of Theorem~\ref{thm:general-bound-differentiable}]
 We know from Proposition~\ref{prop:delta-positive} that $\dgamaxhead_1$ is
bounded, therefore $\Gamma$ is continuous at $x=y$. Furthermore, since
$\dgamaxhead_1$ is
differentiable with respect to $y$, for a fixed $x$, it is continuous with
respect to $y$. Therefore for $b>a$, $\Gamma(a,y)$ is continuous when $y$
changes in $[a,b]$ and differentiable in $(a,b)$ as $\dgamaxhead_1$ is. Using
mean value theorem, there exists a $y^\ast \in (a,b)$ where
\begin{equation}
 \Gamma(a,b) = (b-a) \frac{\partial}{\partial y} \Gamma(a,y^\ast).
\end{equation}
Now by the definition of $\tD$,
\begin{equation}
 (b-a) \dgamax{1}{a,b} = \Gamma(a,b) = (b-a) \frac{\partial}{\partial y}
\Gamma(a,y^\ast) \leq (b-a) \tD(a,b),
\end{equation}
hence for any $a<b$, $\dgamax{1}{a,b} \leq \tD(a,b)$ and therefore $\tD$ is an
upper bound on $\dgamaxhead_1$. 

It only remains to prove that it is interval
concave. Note that if $x\leq t\leq y$, then
\begin{equation}
 \tD(x,t) = \max_{x\leq \gamma_1 \leq \gamma_2 \leq t}\frac{\partial}{\partial
y} \Gamma(\gamma_1,\gamma_2) \leq \max_{x\leq \gamma_1 \leq \gamma_2 \leq
y}\frac{\partial}{\partial y}
\Gamma(\gamma_1,\gamma_2) = \tD(x,y),
\end{equation}
likewise,
\begin{equation}
 \tD(t,y) = \max_{t\leq \gamma_1 \leq \gamma_2 \leq y}\frac{\partial}{\partial
y} \Gamma(\gamma_1,\gamma_2) \leq \max_{x\leq \gamma_1 \leq \gamma_2 \leq
y}\frac{\partial}{\partial y}
\Gamma(\gamma_1,\gamma_2) = \tD(x,y),
\end{equation}
therefore,
\begin{equation}
\frac{t-x}{y-x} \tD(x,t) + \frac{y-t}{y-x} \tD(t,y) \leq \tD(x,y),
\end{equation}
which shows that $\tD$ is interval concave.
\end{proof}






\end{document}